\documentclass[lettersize,journal]{IEEEtran}
\usepackage[caption=false,font=normalsize,labelfont=sf,textfont=sf]{subfig}
\usepackage{float}
\usepackage{textcomp}
\usepackage{cite}
\usepackage{amsmath,amssymb,amsfonts}
\usepackage{algorithmic}
\usepackage{amsfonts}
\usepackage{graphicx}
\usepackage{textcomp}
\usepackage{xcolor}
\usepackage{amssymb}
\usepackage{epstopdf}
\usepackage{enumitem}
\usepackage{pifont}
\usepackage{algorithmic}
\usepackage{algorithm}
\usepackage{bm}
\usepackage{lipsum}
\usepackage{cuted}
\usepackage{booktabs}
\usepackage[mathscr]{eucal}
\usepackage{stfloats}
\usepackage{url}
\usepackage{upgreek}
\usepackage{enumitem}
\usepackage{amsthm}
\usepackage{stfloats}
\usepackage{arydshln}
\usepackage{tikz}
\newlist{dblparenenum}{enumerate}{1}
\setlist[dblparenenum,1]{label={{(}\arabic*{{)}}}}
\newtheoremstyle{mystyle}
  {0pt}
  {0pt}
  {\normalfont}
  {\parindent}
  {\itshape}
  {:}
  { }
  {\thmname{#1}\thmnumber{ #2}\thmnote{ (#3)}}
\theoremstyle{mystyle}
\newtheorem{Definition}{Definition}
\newtheorem{Corollary}{Corollary}
\newtheorem{remark}{Remark}

\newtheorem{proposition}{Proposition}
\newenvironment{Proof}{{\noindent \textit {Proof:}}}{\hfill $\square$ \par}
\usepackage{hyperref}
\hypersetup{
    colorlinks=true,
    linkcolor=blue,
    citecolor=blue,
    filecolor=magenta,
    urlcolor=blue
}

\def\BibTeX{{\rm B\kern-.05em{\sc i\kern-.025em b}\kern-.08em
    T\kern-.1667em\lower.7ex\hbox{E}\kern-.125emX}}
    \begin{document}

\title{Fundamental Limits of Pulse Based UWB ISAC Systems: A Parameter Estimation Perspective}

\author{Fan Liu, Tingting Zhang, {\it Member, IEEE}, Zenan Zhang, Bin Cao,  {\it Member, IEEE}, \\Yuan Shen, {\it Senior Member, IEEE}, and Qinyu Zhang, {\it Senior Member, IEEE}
\thanks{

Fan Liu, Tingting Zhang, Zenan Zhang, Bin Cao and Qinyu Zhang are with the School of Electronics and Information Engineering,
Harbin Institute of Technology, Shenzhen 518055, China. Tingting Zhang and Qinyu Zhang are also with Pengcheng Laboratory (PCL), Shenzhen 518055, China.
Fan Liu and Tingting Zhang are also with Guangdong Provincial Key Laboratory of Space-Aerial Networking and Intelligent Sensing, Shenzhen 518055, China.
 (e-mail: liufan0613@stu.hit.edu.cn; zhangtt@hit.edu.cn; h13zzn@126.com; caobin@hit.edu.cn; zqy@hit.edu.cn).

Yuan Shen is with the Department of Electronic Engineering, Tsinghua University, Beijing 100084, China, and also with the Beijing National Research Center for Information Science and Technology, Beijing 100084, China (e-mail: shenyuan$\_$ee@tsinghua.edu.cn).
}
}

\maketitle

\begin{abstract}

This paper investigates a bi-static integrated sensing and communication (ISAC) system for multi-target scenarios using impulse radio ultra-wideband (IR-UWB) signals, which offer fine temporal resolution, low power consumption, and strong resistance to multipath interference. Two typical modulation schemes, namely pulse position modulation (PPM) and binary phase shift keying (BPSK), are considered for communication over the delay and phase domains, respectively. Accordingly, we introduce a pilot-based decoupling approach that relies on known time-delays, as well as a differential decoupling strategy that uses a known starting symbol position ({no pilot required}). A key contribution of this work is the development of a unified analytical framework based on the Fisher Information Matrix (FIM), which characterizes the fundamental coupling between communication and sensing in both delay and Doppler domains. This coupling is examined through the singularity structure of the FIM, providing new theoretical insights into the joint performance limits of UWB-ISAC systems.
Finally, we assess the sensing and communication performance under various modulation schemes under the constraints of current UWB standards. This assessment utilizes the Cramer-Rao Lower Bound (CRLB) for sensing and the data transmission rate for communication, offering theoretical
insights into choosing suitable data signal processing methods in real-world applications.

\end{abstract}

\begin{IEEEkeywords}
ISAC, UWB, CRLB, delay-Doppler estimation, PPM \& BPSK
\end{IEEEkeywords}

\section{Introduction}

\IEEEPARstart{T}{he} increasing demands for spectrum, coupled with the limited bandwidth availability, have motivated the advancement of co-existing communication and radar system architectures\cite{LiuAn,FanLiu6G,zhangying}. Time- and frequency-division, as well as spatial beamforming, are common techniques for enabling communication and radar system coexistence. However, the integration of sensing and communication (ISAC) using identical waveforms for both functions provides more efficient spectrum use, reducing competition and optimizing energy and resource allocation without extra hardware. Additionally, this approach enhances adaptability to dynamic
environments through shared signal processing\cite{Waveform6G,JointWaveform}.

ISAC systems can be primarily categorized into continuous wave (CW)-based and pulse-based systems, respectively, depending on the type of signal waveforms used\cite{zhanghaojianotfs,tianxuanxuan,liufanMILCOM}. CW-based ISAC systems offer the benefit of high spectral efficiency for communication, but they are limited by low-resolution sensing capabilities. Additionally, self-interference poses a significant challenge for CW-based ISAC systems\cite{FMCW}.
Pulse-based systems have been extensively employed in target detection, localization, imaging and communication due to their benefits of low power consumption, high resolution, and strong resistance to narrowband interference\cite{PulseISAC,PulseISAC3}. Among popular pulse-based systems, it is worth noting that impulse radio ultrawide bandwidth (IR-UWB) signals have received considerable attention from both industry and academia in recent years\cite{UWBsurvey,UWBSY,UWBChannel,UWBshen}. Initially designed for radar sensing, IR-UWB signals were first explored for communication applications by Win and Scholtz\cite{how_it_works,MulAccessWin}.

Establishing a suitable theoretical framework for parameter estimation is essential for evaluating ISAC performance. The Cramer-Rao lower bound (CRLB) is pivotal in assessing the accuracy of local estimation tasks, as it defines the minimum variance that any unbiased estimator can
achieve\cite{Shen,HuangTY,zhang2019Tcom,CRLBISAC}. Numerous ISAC-related studies have used the CRLB as a benchmark for optimizing system parameters, including waveform parameter optimization, resource allocation, assessing the impact of signal processing methods on system performance,
etc\cite{CRLBXvJie,CRLBYuanWLiu,SymbioticISAC,zhang2016power}. However, there are still limited investigations on {\it how specific data modulation mechanisms affect the sensing performance}.

For pulse-based ISAC systems, a primary task is to measure the Doppler shift and time delay of moving targets, respectively. Consequently, it will lead to explicit {\it coupling} with data symbols using various modulation schemes\cite{UWBTR}.\footnote{In UWB cases, the target delay will couple with the pulse modulation position, while the Doppler shift will couple with the phase based modulation schemes, such as BPSK, etc.} Our previous work showed that data demodulation can be achieved by either adding pilot signals or using differential pulse positions
\cite{LiuFanTWC,ZhangZZICC,GCLiu}. This enables further realization of data demodulation and target sensing.

Notably, the latest proposals for the IEEE 802.15.4ab standard seriously consider the integration of sensing capabilities in addition to the communication and ranging functions. This marks a significant shift from the earlier UWB standards, such as IEEE 802.15.4a and 802.15.4z, which did not include sensing capabilities\cite{UWB4z,OverviewUWBStands}. Meanwhile, the UWB ecosystem has gradually been established, including applications in widely spread smartphones, smart homes, etc.

Since UWB technology primarily focus on short-range high-precision localization, sensing and low-rate communication applications, the modulation scheme selection involves critical trade-offs. Although pulse position modulation (PPM) and binary phase-shift keying (BPSK) exhibit lower spectral efficiency than quadrature phase-shift keying (QPSK) or orthogonal frequency-division multiplexing (OFDM), these modulation schemes offer the following key advantages for UWB systems: (1) lower implementation complexity, (2) enhanced resilience to multipath interference, (3) superior timing resolution for precision ranging~\cite{Saw}. These characteristics explain why PPM and BPSK remain the most widely adopted modulation schemes in UWB-based positioning systems and internet of things applications, as specified in major UWB standards including IEEE 802.15.4ab and 802.15.4z.

In this paper, we aim to investigate the impact of the coupling relationship between communication and sensing signals under the PPM and BPSK modulation schemes, in terms of {\it time-delay and phase}, on the sensing performance.
The main contributions of this work are summarized as follows:

\begin{dblparenenum}

\item We propose a bi-static sensing system with communication capabilities using integrated UWB signals. While employing PPM and BPSK modulation schemes for data transmission, this system also enables simultaneous dynamic target sensing via {\it delay} and {\it Doppler} estimation from time and phase measurements.

\item The coupling between communication and sensing in both delay and Doppler domains is jointly characterized within a unified framework based on the Fisher Information Matrix (FIM), where the interactions are revealed and analyzed through its singularity properties. This formulation offers a theoretical basis for the design and optimization of integrated UWB communication and sensing systems.

\item We propose an innovative decoupling strategy for the PPM-based UWB-ISAC system based on the position of the starting symbol (called {\it differential} decoupling, {\it no pilot required}). In comparison with the time-delay decoupling strategy (called {\it pilot-based} decoupling), analyses of sensing performance under various modulation \& demodulation schemes are presented.

\end{dblparenenum}

For clarity, the basic framework of this paper is shown in Fig.~\ref{BasicFramework}.
\begin{figure}[h]
\vspace{-8pt}
	\centering
	{\includegraphics[width = 0.7\columnwidth]{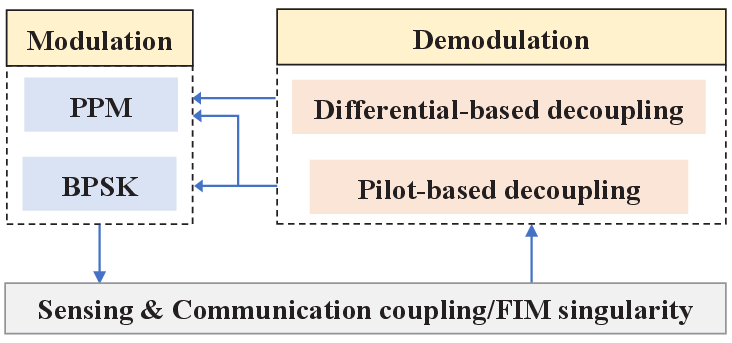}
\caption {Basic framework of this paper.}  \label{BasicFramework}}
  \vspace{-5pt}
\end{figure}

In this piece of work, we select pulse based UWB waveforms for the coupling analysis. Since data demodulation and target sensing are essentially parameter estimation problems, this proposed CRLB based framework can be extended to various ISAC waveforms, where the data symbols conveyed in the delay, phase and amplitude domains will also be coupled with the target parameters\cite{zhanghaojianotfs,OFDM}.

{\it Notations:} Symbol $*$ denotes the linear convolution; $[ \cdot ]_{m \times n}$ denotes an $m \times n$ matrix; $[\cdot]^{\rm T}$, $[\cdot]^{\rm H}$,
and $[\cdot ]^{\rm -1}$ denote the transpose, the conjugate transpose and the inverse of a matrix, respectively; tr $\{ \cdot \}$  denotes the trace of a square matrix.

\section{Preliminary}

\subsection{Channel Model for Bistatic ISAC Scenarios}  \label{Sec_channel}

In Fig.~\ref{FigISAC_scenario}, we depict a typical bi-static ISAC scenario, where the transmitter (Tx) sends {\it modulated} UWB pulses to the receiver (Rx). Within the designated area, multiple targets are present, capable of reflecting the transmitted signals. The Rx receives the resulting multipath
components and performs data demodulation and target sensing.
\begin{figure}[h]
\vspace{-5pt}
	\centering
	{\includegraphics[width = 0.7\columnwidth]{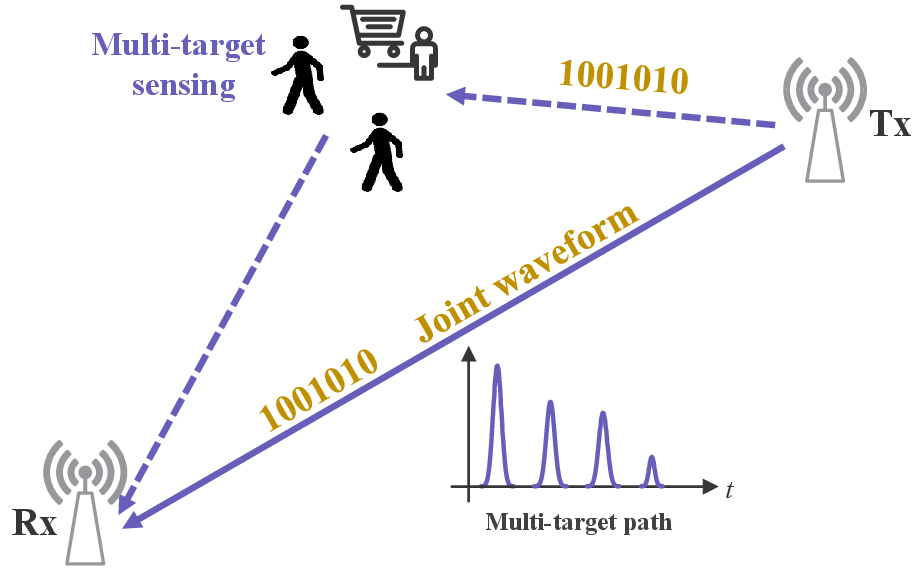}
\caption {Waveform integration sensing and communication system.} \label{FigISAC_scenario}}
\vspace{-5pt}
\end{figure}

As the most popular UWB channel models, the IEEE 802.15.4a model was proposed long time ago based on the S-V channel model, with extensive empirical data fitting\cite{MolisChannel}.
The corresponding channel impulse response can be expressed as follows:
\vspace{-4pt}
\begin{equation}  \label{SVchannel}
h_{\rm 4a}\left( t \right) = \sum\limits_{l = 1}^L {\sum\limits_{k = 1}^K {{\alpha _{k,l}}\delta \left( {t - {\tau_l} - {T_{k,l}}} \right)} } {e^{j{\phi _{k,l}}}},
\end{equation}
where ${\alpha_{k,l}}$ and ${\phi_{k,l}}$ represent the amplitude complex gain and the phase of the $k$-th multipath component in the $l$-th cluster, ${\tau_l}$ denotes the arrival time of the multipath in the $l$-th cluster, and ${T_{k,l}}$ represents the arrival time of the $k$-th multipath component within the $l$-th cluster relative to ${\tau_l}$.

Notably, the 4a channel model was primarily intended for {\it communications} and ranging, usually in {\it static} environments. It is usually characterized by its intense dispersion in the delay domain, without considering any Doppler effects. However, for the target delay and Doppler sensing, the channel impulse response of interest is expected to be {\it sparse} \cite{RDCFAR}. This sparsity arises from the fact that received signals exhibiting correlation - typically within a single cluster - are assumed to originate from the same target. Therefore, in the ISAC scenario discussed in this paper, {\it we are only interested in the arrival times of clusters $\tau_l$}\cite{molisch2011}.
Furthermore, since we address the sensing of {\it dynamic} targets in this paper, the standard 4a channel model will be extended to incorporate the estimation of Doppler shifts in the received signals.

Despite the dense multipath components of UWB propagations, the number of clusters is limited. In typical indoor environments described in the IEEE 802.15.4a standard, the number of clusters $L$ generally ranges from 3 to 5\cite{MolisChannel}. The cluster arrival times can be estimated using clustering algorithms, such as affinity propagation\cite{cluster1}. The extracted components retain the primary energy of the original signals, which facilitates communication data demodulation. Therefore, the proposed ISAC channel model can still be expressed as a series of Dirac functions\cite{ZhaoNaTcom}, each characterized by distinct amplitudes, delays, and phases, such as\footnote{This applies only when the absolute bandwidth is ultrawideband (greater than 500 MHz in the UWB system), but the relative bandwidth is small (less than approximately $20\%$).}
\vspace{-3pt}
\begin{equation}    \label{EqChannel}
\begin{aligned}
h(t) & = \sum_{l=1}^L\alpha^{\kappa}_{l}\delta(t-\tau^{\kappa}_{l}){e^{-{\rm{j}}2\pi {f_{\rm{c}}}{\tau^{\kappa}_l}}}     \\
     & \approx \sum_{l=1}^L {\alpha^{\kappa}_l} \delta(t - {\tau^{\kappa}_l}){e^{{\rm{j}}{2\pi {f_{{\text d l}}}({\kappa}T_{\text f})}}}
     {e^{-{\rm{j}}2\pi {{f_{\rm{c}}{\tau _{l0}}}}}},
\end{aligned}
\end{equation}
where $L$ represents the number of channel clusters (equivalently, the target number within the sensing area), $f_{\rm c}$ denotes the carrier frequency,
$t\in \left[ { {\kappa} {T_{\text f}},{(\kappa+1)}{T_{\text f}}} \right)$, $\kappa = 0, \cdots, N_{\text f}-1$, $N_{\text f}$ is the total number of signal transmission periods,{\footnote{We define the sensing period as the product of ${N_{\text f}}{T_{\text f}}$, representing the duration over which a single sensing process is completed.} $T_{\text f}$ is the pulse repetition interval (PRI), ${\alpha^{\kappa}_l}$ and ${\tau^{\kappa}_l}$ are the attenuation and signal propagation delay from the Tx to the Rx of the $l$-th path in the $\kappa$-th PRI, ${\tau^{\kappa}_l} = {\tau _{l0}} - {{{v_{\text{d} l}}} \over c}t$, ${\tau _{l0}}$ is the signal propagation delay from Tx to Rx of the $l$-th path in the first PRI, ${v_{\text{d}l}}$ is the radial velocity of the $l$-th target with respect to the Rx, $c$ is the speed of electromagnetic waves in the air, ${f_{\text {d}l}} = {f_{\rm{c}}}{{{v_{\text d l}}} \over c}$ is the Doppler shift arise from the target,
the approximation condition holds due to the Doppler shift being negligibly small within one PRI.

Fig.~\ref{Channel} presents the results of indoor channel measurements using commercial UWB modules, forming a bistatic scenario as Fig.~\ref{FigISAC_scenario}. There exists a direct path and two human targets in the sensing area, resulting in three clusters in the received waveform. The number of $L$ can be estimated using the clustering algorithm in \cite{cluster1}.\footnote{As this paper is primarily concerned with the number of $L$ in the derivation of the performance lower bound, the clustering algorithm will not be elaborated here.}
\begin{figure}
  \centering
  \includegraphics[width = 1\columnwidth]{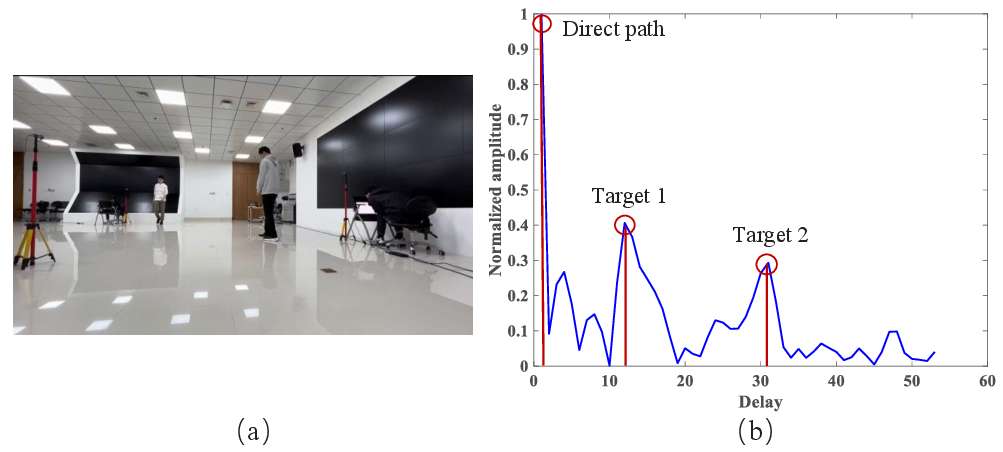}
  \vspace{-18pt}
  \caption{UWB environment measurement. The red lines in Fig.~(b) correspond to the channel impulse response that have extracted from three clusters, i.e., $L=3$.}  \label{Channel}
  \vspace{-10pt}
\end{figure}

\subsection{Signal Model}
\subsubsection{Sensing signal model}    \label{Sec sening}

Generally, the periodic UWB based sensing signals transmitted from the Tx during the $\kappa$-th pulse repetition interval (PRI) after up-conversion can be written as
\vspace{-2pt}
\begin{equation}   \label{Eq st}
s(t) = \sqrt {{E_{{\rm{tb}}}}} w(t-{\kappa} {T_{\text f}}){e^{{\rm{j}}2\pi {f_{\rm{c}}}t}},
\end{equation}
where ${{E_{{\rm{tb}}}}}$ is the total transmitted energy of one pulse, $w(t)$ indicates the energy normalized Gaussian pulse, i.e.,
\begin{equation}\label{GaussianPulse}
w(t) = \left( \frac{1}{\alpha \sqrt{\pi}} \right)^{1/2}\exp\left(-\frac{t^2}{2\alpha^2}\right),
\end{equation}
where $\alpha$ represents the temporal spreading factor of the Gaussian pulse.

The received UWB sensing signals in the $\kappa$-th PRI after down-conversion can be expressed as
\vspace{-2pt}
\begin{equation}      \label{EqRXSensing}
\begin{aligned}
r^{\kappa}_{\rm {s}}(t)& = s(t) *  h(t) + z(t)     \\
& \approx \sum_{l=1}^L {\tilde \alpha^{\kappa}_l} w(t - {\tau^{\kappa}_l} - {\kappa}{T_{\text f}}){e^{{\rm{j}}{\phi_l^{\kappa}}}} + z(t),
\end{aligned}
\end{equation}
where $\tilde \alpha^{\kappa}_{l} = {\alpha^{\kappa}_l} \sqrt{E_{\rm{tb}}}$,
$\phi_l^{\kappa} = 2\pi \big({f_{{\text d l}}}({\kappa}T_{\text f}) {-{f_{\rm{c}}}{\tau _{l0}}} \big)$,
and the approximation condition holds due to the Doppler shift being negligibly small within one PRI, the corresponding observation vector about phase a single sensing process are
$\boldsymbol \phi = \big[\boldsymbol \phi^{0}, ..., \boldsymbol \phi^{\kappa}, ..., \boldsymbol \phi^{N_{\text f}-1} \big]^{\rm T}$ and
$\boldsymbol{\phi}^{\kappa}=[\phi^{\kappa}_{1}, ..., \phi^{\kappa}_{L}]^\text{T}$,
the observation vector about time-delay can be expressed as
$\boldsymbol \tau = \big[\boldsymbol \tau^{0}, ..., \boldsymbol \tau^{\kappa}, ..., \boldsymbol \tau^{N_{\text f}-1}\big]^{\rm T}$  and  $\boldsymbol \tau^{\kappa} = \big[\tau_{1}^{\kappa}, ...,\tau_{L}^{\kappa} \big]^{\rm T}$,
the observation vector about the amplitude of the received signal can be expressed as
$\tilde {\boldsymbol{\alpha}}=[\tilde{\boldsymbol \alpha}^{0}, ..., \tilde {\boldsymbol \alpha}^{\kappa}, ...,\tilde{\boldsymbol \alpha}^{N_{\text f}-1}]^\text{T}$
and $\tilde {\boldsymbol{\alpha}}^{\kappa}=[\tilde  \alpha^{\kappa}_{1}, ..., \tilde \alpha^{\kappa}_{L}]^\text{T}$,
$z(t)$ represents the additive white Gaussian noise with variance $\sigma^2$.

\subsubsection{ISAC signal model with PPM modulation}

In the proposed ISAC system described in this paper, the periodic UWB signals based PPM scheme transmitted from the Tx in the $\kappa$-th PRI after up-conversion can be expressed as
\vspace{-2pt}
\begin{equation}   \label{Eq stPPM}
s_{\rm ppm}(t) = \sqrt {{E_{{\rm{tb}}}}} w(t-{{\xi _{{\rm{ppm}}}}{q_{\kappa}}}-{\kappa} {T_{\text f}}){e^{{\rm{j}}2\pi {f_{\rm{c}}}t}},
\end{equation}
where $q_\kappa$ is the ${\kappa}$-th deterministic unknown signal to be transmitted, ${\xi _{{\rm{ppm}}}}$ is the time-shift of PPM when $q_\kappa = 1$, the pulse position remains unchanged when $q_\kappa = 0$.

Then the received UWB ISAC signals based on PPM within the $\kappa$-th PRI after frequency down-conversion can be written as
\begin{equation}      \label{EqRXPPM}
\begin{aligned}
& r^{\kappa}_{\rm {ppm}}(t)  = s_{\rm ppm}(t) * h(t) + z(t)               \\
& \approx \!\! \sum_{l=1}^L \!  {\tilde \alpha^{\kappa}_l} w(t\! -\! {\underbrace {\tau _l^\kappa \! - \! \Delta \tau _{\rm q}^\kappa}_{{\rm{coupled}}}} \! -\!
\kappa T_{\text f}){e^{{\rm{j}}2\pi \big(\! {f_{{\text d l}}}({\kappa}T_{\text f}) {-{f_{\rm{c}}}{\tau _{l0}}} \! \big)}} \!\! + \!\! z(t)             \\
& = \!\! \sum_{l=1}^L {\tilde \alpha^{\kappa}_l} w(t-\tau^{\kappa}_{l,{\rm ppm}} - \kappa T_{\text f}){e^{{\rm{j}}\phi_l^{\kappa}}} + z(t),
\end{aligned}
\end{equation}
where ${\Delta \tau^{\kappa}_{\rm q}} = {\xi_{{\rm{ppm}}}} {q_{\kappa}}$ represents the time delay caused by data modulation,
and $\tau^{\kappa}_{l, {\rm ppm}} = \tau _l^\kappa  + \Delta \tau _{\rm q}^\kappa$ is the total time delay observed.\footnote{We set $\tau^{\kappa}_{l, {\rm ppm}}$ as a constant but unknown parameter to be estimated during one deterministic transmission, which can be categorized as a {\it non-Bayesian}
estimation problem.}
The detailed derivation of \eqref{EqRXPPM} is consistent with equation \eqref{EqRXSensing} and will not be elaborated here.
It can be observed that {\bf the signal transmission delay $\tau _l^\kappa$ and the modulation data $\Delta \tau _{\rm q}^\kappa$ are coupled in the time domain.}
The corresponding observation vector can be expressed as
$\boldsymbol \tau_{\rm ppm} = \big[\boldsymbol \tau_{\rm ppm}^{0}, ..., \boldsymbol \tau_{\rm ppm}^{\kappa}, ..., \boldsymbol \tau_{\rm ppm}^{N_{\text f}-1}\big]^{\rm T}$  and  $\boldsymbol \tau_{\rm ppm}^{\kappa} = \big[\tau_{1,\rm ppm}^{\kappa}, ...,\tau_{L,\rm ppm}^{\kappa} \big]^{\rm T}$.

\subsubsection{ISAC signal model with BPSK modulation}

Similarly to \eqref{Eq stPPM}, the received UWB ISAC signals with BPSK modulation scheme after frequency down-conversion can be written as
\begin{equation}   \label{Eq stBPSK}
s_{\rm bpsk}(t) = \sqrt {{E_{{\rm{tb}}}}} w(t-{\kappa} {T_{\text f}}){e^{{\rm{j}}(2\pi {f_{\rm{c}}}t-\xi_{\rm bpsk} q_{\kappa})}},
\end{equation}
where ${\xi _{{\rm{bpsk}}}}$ is the phase-shift introduced by BPSK modulation when $q_\kappa = 1$, the transmitted signal phase remains unchanged when $q_\kappa = 0$.

The received UWB ISAC signals with BPSK modulation scheme within the $\kappa$-th PRI after frequency down-conversion can be expressed as
\begin{equation}         \label{EqRXBPSK}
\begin{aligned}
r^{\kappa}_{\rm bpsk}(t) & = s_{\rm bpsk}(t) * h(t) + z(t)          \\
& = \sum_{l=1}^L  {\tilde \alpha^{\kappa}_l}w(t -{\tau^{\kappa}_l}-\kappa{T_{\text f}}){e^{-{\rm{j}} 2 \pi f_{\rm c} {\tau _{l0}}}}
{e^{{\rm{j}}{\phi^{\kappa}_{l,{\rm bp}}}}} + z(t),
\end{aligned}
\end{equation}
where $\phi^{\kappa}_{l,{\rm bp}} = {2\pi \kappa {T_{\text f}}\Big( {\underbrace {{f_{{\rm{d}}l}}+ \varphi _{{\rm{bpsk}}}^\kappa }_{{\rm{coupled}}}} \Big)}$,
$\varphi _{{\rm{bpsk}}}^\kappa  = {{{\xi _{{\rm{bpsk}}}}{q_\kappa }} \over {2\pi \kappa {T_{\text f}}}}$  denotes the phase component related to the data, which lacks tangible physical significance. The observed phase of the $l$-th path can be written as
$\phi^{\kappa}_{l,{\rm bpsk}} = {\phi^{\kappa}_{l,{\rm bp}} -2 \pi f_{\rm c} {\tau _{l0}}}$,
and the corresponding observation vector is written as
$\boldsymbol \phi_{\rm bpsk} = \big[\boldsymbol \phi_{\rm bpsk}^{0}, ..., \boldsymbol \phi_{\rm bpsk}^{\kappa}, ..., \boldsymbol \phi_{\rm bpsk}^{N_{\text f}-1} \big]^{\rm T}$  and  $\boldsymbol \phi_{\rm bpsk}^{\kappa} = \big[\phi_{1,\rm bpsk}^{\kappa}, ...,\phi_{L,\rm bpsk}^{\kappa} \big]^{\rm T}$.
We can see that {\bf the Doppler shift ${f_{{\rm{d}}l}}$ and the phase component $\varphi _{{\rm{bpsk}}}^\kappa$ related to the modulation data are coupled in the phase domain.}

\subsection{Signal Re-representation}

In this subsection, we provide discretized representations of the received signals in Sec.~\ref{Sec sening}. This prepares the groundwork for deriving the theoretical model in this paper.

In the parametric representation of \eqref{EqRXSensing}, the signals are typically expressed as time-delayed versions of $w(t)$ with the corresponding unknown time-delays $\tau_l$, i.e.,
\vspace{-3pt}
\begin{equation}
\begin{aligned}
  {r^\kappa_{\rm s} }(t) &  \approx \! \sum\limits_{l = 1}^L \! {{{\tilde \alpha^{\kappa} }_l}} w({\tau^{\kappa}_l},t \! - \! \kappa {T_{\text f}}){e^{{\rm{j}} \phi^\kappa_l}}\! + \! z(t),         t \! - \! \kappa {T_{\text f}} \in \left[ {0, {T_{\text f}}} \right),
\end{aligned}
\end{equation}
where $w({\tau^{\kappa}_l},t) = w(t - {\tau^{\kappa}_l})$.

To facilitate a more intuitive understanding of the signal processing procedure, the signal sampling process can be described as a matrix form. The vector related to time-delay is defined as
\begin{equation}  \label{EqWtN}
\begin{aligned}
{\boldsymbol w}\left( {\tau _l^\kappa } \right){\rm{ = }} & \Big[ {w\left( {\tau _l^\kappa ,0} \right), w\left( {\tau _l^\kappa ,{T_{\rm{s}}}} \right),
\cdots, w\left( {\tau _l^\kappa ,\left( {{N_{\rm{s}}} - 1} \right){T_{\rm{s}}}} \right)} \Big]^{\text T}   \\
& \in {{\Bbb {R}}^{{N_{\rm{s}}} \times 1}},
\end{aligned}
\end{equation}
where $N_{\rm s}$ is the sampling points in one PRI.

Additionally, the vector associated with the phase across multiple PRIs is defined as
\begin{equation}   \label{EqDopplerN}
\boldsymbol d\left( {{\phi_l}} \right) = {\left[ {{e^{{\rm{j}}{\phi^0_l}}}, ..., {e^{{\rm{j}} \phi^{{N_{\text f}} - 1}_l}}} \right]^{\rm{T}}} \in {{\Bbb {C}}^{{N_{\text{f}}} \times 1}}.
\end{equation}
Subsequently, the vector of received signals without noise (mean of the received signals over multiple PRIs) can be expressed as \eqref{EqRxMean},
\begin{figure*} [b]
\vspace{-2pt}
\hrulefill
\vspace{-2pt}
\begin{equation}    \label{EqRxMean}
\boldsymbol \mu_{\rm s}  = {\left[ {\sum\limits_{l = 1}^L {{\tilde \alpha^{0} _l} {\boldsymbol d} {{\left( {\phi_l} \right)}_0}
{\boldsymbol w} \left( {\tau _l^0} \right)} ,  \cdots , \sum\limits_{l = 1}^L {{\tilde \alpha^{N_{\text f}-1} _l}{\boldsymbol d}{(\phi_l)_{{N_{\text f}} - 1}} {\boldsymbol w} \left( {\tau _l^{{N_{\text f}} - 1}} \right)} } \! \right]^{\text{T}}}   \\
\in {{\Bbb{C}}^{{N_{\text f}}{N_{\text{s}}} \times 1}},
\end{equation}
\vspace{-15pt}
\end{figure*}
where ${\boldsymbol d}{{\left({\phi_l} \right)}_0 = {e^{{\rm{j}}{\phi^0_l}}}}$.

Similarly, the received signals in both PPM and BPSK systems can be derived using the same approach. We express them as \eqref{EqRxMeanPPM} and \eqref{EqRxMeanBPSK},
\begin{figure*} [b]
\begin{equation}    \label{EqRxMeanPPM}
\boldsymbol \mu_{\rm ppm}  = {\left[ {\sum\limits_{l = 1}^L {{\tilde \alpha^{\kappa} _l} {\boldsymbol d} {{\left( {\phi_l} \right)}_0}
{\boldsymbol w} \left({\tau_{l,{\rm ppm}}^0} \right)} ,  \cdots , \sum\limits_{l = 1}^L {{\tilde \alpha^{N_{\text f}-1} _l}{\boldsymbol d}{(\phi_l)_{{N_{\text f}} - 1}} {\boldsymbol w} \left( {\tau _{l,{\rm ppm}}^{{N_{\text f}} - 1}} \right)} } \right]^{\text{T}}} \in {{\Bbb{C}}^{{N_{\text f}}{N_{\text{s}}} \times 1}},
\end{equation}
\begin{equation}   \label{EqRxMeanBPSK}
\boldsymbol \mu_{\rm bpsk}  = {\left[
{\sum\limits_{l = 1}^L {{\tilde \alpha^{\kappa} _l} {\boldsymbol d} {{\left( {\phi_{l,{\rm bpsk}}} \right)}_0} {\boldsymbol w} \left( {\tau _l^0} \right)} , \cdots,
\sum\limits_{l = 1}^L {{\tilde \alpha^{N_{\text f}-1} _l}{\boldsymbol d}{(\phi_{l,{\rm bpsk}})_{{N_{\text f}} - 1}} {\boldsymbol w} \left( {\tau _l^{{N_{\text f}} - 1}} \right)} }\right]^{\text{T}}} \in {{\Bbb{C}}^{{N_{\text f}}{N_{\text{s}}} \times 1}},
\end{equation}
\end{figure*}
where ${\boldsymbol w} \big({\tau_{l,{\rm ppm}}^0} \big)$ is the vector associated with the time-delay in the PPM ISAC system, similar to $\eqref{EqWtN}$,
and $\boldsymbol d\big( {{\phi_{l,{\rm bpsk}}}} \big)_0 = {e^{{\rm{j}}{\phi^0_{l,{\rm bpsk}}}}}$.

Before proceeding, we declare the received SNR for the $l$-th target as \eqref{EqSNR}, which will be utilized in deriving the CRLB.
\vspace{-3pt}
\begin{equation}       \label{EqSNR}
{SNR}^l = {{{{\left| {\tilde \alpha _l} \right|}^2} {\int_0^{T_{\text f}} \Big(w\big(t-{\tau _l^0}\big) \Big)^2} dt} / T_{\text f} \over {\sigma^2}}  \\
\end{equation}

Similarly, the effective bandwidth \cite{Shen} of the transmitted pulse is expressed as
\begin{equation}       \label{EqBandwidthFre}
B =\Bigg({{\int_{-\infty}^{\infty} {f^2}|{S{(f)}}|^2 } df  \over {\int_{-\infty}^{\infty} { |{S{(f)}}|^2 }} df}\Bigg)^{1/2},    \\
\end{equation}
where $S{(f)}$ represents the frequency domain representation of the transmitted signal $s(t)$.  Correspondingly, the expression for the effective bandwidth from the time-domain signal is given by
\vspace{-3pt}
\begin{equation}       \label{EqBandwidth}
B =\Bigg({{\int_0^{T_{\text f}} \Big({w^{(1)}\big(t-{\tau _l^0} \big) \Big)^2}} dt  \over 4\pi^2 {\int_0^{T_{\text f}} \Big({w \big(t-{\tau _l^0} \big) \Big)^2}} dt}\Bigg) ^{1/2},    \\
\end{equation}
where $w^{(1)}\big(t-{\tau _l^0} \big)$ is the abbreviation of $dw\big(t-{\tau _l^0} \big) \over dt$.

\subsection{Estimated Parameters of Different Modulation Schemes}     \label{SecParameters}

The key parameters we focus on while utilizing PPM in the ISAC system can be written as
\begin{equation}      \label{Eq theta ppm}
\boldsymbol\theta_{\rm ppm}= \Big [\tau_{\text 1}, \Delta \boldsymbol{\tau}^{\rm T}, \Delta \tau_{{\rm q}},
f_{{\rm d} 1},\Delta \boldsymbol{f}_{\rm d}^{\rm T},  \tilde {\boldsymbol\alpha}^{\rm T} \Big ]^{\rm T},
\end{equation}
where $\Delta \boldsymbol{\tau}= \big[\Delta \tau_{\rm 2}, ..., \Delta \tau_{L} \big]^\text{T}$, the term $\Delta \tau_{l} = \tau_{l} - \tau_{1}$ represents the time delay difference between the $l$-th path and the first path. Correspondingly,
$\Delta\boldsymbol{f}_{\rm d} = \big[\Delta f_{\rm d2},..., \Delta f_{{\rm d}{L}} \big]^\text{T}$, the term $\Delta f_{{\rm d}{l}} = f_{{\rm d}{l}} - f_{{\rm d}{1}}$ represents the Doppler shift difference between the $l$-th path and the first path.\footnote{Assume that the time-delay remains within
one time resolution $T_{\rm s}$ during the sensing period, meaning that $\tau_1^{0}= \ldots = \tau^{\smash{\raisebox{-0ex}{\tiny$N_f$}}}_1$, $\alpha_1^{0}= \ldots = \alpha^{\smash{\raisebox{-0ex}{\tiny$N_f$}}}_l$.}

Similarly, the parameters we are interested in while utilizing BPSK modulation in the ISAC system can be written as follows:
\begin{align}	          \label{Eqthetabpsk}
\boldsymbol\theta_{{\rm bpsk}} = \Big [\tau_{\text 1},\Delta \boldsymbol{\tau}^{\rm T},f_{{\rm d}1},\Delta \boldsymbol{f}_{\rm d}^\text{T},  \varphi _{\rm{bpsk}}, \tilde{\boldsymbol\alpha}^{\rm T} \Big ]^{\rm T}.
\end{align}

In the sensing-only scenario, the parameter vectors are given in equations \eqref{Eq theta ppm} and \eqref{Eqthetabpsk} can be simplified to
\begin{equation}          \label{Eq theta Sensing}
\boldsymbol\theta_{\rm s} = \Big[ \tau_{\rm 1},\Delta \boldsymbol{\tau}^{\rm T},f_{{\rm d}1},\Delta \boldsymbol{f}_{\rm d}^\text{T}, \tilde{\boldsymbol\alpha}^{\rm T} \Big]^{\rm T},
\end{equation}
which is obtained by eliminating $\Delta \tau_{\rm{q}}$ and $\varphi _{{\rm{bpsk}}}$ from  the parameter vectors $\boldsymbol\theta_{\rm{ppm}}$ and $\boldsymbol\theta_{\rm{bpsk}}$.

\vspace{-10pt}
\subsection{CRLB on Parameters Estimation}    \label{SecCRLBModel}

In parameter estimation theory, the observation parameters vector is a crucial component, reflecting the different parameters that the system needs to estimate for effective performance. The observation parameters vector related to \eqref{EqRXSensing} in the sensing only case can be expressed as
\begin{equation}       \label{EqSensingparemeter}
{\boldsymbol {\eta}}={\Big[{\boldsymbol \tau ^{\rm{T}}},{\boldsymbol \phi}^{\rm{T}},{\tilde {\boldsymbol \alpha} ^{\rm{T}}} \Big]^{\rm{T}}},
\end{equation}
where the detailed expression of the parameters is provided after \eqref{EqRXSensing}.

Then the FIM of \eqref{EqSensingparemeter} can be expressed as
\begin{equation}   \label{EqIeta}
\mathbf{I}_{{\boldsymbol\eta}}=
\left[\begin{array} {lll}
\boldsymbol\Lambda_{\boldsymbol \tau, \boldsymbol \tau}
& \boldsymbol\Lambda_{\boldsymbol \tau, {\boldsymbol \phi}}
& \boldsymbol\Lambda_{{\boldsymbol \tau},{ \tilde {\boldsymbol \alpha}}} \\
\boldsymbol\Lambda_{{\boldsymbol \phi}, \boldsymbol \tau}
& \boldsymbol\Lambda_{{\boldsymbol \phi},  {\boldsymbol \phi}}
& \boldsymbol\Lambda_{{\boldsymbol \phi}, \tilde{\boldsymbol \alpha}} \\
\boldsymbol\Lambda_{\tilde {\boldsymbol \alpha}, {\boldsymbol \tau}}
& \boldsymbol\Lambda_{\tilde{\boldsymbol \alpha}, {\boldsymbol \phi}}
& \boldsymbol\Lambda_{\tilde{\boldsymbol \alpha}, \tilde{\boldsymbol \alpha}}
\end{array}\right],
\end{equation}
where $\boldsymbol\Lambda_{\boldsymbol \tau, \boldsymbol \tau}$ represents the sub-matrix about time-delay. Each element in $\mathbf{I}_{{\boldsymbol\eta}}$ can be derived from \eqref{Eqelement eta}, given by~\cite{Fundament},
\begin{equation}   \label{Eqelement eta}
 \mathbf{I}_{{\boldsymbol\eta}}(i_1,j_1)=\frac{2}{\sigma^{2}} \operatorname{Re}
  \left\{\frac{\partial \boldsymbol{\mu}_{\rm s}^{\rm{H}}}{\partial \boldsymbol \eta_{i_1}} \frac{\partial \boldsymbol{\mu}_{\rm s}}{\partial \boldsymbol\eta_{j_1}}\right\},
\end{equation}
where $\boldsymbol \eta_{i_1}$ denotes the ${\it i_{\rm 1}}$-th element of $\boldsymbol \eta$.

Then the FIM of estimated vectors $\boldsymbol \theta_{\rm s}$ is given by
\begin{equation}   \label{EqEFIMSense}
\mathbf{I}_{{\boldsymbol \theta}_{\rm s}}= \mathbf {J}_{\rm s}^{\text T} {\mathbf I}_{\boldsymbol\eta} \mathbf {J}_{\rm s},
\end{equation}
where $\mathbf {J}_{\rm s}$ is the Jacobian matrix that maps ${\boldsymbol\eta}$ to ${\boldsymbol \theta_{\rm s}}$, describing how the components of ${\boldsymbol \theta_{\rm s}}$ vary with respect to changes in ${\boldsymbol\eta}$.
The elements in $\mathbf{I}_{{\boldsymbol \theta}_{\rm s}}$ can be divided into several blocks according to the estimated parameter vector
${\boldsymbol \theta_{\rm s}}$, as follows:
{\setlength{\arraycolsep}{3pt}
\begin{equation}    \label{Eqelement etaRe}
{{\bf{I}}_{{\boldsymbol \theta _{\rm{s}}}}} = \left[ \begin{array}{c:cccc}
   {{{\bf{\Lambda }}_{{\tau _1},{\tau _1}}}} \hfill
   & {{{\bf{\Lambda }}_{{\tau _1},\Delta \tau }}} \hfill
   & {{{\bf{\Lambda }}_{{\tau _1},{f_{{\rm{d1}}}}}}} \hfill
   & {{{\bf{\Lambda }}_{{\tau _1},\Delta {f_{\rm{d}}}}}} \hfill
   & {{{\bf{\Lambda }}_{{\tau _1},\tilde \alpha }}} \hfill  \\
   \hdashline
   {{{\bf{\Lambda }}_{\Delta \tau ,{\tau _1}}}} \hfill
   & {{{\bf{\Lambda }}_{\Delta \tau ,\Delta \tau }}} \hfill
   & {{{\bf{\Lambda }}_{\Delta \tau ,{f_{{\rm{d1}}}}}}} \hfill
    & {{{\bf{\Lambda }}_{\Delta \tau ,\Delta {f_{\rm{d}}}}}} \hfill
    & {{{\bf{\Lambda }}_{\Delta \tau ,\tilde \alpha }}} \hfill  \\
   {{{\bf{\Lambda }}_{{f_{{\rm{d1}}}},{\tau _1}}}} \hfill
   & {{{\bf{\Lambda }}_{{f_{{\rm{d1}}}},\Delta \tau }}} \hfill
   & {{{\bf{\Lambda }}_{{f_{{\rm{d1}}}},{f_{{\rm{d1}}}}}}} \hfill
   & {{{\bf{\Lambda }}_{{f_{{\rm{d1}}}},\Delta {f_{\rm{d}}}}}} \hfill
   & {{{\bf{\Lambda }}_{{f_{{\rm{d1}}}},\tilde \alpha }}} \hfill  \\
   {{{\bf{\Lambda }}_{\Delta {f_{\rm{d}}},{\tau _1}}}} \hfill
   & {{{\bf{\Lambda }}_{\Delta {f_{\rm{d}}},\Delta \tau }}} \hfill
   & {{{\bf{\Lambda }}_{\Delta {f_{\rm{d}}},{f_{{\rm{d1}}}}}}} \hfill
   & {{{\bf{\Lambda }}_{\Delta {f_{\rm{d}}},\Delta {f_{\rm{d}}}}}} \hfill
   & {{{\bf{\Lambda }}_{\Delta {f_{\rm{d}}},\tilde \alpha }}} \hfill  \\
   {{{\bf{\Lambda }}_{\tilde \alpha ,{\tau _1}}}} \hfill
   & {{{\bf{\Lambda }}_{\tilde \alpha ,\Delta \tau }}} \hfill
   & {{{\bf{\Lambda }}_{\tilde \alpha ,{f_{{\rm{d1}}}}}}} \hfill
   & {{{\bf{\Lambda }}_{\tilde \alpha ,\Delta {f_{\rm{d}}}}}} \hfill
   & {{{\bf{\Lambda }}_{\tilde \alpha ,\tilde \alpha }}} \hfill  \\
  \end{array}  \right].
\end{equation}
}
\begin{Definition}   \label{DefEFIM}
The equivalent FIM (EFIM) for $\tau_1$ is given by
\begin{equation}\label{EqEFIMTau1}
{{\bf{I}}_{\tau_1}} = {\bf A} - {\bf B}^{\rm T}{{\bf C}^{ - 1}}{\bf B},
\end{equation}
where ${\bf A} = {{{\bf{\Lambda }}_{{\tau _1},{\tau _1}}}}$, ${\bf B} = [{{{\bf{\Lambda }}_{{\tau _1},\Delta \tau }}},
{{{\bf{\Lambda }}_{{\tau _1},{f_{{\rm{d1}}}}}}}, {{{\bf{\Lambda }}_{{\tau _1},\Delta {f_{\rm{d}}}}}},
{{{\bf{\Lambda }}_{{\tau _1},\tilde \alpha }}} ]^{\text T}$, and
$${\bf C} =
\left[ \begin{array}{cccc}
 {{{\bf{\Lambda }}_{\Delta \tau ,\Delta \tau }}} \hfill
    & {{{\bf{\Lambda }}_{\Delta \tau ,{f_{{\rm{d1}}}}}}} \hfill
    & {{{\bf{\Lambda }}_{\Delta \tau ,\Delta {f_{\rm{d}}}}}} \hfill
    & {{{\bf{\Lambda }}_{\Delta \tau ,\tilde \alpha }}} \hfill  \\
 {{{\bf{\Lambda }}_{{f_{{\rm{d1}}}},\Delta \tau }}} \hfill
   & {{{\bf{\Lambda }}_{{f_{{\rm{d1}}}},{f_{{\rm{d1}}}}}}} \hfill
   & {{{\bf{\Lambda }}_{{f_{{\rm{d1}}}},\Delta {f_{\rm{d}}}}}} \hfill
   & {{{\bf{\Lambda }}_{{f_{{\rm{d1}}}},\tilde \alpha }}} \hfill  \\
 {{{\bf{\Lambda }}_{\Delta {f_{\rm{d}}},\Delta \tau }}} \hfill
   & {{{\bf{\Lambda }}_{\Delta {f_{\rm{d}}},{f_{{\rm{d1}}}}}}} \hfill
   & {{{\bf{\Lambda }}_{\Delta {f_{\rm{d}}},\Delta {f_{\rm{d}}}}}} \hfill
   & {{{\bf{\Lambda }}_{\Delta {f_{\rm{d}}},\tilde \alpha }}} \hfill  \\
 {{{\bf{\Lambda }}_{\tilde \alpha ,\Delta \tau }}} \hfill
   & {{{\bf{\Lambda }}_{\tilde \alpha ,{f_{{\rm{d1}}}}}}} \hfill
   & {{{\bf{\Lambda }}_{\tilde \alpha ,\Delta {f_{\rm{d}}}}}} \hfill
   & {{{\bf{\Lambda }}_{\tilde \alpha ,\tilde \alpha }}} \hfill  \\
  \end{array}  \right].$$
\end{Definition}
\begin{Definition}
The CRLB of $\tau_1$ is defined to be
\begin{equation}    \label{EqCRLBTau}
{\cal C}(\tau_1) = {\rm{tr}}\Big[ {{\bf{I}}_{\tau_1}^{ - 1}} \Big].
\end{equation}

Accordingly, the CRLB of the signal transmission range from Tx to Rx can be written as
\begin{equation}   \label{EqCRLBRange}
{\cal C}(d_1) = c^2 {\cal C}(\tau_1),
\end{equation}
where $c$ is the speed of electromagnetic waves in the air, the number of clusters $L$ in ${{\bf{I}}_{\tau_1}}$ can be obtained through pre-measurement and processing of the channel as mentioned in Sec.~\ref{Sec_channel}.
\end{Definition}

The CRLB solutions for the other estimated parameters indicated in \eqref{Eq theta Sensing} follow a comparable methodology, differing only in the reordering of rows and columns within the information matrix  ${\bf I_{\boldsymbol \theta_{\rm s}}}$ as illustrated in \eqref{Eqelement etaRe}.

\section{Estimation Accuracy Based ISAC Problem Formulation}

This section derives the deterministic CRLB for estimating time-delay and Doppler shift within the UWB ISAC system. We pay particular attention to the interaction between the unknown modulated data and the sensing parameters, highlighting the coupling of information in the ISAC system.
Furthermore, we propose a decoupling method to obtain CRLB expressions for time-delay and Doppler under different modulation schemes, ensuring that we accurately quantify the limits of parameter estimation for various signal configurations.
  \vspace{-12pt}
\subsection{FIM in Sensing-only Case}

In this subsection, we consider the sensing-only scenario, where the system performs target sensing without any data transmission as Sec.~\ref{SecCRLBModel}.

\begin{Corollary}     \label{CorolEFIMSense}
The FIM of sensing parameter vector $\boldsymbol \theta_{\rm s}$ can be expressed as
\begin{equation}   \label{EqEFIMSense}
\begin{aligned}
&  {{\bf I}_{{\boldsymbol \theta}_{\rm s}}} = {\bf J}_{{\rm{s}}}^{\rm{T}}{{\bf{I}}_{\boldsymbol \eta_{\rm s}} }{{\bf J}_{{\rm{s}}}}      \\
& = {\left[ {\begin{array}{*{20}{c}}
   {{{\bf{H}}^{\rm{T}}}{{\bf{\Lambda }}_{\boldsymbol \tau_{\rm s}, \boldsymbol \tau_{\rm s} }}{\bf{H}}}
& {\bf 0}
& {{{\bf{H}}^{\rm{T}}}{{\bf{\Lambda }}_{\boldsymbol \tau_{\rm s}, {\tilde {\boldsymbol \alpha}}_{\rm s}}}}    \\
  {\bf 0}
& \mathfrak{b}{{{\bf{H}}^{\rm{T}}}{{\bf{\Lambda }}_{\boldsymbol \phi^{0},  \boldsymbol \phi^{0}}}{\bf{H}}}
& {\bf 0}                     \\
  {{\bf{\Lambda }}_{{\tilde {\boldsymbol \alpha}}_{\rm s}, \boldsymbol \tau_{\rm s}}} {\bf{H}}
& {\bf 0}
& {{\bf{\Lambda }}_{{\tilde {\boldsymbol \alpha}}_{\rm s}, {\tilde {\boldsymbol \alpha}}_{\rm s}}}
\end{array}} \right]},
\end{aligned}
\end{equation}
where ${\boldsymbol \eta_{\rm s}}={\boldsymbol \eta}$, $\mathbf {J}_{\rm s}$ is the Jacobian matrix from ${\boldsymbol\eta_{\rm s}}$ to ${\boldsymbol \theta_{\rm s}}$, $\mathfrak{b}{=} \frac{{2{\pi^2}{T_{\text f}^2}{N_{\text f}}\left( {{N_{\text f}} - 1} \right)\left( {2{N_{\text f}} - 1} \right)}}{3}$, and
\begin{equation}
\begin{aligned}
{\bf{H}} = {\left[ {\begin{array}{*{20}{c}}
 1 & 0 & \ldots & 0 \\
 1 & 1 & \ldots & 0 \\
 \vdots & \vdots & \ddots & \vdots \\
 1 & 0 & \ldots & 1
\end{array}} \right]_{L \times L}}.
\end{aligned}
\end{equation}

We observed a zero vector in the matrix when $N_f=1$, indicates that the matrix is singular. This aligns with the principle that Doppler estimation requires at least two periods of the signal.

\end{Corollary}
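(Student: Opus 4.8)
The plan is to obtain the corollary in two stages: first evaluate the observation-level FIM ${\bf I}_{\boldsymbol\eta_{\rm s}}$ from \eqref{Eqelement eta} using the explicit mean vector \eqref{EqRxMean}, and then transport it to $\boldsymbol\theta_{\rm s}$ through the chain rule ${\bf I}_{\boldsymbol\theta_{\rm s}}={\bf J}_{\rm s}^{\rm T}{\bf I}_{\boldsymbol\eta_{\rm s}}{\bf J}_{\rm s}$. First I would record the three per-PRI derivatives of $\boldsymbol\mu_{\rm s}^\kappa=\sum_l\tilde\alpha_l e^{{\rm j}\phi_l^\kappa}{\bf w}(\tau_l^\kappa)$: differentiating in the delay gives $\tilde\alpha_l e^{{\rm j}\phi_l^\kappa}{\bf w}^{(1)}(\tau_l^\kappa)$, in the phase gives the purely imaginary multiple ${\rm j}\tilde\alpha_l e^{{\rm j}\phi_l^\kappa}{\bf w}(\tau_l^\kappa)$, and in the amplitude gives $e^{{\rm j}\phi_l^\kappa}{\bf w}(\tau_l^\kappa)$. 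Since distinct PRIs occupy disjoint sample blocks of the stacked vector in \eqref{EqRxMean}, every inner product carries a factor $\delta_{\kappa\kappa'}$, so ${\bf I}_{\boldsymbol\eta_{\rm s}}$ is block diagonal across the $N_{\rm f}$ periods and it suffices to analyse one PRI at a time.

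The two zero off-diagonal blocks then follow from a single observation. Because $w(\cdot)$ and hence ${\bf w}^{(1)}$ are real and the amplitudes $\tilde\alpha_l$ are real, any cross term pairing a phase derivative with a delay or an amplitude derivative contains an overall factor ${\rm j}$ multiplying a real pulse correlation; its real part, and therefore the corresponding entry of ${\bf I}_{\boldsymbol\eta_{\rm s}}$, vanishes identically. I would stress that this imaginary-factor cancellation does not require the paths to be resolvable, so it annihilates both same-path and cross-path contributions, giving $\boldsymbol\Lambda_{\boldsymbol\tau_{\rm s},\boldsymbol\phi}={\bf 0}$ and $\boldsymbol\Lambda_{\tilde{\boldsymbol\alpha}_{\rm s},\boldsymbol\phi}={\bf 0}$, which is exactly the delay-Doppler and amplitude-Doppler decoupling asserted in \eqref{EqEFIMSense}. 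For the surviving delay-amplitude block the same-path entries vanish through $\int w^{(1)}w\,{\rm d}t=\tfrac12[w^2]_{-\infty}^{\infty}=0$, which is why $\boldsymbol\Lambda_{\boldsymbol\tau_{\rm s},\tilde{\boldsymbol\alpha}_{\rm s}}$ is retained in full generality rather than set to zero.

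Next I would assemble ${\bf J}_{\rm s}$ from the reduced parameterisation. Under the footnote assumption that each $\tau_l$ and $\tilde\alpha_l$ is constant over the sensing window while $\phi_l^\kappa=2\pi f_{{\rm d}l}\kappa T_{\rm f}+\mathrm{const}$, the maps $\tau_l=\tau_1+\Delta\tau_l$ and $f_{{\rm d}l}=f_{{\rm d}1}+\Delta f_{{\rm d}l}$ each contribute the lower-triangular incidence matrix ${\bf H}$, while the amplitude map is the identity; this produces the ${\bf H}^{\rm T}(\cdot){\bf H}$ sandwiches in \eqref{EqEFIMSense}. The Doppler block additionally picks up the chain-rule weight $\partial\phi_l^\kappa/\partial f_{{\rm d}l}=2\pi\kappa T_{\rm f}$; since under the resolvable-cluster model of Section~\ref{Sec_channel} each per-PRI phase sub-block reduces to the common matrix $\boldsymbol\Lambda_{\boldsymbol\phi^0,\boldsymbol\phi^0}$, the PRI summation collapses to $\sum_{\kappa=0}^{N_{\rm f}-1}(2\pi\kappa T_{\rm f})^2\,\boldsymbol\Lambda_{\boldsymbol\phi^0,\boldsymbol\phi^0}$, and the closed form $\sum_\kappa\kappa^2=\tfrac16(N_{\rm f}-1)N_{\rm f}(2N_{\rm f}-1)$ reproduces exactly the coefficient $\mathfrak{b}$.

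Finally, the singularity claim is immediate: at $N_{\rm f}=1$ the only period is $\kappa=0$, so $\partial\phi_l^0/\partial f_{{\rm d}l}=0$ and $\mathfrak{b}=0$, forcing the entire Doppler block, a full set of rows and columns of ${\bf I}_{\boldsymbol\theta_{\rm s}}$, to vanish; the matrix is then rank-deficient and non-invertible, which is the analytic form of the fact that a single pulse carries no phase-evolution information and at least two PRIs are needed to resolve Doppler. The step I expect to be the main obstacle is the $\mathfrak{b}$-collapse: one must verify that each per-PRI phase sub-block is genuinely $\kappa$-independent, which requires controlling the inter-path phase differences $2\pi(f_{{\rm d}l'}-f_{{\rm d}l})\kappa T_{\rm f}$ in the cross-correlation terms, and one must keep careful track of how the per-PRI phase observations $\boldsymbol\phi^\kappa$ feeding ${\bf I}_{\boldsymbol\eta_{\rm s}}$ relate to the compact Doppler parameters in $\boldsymbol\theta_{\rm s}$, so that the constant offset $-2\pi f_{\rm c}\tau_{l0}$ does not reintroduce a spurious delay-Doppler coupling.
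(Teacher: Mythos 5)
Your proposal follows essentially the same route as the paper's Appendix~\ref{AppendixProEFIMSense}: compute the observation-level blocks of ${\bf I}_{\boldsymbol\eta_{\rm s}}$ element by element, kill the delay--phase and amplitude--phase cross blocks via the ${\rm Re}\{{\rm j}\cdot(\text{real})\}=0$ cancellation exactly as in \eqref{EqTauPhi} and \eqref{EqPhiAlpha}, then sandwich with the Jacobian built from ${\bf H}$ and the weights $2\pi\kappa T_{\text f}$ so that the $\kappa$-sum yields $\mathfrak{a}$ and $\mathfrak{b}$ as in \eqref{EqAppEFIMSensing}; the singularity argument at $N_{\text f}=1$ is also the paper's. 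The derivation and the coefficient $\mathfrak{b}=\tfrac{2\pi^2T_{\text f}^2N_{\text f}(N_{\text f}-1)(2N_{\text f}-1)}{3}$ check out.

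One point you explicitly stress is, however, wrong as stated: the claim that the imaginary-factor cancellation ``does not require the paths to be resolvable'' and annihilates cross-path contributions. For $i\neq j$ the relevant bracket is ${\rm Re}\{{\rm j}\,e^{{\rm j}(\phi_j^{\kappa}-\phi_i^{\kappa})}\}\cdot\tilde\alpha_i\tilde\alpha_j\,\frac{\partial{\boldsymbol w}(\tau_i)^{\rm H}}{\partial\tau_i}{\boldsymbol w}(\tau_j) = -\sin(\phi_j^{\kappa}-\phi_i^{\kappa})\cdot(\text{real pulse correlation})$, which is not identically zero: the inter-path phase difference makes the factor genuinely complex, so only the $i=j$ entries die from the ${\rm j}$ alone. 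The off-diagonal entries vanish because the delayed pulses (and their derivatives) are assumed orthogonal for distinct clusters --- the same resolvability assumption the paper invokes to set ${\bf\Lambda}_{\boldsymbol\tau,\boldsymbol\tau}(i,j)=0$ for $i\neq j$ after \eqref{EqAssumptionTau}. Since that assumption is in force throughout, your conclusion ${\bf\Lambda}_{\boldsymbol\tau,\boldsymbol\phi}={\bf 0}$ and ${\bf\Lambda}_{\tilde{\boldsymbol\alpha},\boldsymbol\phi}={\bf 0}$ still holds, but the stated justification for the $i\neq j$ entries should be replaced by the pulse-orthogonality argument rather than the imaginary-factor one.
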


\begin{proof}
See Appendix \ref{AppendixProEFIMSense}.
\end{proof}

\vspace{-15pt}

\subsection{FIM in ISAC Systems with PPM Scheme}

The observation parameters vector in the PPM ISAC system is shown as
\begin{equation}   \label{EqthetaPPM}
{\boldsymbol {\eta}_{\rm ppm}}={\Big [{\boldsymbol \tau_{\rm ppm}^{\rm{T}}},{\boldsymbol \phi}^{\rm{T}},{\tilde {\boldsymbol \alpha}^{\rm{T}}} \Big ]^{\rm{T}}},
\end{equation}
where the detailed expression of $\boldsymbol{\tau}_{\rm ppm} $ is given in \eqref{EqRXPPM}.

\begin{proposition}     \label{PropoEFIMPPM}

The FIM of estimated parameter vector ${\boldsymbol \theta}_{\rm ppm}$ in \eqref{Eq theta ppm} can be expressed as
\begin{equation}        \label{EqEFIMPPM}
\begin{aligned}
& {{\bf I}_{{\boldsymbol \theta}_{\rm ppm}}} = {\bf J}_{{\rm{ppm}}}^{\rm{T}}{{\bf{I}}_{\boldsymbol \eta_{\rm ppm}}}{{\bf J}_{{\rm{ppm}}}}      \\
& = \! \!{\left[ \! {\begin{array}{*{20}{c}}
  {{{\bf{H}}^{\rm{T}}}{{\bf{\Lambda }}_{\boldsymbol \tau, \boldsymbol \tau}}{\bf{H}}}
&  {{{\bf{H}}^{\rm{T}}}{{\bf{\Lambda }}_{\boldsymbol \tau, \boldsymbol \tau}}{\bf{E}}}
&  {\bf 0}
& {{{\bf{H}}^{\rm{T}}}{{\bf{\Lambda }}_{\boldsymbol \tau, {\tilde {\boldsymbol \alpha}}}}}    \\
  {{{\bf{E}}^{\rm{T}}}{{{\bf{\Lambda }}_{\boldsymbol \tau,{\boldsymbol \tau}}}}{\bf{H}}}
& {{{\bf{E}}^{\rm{T}}}{{\bf{\Lambda }}_{\boldsymbol \tau,{\boldsymbol \tau}}}{\bf{E}}}
& {\bf 0}
& {{{\bf{E}}^{\rm{T}}}{{\bf{\Lambda }}_{\boldsymbol \tau, {\tilde {\boldsymbol \alpha}}}}}    \\
  {\bf 0}
& {\bf 0}
& \mathfrak{b}{{{\bf{H}}^{\rm{T}}}{{\bf{\Lambda }}_{\boldsymbol \phi^{0},  \boldsymbol \phi^{0}}}{\bf{H}}}
& {\bf 0}                         \\
  {{\bf{\Lambda }}_{{\tilde {\boldsymbol \alpha}}, \boldsymbol \tau}} {\bf{H}}
& {{\bf{\Lambda }}_{{\tilde {\boldsymbol \alpha}}, \boldsymbol \tau}} {\bf{E}}
& {\bf 0}
& {{\bf{\Lambda }}_{{\tilde {\boldsymbol \alpha}}, {\tilde {\boldsymbol \alpha}}}}
\end{array}} \! \right]\!\!,}
\end{aligned}
\end{equation}
where ${{\bf{I}}_{\boldsymbol \eta_{\rm ppm}}}$ is the FIM of $\boldsymbol \eta_{\rm ppm}$, and the numerical values of the elements in
${{\bf I}_{{\boldsymbol \eta}_{\rm ppm}}}$ are consistent with those in ${{\bf I}_{\boldsymbol \eta_{\rm s}}}$, ${\bf J}_{{\rm{ppm}}}$ is the Jacobian matrix which is mapping from $\boldsymbol \eta_{\rm ppm}$ to $\boldsymbol \theta_{\rm ppm}$,
\begin{equation}
{\bf{E}} = {[ {\begin{array}{*{20}{c}}
 1  &  \ldots   & 1                  \\
\end{array}} ]^{\rm T}_{1 \times L}}.
\end{equation}

\end{proposition}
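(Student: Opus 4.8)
The plan is to obtain $\mathbf{I}_{\boldsymbol\theta_{\rm ppm}}$ as a reparametrization of the observation-level FIM $\mathbf{I}_{\boldsymbol\eta_{\rm ppm}}$ through the chain rule $\mathbf{I}_{\boldsymbol\theta_{\rm ppm}} = \mathbf{J}_{\rm ppm}^{\rm T}\mathbf{I}_{\boldsymbol\eta_{\rm ppm}}\mathbf{J}_{\rm ppm}$, viewing the result as an augmentation of the sensing-only case in Corollary~\ref{CorolEFIMSense}. Since the numerical entries of $\mathbf{I}_{\boldsymbol\eta_{\rm ppm}}$ coincide with those of $\mathbf{I}_{\boldsymbol\eta_{\rm s}}$ (the PPM modulation only relabels the delay coordinates), the whole task reduces to writing the Jacobian $\mathbf{J}_{\rm ppm} = \partial\boldsymbol\eta_{\rm ppm}/\partial\boldsymbol\theta_{\rm ppm}$ correctly and then expanding the congruence transformation block by block.

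First I would express each observation coordinate as an explicit function of $\boldsymbol\theta_{\rm ppm}$. From \eqref{EqRXPPM} the delay coordinate obeys $\tau_{l,\rm ppm}^\kappa = \tau_1 + \Delta\tau_l + \Delta\tau_{\rm q}$ (with the convention $\Delta\tau_1 \equiv 0$), the phase obeys $\phi_l^\kappa = 2\pi(f_{\rm d1}+\Delta f_{\rm dl})\kappa T_{\rm f} - 2\pi f_{\rm c}\tau_{l0}$, and the amplitudes map identically. These relations are linear, so $\mathbf{J}_{\rm ppm}$ is a constant block matrix: the delay coordinates contribute $\mathbf{H}$ on the $(\tau_1,\Delta\boldsymbol\tau)$ columns and $\mathbf{E}$ on the $\Delta\tau_{\rm q}$ column; the phase coordinates contribute $2\pi\kappa T_{\rm f}\mathbf{H}$ on the $(f_{\rm d1},\Delta\boldsymbol f_{\rm d})$ columns in the $\kappa$-th PRI block; and the amplitude coordinates contribute the identity.

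Next I would multiply out $\mathbf{J}_{\rm ppm}^{\rm T}\mathbf{I}_{\boldsymbol\eta_{\rm ppm}}\mathbf{J}_{\rm ppm}$ using the cross-block structure of $\mathbf{I}_{\boldsymbol\eta_{\rm ppm}}$ established for Corollary~\ref{CorolEFIMSense}, namely that $\boldsymbol\Lambda_{\boldsymbol\tau,\boldsymbol\phi} = \mathbf{0}$. This vanishing rests on the orthogonality $\int_0^{T_{\rm f}} w(t-\tau)\,w^{(1)}(t-\tau)\,dt = \tfrac12\int_0^{T_{\rm f}} \frac{d}{dt}\big(w(t-\tau)\big)^2\,dt = 0$, which annihilates the real part of every delay--phase cross term and thereby produces the zero blocks in the third row and column of \eqref{EqEFIMPPM}. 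Propagating $\mathbf{H}$ and $\mathbf{E}$ then yields the delay self-block $\mathbf{H}^{\rm T}\boldsymbol\Lambda_{\boldsymbol\tau,\boldsymbol\tau}\mathbf{H}$, the new modulation self-block $\mathbf{E}^{\rm T}\boldsymbol\Lambda_{\boldsymbol\tau,\boldsymbol\tau}\mathbf{E}$, the delay--modulation cross-block $\mathbf{H}^{\rm T}\boldsymbol\Lambda_{\boldsymbol\tau,\boldsymbol\tau}\mathbf{E}$, and the amplitude couplings through $\mathbf{H}^{\rm T}\boldsymbol\Lambda_{\boldsymbol\tau,\tilde{\boldsymbol\alpha}}$ and $\mathbf{E}^{\rm T}\boldsymbol\Lambda_{\boldsymbol\tau,\tilde{\boldsymbol\alpha}}$, exactly as displayed. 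For the Doppler block, summing the per-PRI factors gives $4\pi^2 T_{\rm f}^2\sum_{\kappa=0}^{N_{\rm f}-1}\kappa^2 = \frac{2\pi^2 T_{\rm f}^2 N_{\rm f}(N_{\rm f}-1)(2N_{\rm f}-1)}{3} = \mathfrak{b}$, reproducing the scalar in front of $\mathbf{H}^{\rm T}\boldsymbol\Lambda_{\boldsymbol\phi^0,\boldsymbol\phi^0}\mathbf{H}$, where $\boldsymbol\Lambda_{\boldsymbol\phi^\kappa,\boldsymbol\phi^\kappa}$ is independent of $\kappa$ under the constant-delay/amplitude assumption.

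The main obstacle I anticipate is confirming that the extra parameter $\Delta\tau_{\rm q}$ couples only through the delay coordinates: since its Jacobian column is zero against the phase and amplitude observations, its row and column must be orthogonal to the Doppler block and must reduce (beyond the $\mathbf{E}$-weighted amplitude term) to contractions of $\boldsymbol\Lambda_{\boldsymbol\tau,\boldsymbol\tau}$ alone. Verifying that this introduces no spurious Doppler coupling, while keeping the $\mathbf{H}$/$\mathbf{E}$ bookkeeping consistent with the parameter ordering of $\boldsymbol\theta_{\rm ppm}$ in \eqref{Eq theta ppm}, is the delicate part; the remaining manipulations are the routine expansion of the congruence transformation.
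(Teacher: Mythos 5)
Your proposal is correct and follows essentially the same route as the paper's Appendix~\ref{AppendixProEFIMPPM}: form the Jacobian $\mathbf{J}_{\rm ppm}$ with the $\mathbf{H}$, $\mathbf{E}$, $2\pi\kappa T_{\text f}\mathbf{H}$ and identity blocks, note that ${\bf I}_{\boldsymbol\eta_{\rm ppm}}$ is numerically identical to ${\bf I}_{\boldsymbol\eta_{\rm s}}$, and expand the congruence transformation block by block (the paper first writes the full product \eqref{EqEFIMPPMAp} with $\mathfrak{a}$-weighted ${\bf\Lambda}_{\boldsymbol\tau,\boldsymbol\phi^0}$ blocks, which then collapse to the zeros in \eqref{EqEFIMPPM}). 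One caveat: you attribute the vanishing of the delay--phase cross-blocks to $\int_0^{T_{\text f}} w\,w^{(1)}\,dt=0$, whereas the paper's mechanism in \eqref{EqTauPhi} is that differentiating $e^{{\rm j}\phi}$ produces a factor ${\rm j}$ multiplying a real inner product, so the real part vanishes identically; your orthogonality argument, if applied consistently, would also annihilate the ${\bf\Lambda}_{\boldsymbol\tau,\tilde{\boldsymbol\alpha}}$ blocks in \eqref{EqTauAlpha}, which both you and the paper retain as generically nonzero, so you should switch to the ${\rm Re}\{{\rm j}\cdot(\text{real})\}=0$ justification for those zero blocks.
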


\begin{Proof}
See Appendix \ref{AppendixProEFIMPPM}.
\end{Proof}

\begin{remark}   \label{RemarkPPMsingular}
The values in the first and $(L+1)$-th columns of the matrix ${{\bf{I}}_{\boldsymbol \eta_{\rm ppm}}}$ are found to be consistent, which
means the FIM $\mathbf{I}_{\boldsymbol\theta_{\rm ppm}}$ is {\it singular}. Consequently, the CRLBs for the estimated parameters in $\boldsymbol \theta_{\rm ppm}$ cannot be directly obtained.
Observe that the first column corresponds to the vector related to the time-delay $\tau_1$, while the $(L+1)$-th column
\vspace{-2pt}
\begin{equation}      \nonumber
\bf{I}_{\rm ppmd} = \big[ \: {{{\bf{H}}^{\rm{T}}}{{\bf{\Lambda }}_{\boldsymbol \tau, \boldsymbol \tau}}{\bf{E}}},  \: {{{\bf{E}}^{\rm{T}}}{{\bf{\Lambda }}_{\boldsymbol \tau, \: {\boldsymbol \tau}}}{\bf{E}}},  \: {\bf 0},  \: {{\bf{\Lambda }}_{{\tilde {\boldsymbol \alpha}}}, \boldsymbol \tau} {\bf{E}}
\: \big]^{\rm T}
\end{equation}
represents the vector associated with the time-shift $\Delta \tau_{\rm q}$ due to data modulation. So the primary cause of this singularity in $\mathbf{I}_{\boldsymbol\theta_{\rm ppm}}$ stems from the coupling between the signal propagation delay $\tau_l$ and the PPM modulation interval $\Delta
\tau_{\rm q}$, as illustrated in \eqref{EqRXPPM}.
In other words, we can only obtain the value of $\tau_l+\tau_{\rm q}$, but cannot determine their individual values.
Therefore, appropriate decoupling solutions are necessary for PPM ISAC scenarios.
\end{remark}

\vspace{-10pt}
\subsection{FIM in ISAC Systems with BPSK Modulation Scheme}

The observation parameters vector in the BPSK case is shown as
\vspace{-2pt}
\begin{equation}\label{EqthetaPPM}
{\boldsymbol {\eta}_{\rm bpsk}}={\Big [{\boldsymbol \tau^{\rm{T}}},{\boldsymbol \phi_{\rm bpsk}^{\rm{T}}},{\tilde {\boldsymbol \alpha}^{\rm{T}}} \Big]^{\rm{T}}},
\end{equation}
where the representation of the phase $\boldsymbol \phi_{\rm bpsk}$ is provided immediately after (\ref{EqRXBPSK}).

\begin{proposition}  \label{PropoEFIMBPSK}

The FIM of estimated parameters vector ${\boldsymbol \theta}_{\rm bpsk}$ in \eqref{Eqthetabpsk} is given by
\begin{equation}     \label{EqEFIMBPSK}
\begin{aligned}
& {{\bf I}_{{\boldsymbol \theta}_{\rm bpsk}}} = {\bf J}_{{\rm{bpsk}}}^{\rm{T}}{{\bf{I}}_{\boldsymbol \eta_{\rm bpsk}}}{{\bf J}_{{\rm{bpsk}}}}      \\
& \!= \!\!\!{\left[\!\!\! {\begin{array}{*{20}{c}}
  {{{\bf{H}}^{\rm{T}}}{{\bf{\Lambda }}_{\boldsymbol \tau, \boldsymbol \tau }}{\bf{H}}}
& {\bf 0}
& {\bf 0}
& {{{\bf{H}}^{\rm{T}}}{{\bf{\Lambda }}_{\boldsymbol \tau, {\tilde {\boldsymbol \alpha}}}}}           \\
  {\bf 0}
& \mathfrak{b}{{{\bf{H}}^{\rm{T}}}{{\bf{\Lambda }}_{\boldsymbol \phi^{0},  \boldsymbol \phi^{0}}}{\bf{H}}}
& \mathfrak{b} {{{\bf{H}}^{\rm{T}}}{{\bf{\Lambda }}_{\boldsymbol \phi^{0}, \boldsymbol \phi^{0}}}{\bf{E}}}
& {\bf 0}           \\
  {\bf 0}
& \mathfrak{b} {{{\bf{E}}^{\rm{T}}}{{\bf{\Lambda }}_{\boldsymbol \phi^{0}, \boldsymbol \phi^{0} }}}{\bf{H}}
& \mathfrak{b} {{{\bf{E}}^{\rm{T}}}{{\bf{\Lambda }}_{\boldsymbol \phi^{0},{\boldsymbol \phi^{0}}}}{\bf{E}}}
& {\bf 0}           \\
  {{\bf{\Lambda }}_{{\tilde {\boldsymbol \alpha}}, \boldsymbol \tau}} {\bf{H}}
& {\bf 0}
& {\bf 0}
& {{\bf{\Lambda }}_{{\tilde {\boldsymbol \alpha}}, {\tilde {\boldsymbol \alpha}}}}
\end{array}}\!\!\! \right]}\!,                 \\
\end{aligned}
\end{equation}
where ${{\bf{I}}_{\boldsymbol \eta_{\rm bpsk}}}$ is the FIM of $\boldsymbol \eta_{\rm bpsk}$, and the numerical values of the elements in
${{\bf I}_{{\boldsymbol \eta}_{\rm bpsk}}}$ are consistent with those in ${{\bf I}_{\boldsymbol \eta_{\rm s}}}$, ${\bf J}_{{\rm{bpsk}}}$ is the Jacobian matrix which is mapping from $\boldsymbol \eta_{\rm bpsk}$ to $\boldsymbol \theta_{\rm bpsk}$.

\end{proposition}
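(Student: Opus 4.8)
The plan is to mirror the Jacobian-transformation argument of Corollary~\ref{CorolEFIMSense} and Proposition~\ref{PropoEFIMPPM}, transplanting the data coupling from the delay domain (as in PPM) to the phase/Doppler domain. Since the statement already asserts that the entries of $\mathbf{I}_{\boldsymbol{\eta}_{\rm bpsk}}$ coincide numerically with those of $\mathbf{I}_{\boldsymbol{\eta}_{\rm s}}$, I would first record the block structure of this observation-domain FIM. The delay enters the received mean only through the pulse shape $w(t-\tau)$ while the phase enters only through $e^{{\rm j}\phi}$, so a representative entry of the cross-block $\boldsymbol{\Lambda}_{\boldsymbol{\tau},\boldsymbol{\phi}}$ is proportional to $\operatorname{Re}\{{\rm j}\,|\tilde\alpha_l|^2\!\int w^{(1)}(t-\tau)\,w(t-\tau)\,dt\}$, which vanishes because the integrand is a perfect derivative and, in any case, the leading ${\rm j}$ renders the product purely imaginary; the identical argument annihilates $\boldsymbol{\Lambda}_{\boldsymbol{\phi},\tilde{\boldsymbol{\alpha}}}$ for real $\tilde\alpha_l$. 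Path resolvability and the non-overlap of pulses across distinct PRIs then make $\boldsymbol{\Lambda}_{\boldsymbol{\phi},\boldsymbol{\phi}}$ block-diagonal in $\kappa$ with the common per-PRI block $\boldsymbol{\Lambda}_{\boldsymbol{\phi}^0,\boldsymbol{\phi}^0}$.

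Next I would assemble the Jacobian $\mathbf{J}_{\rm bpsk}=\partial\boldsymbol{\eta}_{\rm bpsk}/\partial\boldsymbol{\theta}_{\rm bpsk}$. The delay coordinates $\tau_l=\tau_1+\Delta\tau_l$ reproduce the lower-triangular $\mathbf{H}$ exactly as in the sensing case. The crux is the phase parameterization $\phi^{\kappa}_{l,{\rm bp}}=2\pi\kappa T_{\text f}\big(f_{{\rm d}l}+\varphi_{\rm bpsk}\big)$, which gives $\partial\phi^{\kappa}_{l,{\rm bp}}/\partial f_{{\rm d}l}=\partial\phi^{\kappa}_{l,{\rm bp}}/\partial\varphi_{\rm bpsk}=2\pi\kappa T_{\text f}$. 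Substituting the Doppler split $f_{{\rm d}l}=f_{{\rm d}1}+\Delta f_{{\rm d}l}$ shows that, within PRI $\kappa$, the phase-to-parameter block equals $2\pi\kappa T_{\text f}\,[\,\mathbf{H}\mid\mathbf{E}\,]$: the columns of $\mathbf{H}$ carry the Doppler dependence, while the all-ones column $\mathbf{E}=[1,\dots,1]^{\rm T}$ records that $\varphi_{\rm bpsk}$ perturbs every path's phase by the identical amount. This is precisely the phase-domain analogue of the delay-domain $[\,\mathbf{H}\mid\mathbf{E}\,]$ appearing in Proposition~\ref{PropoEFIMPPM}.

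I would then form $\mathbf{I}_{\boldsymbol{\theta}_{\rm bpsk}}=\mathbf{J}_{\rm bpsk}^{\rm T}\mathbf{I}_{\boldsymbol{\eta}_{\rm bpsk}}\mathbf{J}_{\rm bpsk}$ block by block. The delay and amplitude blocks, together with the coupling $\mathbf{H}^{\rm T}\boldsymbol{\Lambda}_{\boldsymbol{\tau},\tilde{\boldsymbol{\alpha}}}$, carry over verbatim from the sensing result, while every delay/phase and phase/amplitude off-diagonal is zero by the vanishing observation-domain cross-blocks above. Summing the phase contributions over the PRIs yields the scalar $\sum_{\kappa=0}^{N_{\text f}-1}(2\pi\kappa T_{\text f})^2=(2\pi T_{\text f})^2\,\tfrac{(N_{\text f}-1)N_{\text f}(2N_{\text f}-1)}{6}=\mathfrak{b}$, so the phase block collapses to $\mathfrak{b}\,[\,\mathbf{H}\mid\mathbf{E}\,]^{\rm T}\boldsymbol{\Lambda}_{\boldsymbol{\phi}^0,\boldsymbol{\phi}^0}[\,\mathbf{H}\mid\mathbf{E}\,]$; expanding this outer product reproduces exactly the four inner phase sub-blocks $\mathfrak{b}\mathbf{H}^{\rm T}\boldsymbol{\Lambda}_{\boldsymbol{\phi}^0,\boldsymbol{\phi}^0}\mathbf{H}$, $\mathfrak{b}\mathbf{H}^{\rm T}\boldsymbol{\Lambda}_{\boldsymbol{\phi}^0,\boldsymbol{\phi}^0}\mathbf{E}$, its transpose, and $\mathfrak{b}\mathbf{E}^{\rm T}\boldsymbol{\Lambda}_{\boldsymbol{\phi}^0,\boldsymbol{\phi}^0}\mathbf{E}$ displayed in \eqref{EqEFIMBPSK}.

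I expect the main obstacle to be a modeling step rather than the algebra: one must argue that the phase observations map onto the Doppler/data coordinates alone, so that the delay information resides entirely in the pulse-position block. The constant offset $-2\pi f_{\rm c}\tau_{l0}$ in $\phi^{\kappa}_{l,{\rm bpsk}}=\phi^{\kappa}_{l,{\rm bp}}-2\pi f_{\rm c}\tau_{l0}$ is $\kappa$-independent, so its Doppler and data-phase derivatives vanish and it cannot contaminate the $\mathfrak{b}$-weighted block; combined with the orthogonality $\boldsymbol{\Lambda}_{\boldsymbol{\tau},\boldsymbol{\phi}}=\mathbf{0}$ from the first step, this is what forces the zero off-diagonals between the delay block and the joint Doppler/data block. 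Once this is secured, the equality of the $\varphi_{\rm bpsk}$ column $\mathbf{E}$ and the first column of $\mathbf{H}$ (the $f_{{\rm d}1}$ column) is exposed: the two coordinates enter $\mathbf{J}_{\rm bpsk}$ identically, which is the mechanism behind the singularity to be analyzed in the subsequent remark.
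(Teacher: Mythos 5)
Your proposal is correct and follows essentially the same route as the paper's own proof: it builds the observation-domain FIM with entries identical to the sensing case (including the vanishing $\boldsymbol{\Lambda}_{\boldsymbol{\tau},\boldsymbol{\phi}}$ and $\boldsymbol{\Lambda}_{\boldsymbol{\phi},\tilde{\boldsymbol{\alpha}}}$ cross-blocks), forms the Jacobian whose phase rows are $2\pi\kappa T_{\text f}[\,\mathbf{H}\mid\mathbf{E}\,]$ (the paper's $\mathbf{L}_{\rm N}$ and $\mathbf{E}_{\rm N}$), and collapses the PRI sum to $\mathfrak{b}$ in the congruence $\mathbf{J}_{\rm bpsk}^{\rm T}\mathbf{I}_{\boldsymbol{\eta}_{\rm bpsk}}\mathbf{J}_{\rm bpsk}$. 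The only cosmetic difference is that the paper first writes the full product with $\mathfrak{a}$-weighted off-diagonal blocks and lets them vanish via $\boldsymbol{\Lambda}_{\boldsymbol{\tau},\boldsymbol{\phi}^{0}}=\mathbf{0}$, whereas you argue those zeros up front.
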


\begin{Proof}
See Appendix \ref{AppendixProEFIMBPSK}.
\end{Proof}

\begin{remark}

A similar phenomenon, as noted in Remark 1, can be observed.
The numerical values in the $(L+1)$-th and $(2L+1)$-th column of the matrix ${{\bf{I}}_{\boldsymbol \eta_{\rm bpsk}}}$ are consistent, indicating that
the FIM $\mathbf{I}_{\boldsymbol\theta_{\rm bpsk}}$ to be {\it singular}. Consequently, the CRLBs for the estimated parameters in $\boldsymbol \theta_{\rm bpsk}$ cannot be directly obtained.
Observe that the $(L+1)$-th column corresponds to the vector associated with the Doppler shift $f_{\rm d1}$, while the $(2L+1)$-th column
\vspace{-5pt}
\begin{equation}  \nonumber
\bf{I}_{\rm bpskd} = \big[ \: {\bf {0}},  \:  \mathfrak{b}{{{\bf{H}}^{\rm{T}}}{{\bf{\Lambda }}_{\boldsymbol \phi^{0},  \boldsymbol \phi^{0}}}{\bf{H}}},  \: \mathfrak{a} {{{\bf{E}}^{\rm{T}}}{{\bf{\Lambda }}_{\boldsymbol \phi^{0}, \boldsymbol \phi^{0} }}}{\bf{H}},  \:  {\bf 0} \: \big]^{\rm T}
\end{equation}
represents the vector realted to the data $\varphi _{\rm{bpsk}}$.
The singularity in $\mathbf{I}_{\boldsymbol\theta_{\rm bpsk}}$ arises from the coupling between the target's Doppler shift
$f_{\rm dl}$ and the phase component $\varphi _{\rm{bpsk}}$ related to the data, as depicted in \eqref{EqRXBPSK}.
In other words, we can only obtain the value of $\varphi _{\rm{bpsk}}+f_{\rm dl}$ without being able to separate their individual values.
Therefore, effective decoupling methods are required for BPSK ISAC scenarios.
\end{remark}

\section{Decoupling Strategy and a Case Study with Resource Allocation}     \label{SecDecoupling}

\subsection{Pilot-based Decoupling}   \label{Sec PilotDecoupPilot}

A widely adopted method for decoupling sensing and communication parameters involves using pilots without embedded data, as illustrated in Fig.~\ref{FigPilotFrameStructure}. Pilots facilitate this decoupling by providing signal transmission delays or the received signal phase without data modulation ({\it pulse position or phase synchronization}). Meanwhile, the start frame delimiter (SFD) indicates the beginning of data transmission \cite{UWB4z}. We introduce a typical frame structure consisting of $P$ pilots and $D$ data, transmitted with UWB pulses. The following section presents the theoretical framework for data demodulation, based on parameter estimation theory.
\begin{figure}
  \centering
  \includegraphics[width = 0.5\columnwidth] {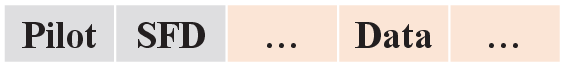}
  \vspace{-5pt}
  \caption{Pilot-based frame structure.}   \label{FigPilotFrameStructure}
  \vspace{-10pt}
\end{figure}

\subsubsection{PPM Case}   \label{SecPPMPilot}
\begin{figure*} [b]
\vspace{-8pt}
\hrulefill
\vspace{-3pt}
\begin{equation}              \label{EqEFIMDiffSeq}
{{\bf{I}}_{{\boldsymbol \varpi_{\rm diff}}}} = \mathbf {P}_{\rm diff}^{\text T} {{\bf{I}}_{\boldsymbol \eta_{\rm diff}}} \mathbf {P}_{\rm diff}
= \left[
\begin{matrix}
  {{\bf{\Lambda }}_{{\boldsymbol t^{\rm ref}}, {\boldsymbol t^{\rm ref}}}} + {{\bf{\Lambda }}_{{\boldsymbol t^0}, {\boldsymbol t^0}}}     \hfill
& {{\bf{\Lambda }}_{{\boldsymbol t^{\rm 0}}, {\boldsymbol t^0}}}       \hfill
& \fcolorbox{red}{white}{${{\bf{\Lambda }}_{{\boldsymbol t^{\rm ref}}, {\boldsymbol t^{\rm ref}}}}$}          \hfill
& {\bf{0}}           \hfill
& {\bf{0}}           \hfill
& {\bf{0}}           \hfill           \\

  {{\bf{\Lambda }}_{{\boldsymbol t^0}, {\boldsymbol t^{\rm 0}}}}      \hfill
& {{\bf{\Lambda }}_{{\boldsymbol t^0}, {\boldsymbol t^0}}}            \hfill
& -{{\bf{\Lambda }}_{{\boldsymbol t^0}, {\boldsymbol t^0}}}           \hfill
& {\bf{0}}         \hfill
& {\bf{0}}         \hfill
& {{{\bf{\Lambda }}_{{\boldsymbol t^{\rm 0}}, {\tilde {\boldsymbol \alpha}} }}}  \hfill     \\

\fcolorbox{red}{white}{${{\bf{\Lambda }}_{{\boldsymbol t^{\rm ref}}, {\boldsymbol t^{\rm ref}}}}$}       \hfill
& {\bf{0}}                \hfill
& {{\bf{\Lambda }}_{{\boldsymbol t^{\rm ref}}, {\boldsymbol t^{\rm ref}}}} + {{\bf{\Lambda }}_{{\boldsymbol t^1}, {\boldsymbol t^1}}}     \hfill
& {{\bf{\Lambda }}_{{\boldsymbol t^{\rm 1}}, {\boldsymbol t^{\rm 1}}}}         \hfill
& {\bf{0}}                \hfill
& {\bf{0}}                \hfill        \\

  {\bf{0}}      \hfill
& {\bf{0}}    \hfill
& {{\bf{\Lambda }}_{{\boldsymbol t^{\rm 1}}, {\boldsymbol t^{\rm 1}}}}           \hfill
& {{\bf{\Lambda }}_{{\boldsymbol t^{\rm 1}}, {\boldsymbol t^{\rm 1}}}}           \hfill
& {\bf{0}}    \hfill
& {{{\bf{\Lambda }}_{{\boldsymbol t^{\rm 1}}, {\tilde {\boldsymbol \alpha}} }}}   \hfill       \\

  {\bf{0}}      \hfill
& {\bf{0}}      \hfill
& {\bf{0}}           \hfill
& {\bf{0}}           \hfill
& {{{\bf{\Lambda }}_{{\boldsymbol \phi}, {\boldsymbol \phi}}}}                   \hfill
& {\bf{0}}         \hfill  \\

  {\bf{0}}       \hfill
& {{{\bf{\Lambda }}_{\tilde {\boldsymbol \alpha}, {\boldsymbol t^{\rm 0}} }}}     \hfill
& {\bf{0}}       \hfill
& {{{\bf{\Lambda }}_{\tilde {\boldsymbol \alpha}, {\boldsymbol t^{1}}}}}         \hfill
& {\bf{0}}      \hfill
& {{{\bf{\Lambda }}_{\tilde {\boldsymbol \alpha}, \tilde{ \boldsymbol \alpha}}}} \hfill       \\
\end{matrix}  \right].
\end{equation}
\vspace{-10pt}
\end{figure*}

In the pilot-based decoupling ISAC UWB system with PPM schemes, the observation parameters vector is given by
\vspace{-10pt}
\begin{equation}           \label{EqPilotParameterPPM}
{\boldsymbol \eta_{\rm ppm,p}} = {\Big [\boldsymbol \tau _{\rm{p,1}}^{\rm{T}},\boldsymbol \tau _{\rm{d,1}}^{\rm{T}}, \boldsymbol \phi _{\rm{p,1}}^{\rm{T}},
\boldsymbol \phi _{\rm{d,1}}^{\rm{T}}, \tilde {\boldsymbol \alpha}_{\rm{p,1}}^{\rm{T}}, \tilde {\boldsymbol \alpha}_{\rm{d,1}}^{\rm{T}} \Big]^{\rm{T}}},
\end{equation}
where $\boldsymbol{\tau}_{\rm p,1}=\boldsymbol{\tau}$ and $\boldsymbol{\tau}_{\rm d,1} = \boldsymbol{\tau}_{\rm ppm}$
represent the time-delay obtained from the pilot portion and the data portion, respectively,
$\boldsymbol \phi _{\rm{p,1}} = {\left[ {\boldsymbol \phi^0, \cdots , \boldsymbol \phi^\kappa , \cdots ,
\boldsymbol \phi^{{P} - 1}} \right]^{\rm{T}}}$  and
$\boldsymbol \phi _{\rm{d,1}} = {\left[ {\boldsymbol \phi^{P}, \cdots, \boldsymbol \phi^{{P+D} - 1}} \right]^{\rm{T}}}$
represent the phase obtained from the pilot portion and the data portion, respectively,
$\tilde {\boldsymbol \alpha}_{\rm{p,1}} = \tilde {\boldsymbol \alpha}_{\rm{b,1}} = \tilde {\boldsymbol{\alpha}}$, and they are terms related to amplitude for the pilot portion and the data portion.

\begin{proposition}     \label{PropoEFIMPPMpilot}

The FIM of ${\boldsymbol \theta}_{\rm ppm}$ with pilot-based decoupling can be expressed as
\begin{equation}        \label{EqEFIMPPMPilot}
\begin{aligned}
&  {{\bf I}_{{\boldsymbol \theta}_{\rm ppm,p}}}
 = {\bf J}_{{\rm{ppm,p}}}^{\rm{T}}{{\bf{I}}_{\boldsymbol \eta_{\rm ppm,p}}}{{\bf J}_{{\rm{ppm,p}}}}      \\
& \!= \!{\left[\!\!\! {\begin{array}{*{20}{c}}
  {{{\bf{H}}^{\rm{T}}}{{\bf{\Lambda }}_{\boldsymbol \tau, \boldsymbol \tau }^{\rm PD}}{\bf{H}}}
&  {{{\bf{H}}^{\rm{T}}}{{\bf{\Lambda }}_{\boldsymbol \tau, \boldsymbol \tau }^{\rm D}}{\bf{E}}}
& {\bf 0}
& {{{\bf{H}}^{\rm{T}}}{{\bf{\Lambda }}_{\boldsymbol \tau, {\tilde {\boldsymbol \alpha}}}^{\rm PD}}}    \\
  {{{\bf{E}}^{\rm{T}}}{{{\bf{\Lambda }}_{\boldsymbol \tau,{\boldsymbol \tau}}^{\rm D}}}{\bf{H}}}
&  {{{\bf{E}}^{\rm{T}}}{{\bf{\Lambda }}_{\boldsymbol \tau,{\boldsymbol \tau}}^{\rm D}}{\bf{E}}}
& {\bf 0}
& {{{\bf{E}}^{\rm{T}}}{{\bf{\Lambda }}_{\boldsymbol \tau, {\tilde {\boldsymbol \alpha}}}^{\rm D}}}   \\
  {\bf 0}
& {\bf 0}
& \mathfrak{b}^{\rm PD}{{{\bf{H}}^{\rm{T}}}{{\bf{\Lambda }}_{\boldsymbol \phi^{0},  \boldsymbol \phi^{0}}}{\bf{H}}}
& {\bf 0}                                                        \\
  {{\bf{\Lambda }}_{{\tilde {\boldsymbol \alpha}}, \boldsymbol \tau}^{\rm PD}} {\bf{H}}
& {{\bf{\Lambda }}_{{\tilde {\boldsymbol \alpha}}, \boldsymbol \tau}^{\rm D}} {\bf{E}}
& {\bf 0}
& {{\bf{\Lambda }}_{{\tilde {\boldsymbol \alpha}}, {\tilde {\boldsymbol \alpha}}}^{\rm PD}}
\end{array}} \!\!\! \right]} \!  ,                                \\
\end{aligned}
\end{equation}
where ${{\bf{I}}_{\boldsymbol \eta_{\rm ppm,p}}}$ is the FIM of $\boldsymbol \eta_{\rm ppm,p}$, $\mathbf {J}_{\rm ppm,p}$ denotes the Jacobian matrix that maps ${\boldsymbol \eta_{\rm ppm,p}}$ to ${{\boldsymbol \theta}_{\rm ppm}}$,
$\mathfrak{b}^{\rm PD}{=} \mathfrak{b}|_{{N_{\text f}} = {{P + D}}} $,
${{\bf{\Lambda }}_{\boldsymbol \tau, \boldsymbol \tau}^{\rm PD}}
{ = }{{\bf{\Lambda }}_{\boldsymbol \tau, \boldsymbol \tau}|_{{N_{\text f}} = {P + D}}}$,
${{\bf{\Lambda }}_{\boldsymbol \tau, \boldsymbol \tau}^{\rm D}}
{ = }{{\bf{\Lambda }}_{\boldsymbol \tau, \boldsymbol \tau}|_{{N_{\text f}} = {D}}}$, they are elements of the FIM evaluated at different number of PRIs, with similar interpretations for other elements in the FIM.

In this case, the numerical values of the first and the $(L+1)$-th columns of the FIM  ${{\bf I}_{{\boldsymbol \theta}_{\rm ppm,p}}}$ are not consistent,
indicating that the FIM is non-singular. This non-singularity allows for the effective decoupling of data information from sensing information.

\end{proposition}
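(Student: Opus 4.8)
The plan is to derive $\mathbf{I}_{\boldsymbol\theta_{\rm ppm,p}}$ through the same congruence transformation used in Proposition~\ref{PropoEFIMPPM}, namely $\mathbf{I}_{\boldsymbol\theta_{\rm ppm,p}}=\mathbf{J}_{\rm ppm,p}^{\rm T}\mathbf{I}_{\boldsymbol\eta_{\rm ppm,p}}\mathbf{J}_{\rm ppm,p}$, but now starting from the enlarged observation vector $\boldsymbol\eta_{\rm ppm,p}$ in \eqref{EqPilotParameterPPM} that separates the pilot and data portions. First I would write $\mathbf{I}_{\boldsymbol\eta_{\rm ppm,p}}$ from the element formula \eqref{Eqelement eta}: because the pilot and data pulses occupy disjoint PRIs, the time-delay, phase, and amplitude sub-blocks decompose along the pilot/data split, and the numerical entries of every $\boldsymbol\Lambda$-block coincide with those of $\mathbf{I}_{\boldsymbol\eta_{\rm s}}$, since both portions illuminate the same $L$ targets with the same per-pulse energy.

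Second, I would build the Jacobian $\mathbf{J}_{\rm ppm,p}$ that records how the estimated parameters $\boldsymbol\theta_{\rm ppm}$ of \eqref{Eq theta ppm} enter each observation slot. The decisive asymmetry is that $\tau_1$ and $\Delta\boldsymbol\tau$ appear in \emph{both} the pilot delays $\boldsymbol\tau_{\rm p,1}=\boldsymbol\tau$ and the data delays $\boldsymbol\tau_{\rm d,1}=\boldsymbol\tau_{\rm ppm}=\boldsymbol\tau+\Delta\tau_{\rm q}$ (each contributing the $\mathbf H$ structure of Proposition~\ref{PropoEFIMPPM}), whereas the modulation offset $\Delta\tau_{\rm q}$ appears in \emph{only} the data delays (contributing the all-ones column $\mathbf E$ restricted to the data block). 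This is precisely the feature absent in the fully coupled case of Proposition~\ref{PropoEFIMPPM}, where $\tau_1$ and $\Delta\tau_{\rm q}$ were observed through identical sensitivities.

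Third, I would carry out the block multiplication $\mathbf{J}_{\rm ppm,p}^{\rm T}\mathbf{I}_{\boldsymbol\eta_{\rm ppm,p}}\mathbf{J}_{\rm ppm,p}$. The essential bookkeeping rule is that any block contracting two parameters that are both sensed across all $P+D$ pulses (for instance the $(\tau_1,\tau_1)$ entry) accumulates Fisher information over every PRI and hence carries $\boldsymbol\Lambda^{\rm PD}$, i.e.\ $\boldsymbol\Lambda$ evaluated at $N_{\rm f}=P+D$; any block involving $\Delta\tau_{\rm q}$ accumulates over the $D$ data pulses alone and hence carries $\boldsymbol\Lambda^{\rm D}=\boldsymbol\Lambda|_{N_{\rm f}=D}$. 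Propagating these two superscripts through each entry reproduces \eqref{EqEFIMPPMPilot}, while the Doppler block $\mathfrak{b}^{\rm PD}\mathbf{H}^{\rm T}\boldsymbol\Lambda_{\boldsymbol\phi^{0},\boldsymbol\phi^{0}}\mathbf{H}$ follows from the same $\sum_\kappa \kappa^2$ summation that produced $\mathfrak{b}$ in Corollary~\ref{CorolEFIMSense}, now run to $P+D$.

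Finally, for the non-singularity claim I would compare the first column (associated with $\tau_1$) against the $(L+1)$-th column (associated with $\Delta\tau_{\rm q}$). In Proposition~\ref{PropoEFIMPPM} these two columns coincided, forcing singularity; here the $\tau_1$ column carries $\boldsymbol\Lambda^{\rm PD}$ whereas the $\Delta\tau_{\rm q}$ column carries $\boldsymbol\Lambda^{\rm D}$, and since $P\ge 1$ makes $\boldsymbol\Lambda^{\rm PD}\neq\boldsymbol\Lambda^{\rm D}$, the two columns are no longer proportional and the degeneracy is broken. I expect the main obstacle to lie in this third step: correctly attributing each cross-term to the $\mathrm{PD}$ or $\mathrm{D}$ pulse count—especially verifying that the off-diagonal $\tau$--$\Delta\tau_{\rm q}$ coupling retains only the data contribution—since a single misassigned superscript would spuriously reinstate the singularity.
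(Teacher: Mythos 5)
Your proposal is correct and follows essentially the same route as the paper: the paper's Appendix D likewise forms $\mathbf{I}_{\boldsymbol\eta_{\rm ppm,p}}$ from the pilot/data-split observation vector, writes the Jacobian with exactly the asymmetry you identify (the pilot-delay row carrying $[\mathbf{H},\,\mathbf{0}]$ and the data-delay row carrying $[\mathbf{H},\,\mathbf{E}]$ in the $(\tau,\Delta\tau_{\rm q})$ columns), and obtains \eqref{EqEFIMPPMPilot} by block multiplication, with the $\boldsymbol\Lambda^{\rm PD}$ versus $\boldsymbol\Lambda^{\rm D}$ superscripts arising from which PRIs each parameter is observed over. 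Your third step is in fact more explicit than the paper's, which simply states that the result follows from the FIM and the Jacobian; your bookkeeping (including the implicit additivity $\boldsymbol\Lambda^{\rm P}+\boldsymbol\Lambda^{\rm D}=\boldsymbol\Lambda^{\rm PD}$, valid since each $\boldsymbol\Lambda$ block is linear in the pulse count) is consistent with it.
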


\begin{Proof}
See Appendix \ref{AppendixProEFIMPPMPilot}.
\end{Proof}

\begin{remark}
Compared to \eqref{EqEFIMSense}, \eqref{EqEFIMPPMPilot} introduces some submatrices related to the data (submatrices with $\mathbf{E}$). According to {\it Definition \ref{DefEFIM}}, the presence of these data-related submatrices reduces the values of the information matrix associated with the sensing parameters, which further indicates that the presence of data can result in a degradation of sensing performance.
\end{remark}

Based on \eqref{EqEFIMPPMAp}, the EFIM of the communication parameter $\Delta \tau_{\rm q}$ is given by\footnote{EFIM ${{\bf{I}}_{\Delta \tau_{\rm q}}}$ represents the information content associated with the communication data portion in the proposed ISAC system, including both the communication capacity and the information used for data demodulation.}
\begin{equation}       \label{EqEFIMCommun}
{{\bf{I}}_{\Delta \tau_{\rm q}}} = {\bf A}_{\rm q} - {\bf B}_{\rm q}^{\rm T}{{\bf C}_{\rm q}^{ - 1}}{\bf B}_{\rm q},
\end{equation}
where ${\bf A}_{\rm q} = {{{\bf{E}}^{\rm{T}}}{{\bf{\Lambda }}_{\boldsymbol \tau,{\boldsymbol \tau}}^{\rm D}}{\bf{E}}}$,
${\bf B}_{\rm q} = \big[\: {{{\bf{H}}^{\rm{T}}}{{\bf{\Lambda }}_{\boldsymbol \tau, \:  \boldsymbol \tau}^{\rm D}}{\bf{E}}}, \:  {\bf 0},  \:
{{\bf{\Lambda }}_{{\tilde {\boldsymbol \alpha}}, \:  \boldsymbol \tau}^{\rm D}} {\bf{E}} \:  \big ]^{\text T}$, and
\begin{equation}        \label{EqEFIMCommunCC}
\begin{aligned}
{{\bf C}_{\rm q}}
 = {\left[ {\begin{array}{*{20}{c}}
  {{{\bf{H}}^{\rm{T}}}{{\bf{\Lambda }}_{\boldsymbol \tau, \boldsymbol \tau}^{\rm PD}}{\bf{H}}}
&  {\bf 0}
& {{{\bf{H}}^{\rm{T}}}{{\bf{\Lambda }}_{\boldsymbol \tau}, {\tilde {\boldsymbol \alpha}}}^{\rm PD}}    \\
  {\bf 0}
& \mathfrak{b}^{\rm PD}{{{\bf{H}}^{\rm{T}}}{{\bf{\Lambda }}_{\boldsymbol \phi^{0},  \boldsymbol \phi^{0}}}{\bf{H}}}
& {\bf 0}                         \\
  {{\bf{\Lambda }}_{{\tilde {\boldsymbol \alpha}}, \boldsymbol \tau}^{\rm PD}} {\bf{H}}
& {\bf 0}
& {{\bf{\Lambda }}_{{\tilde {\boldsymbol \alpha}}, {\tilde {\boldsymbol \alpha}}}^{\rm PD}}
\end{array}} \right].}
\end{aligned}
\end{equation}

\subsubsection{BPSK Case}

In the pilot-based decoupling ISAC UWB system with a BPSK modulation scheme, the observation parameters vector can be expressed as
\begin{equation}           \label{EqPilotParameterBPSK}
{\boldsymbol \eta_{\rm bpsk, p}} = {\Big[\boldsymbol \tau _{\rm{p,2}}^{\rm{T}},\boldsymbol \tau _{\rm{d,2}}^{\rm{T}}, \boldsymbol \phi _{\rm{p,2}}^{\rm{T}},
\boldsymbol \phi _{\rm{d,2}}^{\rm{T}}, \tilde {\boldsymbol \alpha}_{\rm{p,2}}^{\rm{T}}, \tilde {\boldsymbol \alpha}_{\rm{d,2}}^{\rm{T}} \Big]^{\rm{T}}},
\end{equation}
where $\boldsymbol{\tau}_{\rm p,2} = \boldsymbol{\tau}_{\rm d,2} = \boldsymbol{\tau}$,
$\boldsymbol{\tau}_{\rm p,2}$ and $\boldsymbol{\tau}_{\rm d,2}$ correspond to the time-delay obtained from the pilot portion and the data portion, respectively,
$\boldsymbol \phi _{\rm{p,2}} = {\big[ {\boldsymbol \phi^0, \cdots , \boldsymbol \phi ^\kappa , \cdots , \boldsymbol \phi^{{P} - 1}} \big]^{\rm{T}}}$  and
$\boldsymbol \phi _{\rm{d,2}} = {\big[ {\boldsymbol \phi _{\rm{bpsk}}^{P}, \cdots, \boldsymbol \phi _{\rm{bpsk}}^{{P+D} - 1}} \big]^{\rm{T}}}$
correspond to the phase observed from the pilot portion and the data portion, respectively,
$\tilde {\boldsymbol \alpha}_{\rm{p,2}} = \tilde {\boldsymbol \alpha}_{\rm{b,2}} = \tilde {\boldsymbol{\alpha}}$, and they are terms related to the amplitude of the received signals.

\begin{proposition}   \label{PropoEFIMBPSKpilot}

The FIM of ${\boldsymbol \theta}_{\rm bpsk}$ with pilot-based decoupling can be expressed as
{\setlength{\arraycolsep}{1pt}
\begin{equation}        \label{EqEFIMBPSKPilot}
\begin{aligned}
&  {{\bf I}_{{\boldsymbol \theta}_{\rm bpsk,p}}}
= {\bf J}_{{\rm{bpsk,p}}}^{\rm{T}}{{\bf{I}}_{\boldsymbol \eta_{\rm bpsk,p}}}{{\bf J}_{{\rm{bpsk,p}}}}      \\
& = \!\!\!  {\left[ {\begin{array}{*{20}{c}}
  {{{\bf{H}}^{\rm{T}}}{{\bf{\Lambda }}_{\boldsymbol \tau, \boldsymbol \tau}^{\rm PD}}{\bf{H}}}
& {\bf 0}
& {\bf 0}
& {{{\bf{H}}^{\rm{T}}}{{\bf{\Lambda }}_{\boldsymbol \tau, {\tilde {\boldsymbol \alpha}}}^{\rm PD}}}    \\
  {\bf 0}
& \mathfrak{b}^{\rm PD}{{{\bf{H}}^{\rm{T}}}{{\bf{\Lambda }}_{\boldsymbol \phi^{0},  \boldsymbol \phi^{0}}}{\bf{H}}}
& \mathfrak{a}^{\rm D} {{{\bf{H}}^{\rm{T}}}{{\bf{\Lambda }}_{\boldsymbol \phi^{0}, \boldsymbol \phi^{0}}}{\bf{E}}}
& {\bf 0}     \\
  {\bf 0}
& \mathfrak{a}^{\rm D} {{{\bf{E}}^{\rm{T}}}{{\bf{\Lambda }}_{\boldsymbol \phi^{0}, \boldsymbol \phi^{0} }}}{\bf{H}}
& \mathfrak{b}^{\rm D} {{{\bf{E}}^{\rm{T}}}{{\bf{\Lambda }}_{\boldsymbol \phi^{0},{\boldsymbol \phi^{0}}}}{\bf{E}}}
& {\bf 0}     \\
  {{\bf{\Lambda }}_{{\tilde {\boldsymbol \alpha}}, \boldsymbol \tau}^{\rm PD}} {\bf{H}}
& {\bf 0}
& {\bf 0}
& {{\bf{\Lambda }}_{{\tilde {\boldsymbol \alpha}}, {\tilde {\boldsymbol \alpha}}}^{\rm PD}}
\end{array}} \right]} \!,
\end{aligned}
\end{equation}}
where ${{\bf{I}}_{\boldsymbol \eta_{\rm bpsk,p}}}$ is the FIM of $\boldsymbol \eta_{\rm bpsk,p}$, $\mathbf {J}_{\rm bpsk,p}$ is the Jacobian matrix which maps the observation parameters vector ${\boldsymbol {\eta}_{\rm bpsk,p}}$ to interested parameters vector $\boldsymbol\theta_{\rm bpsk}$,
\begin{equation}  \nonumber
\begin{aligned}
\mathfrak{a}^{\rm D}& =\sum\limits_{\kappa  = P}^{{P+D} - 1} {-{\rm j}2\pi \kappa {T_{\text f}}}
= -{\rm j} \pi {T_{\text f}}{D}\left( {{2P+D} - 1} \right),   \\
\mathfrak{b}^{\rm D} & = \sum\limits_{\kappa  = P}^{{P+D} - 1} {(-{\rm j}2\pi \kappa {T_{\text f}})} {(-{\rm j}2\pi \kappa {T_{\text f}})}  \\
& = {{2 {\pi}^2 {T_{\text f}^2}} \over 3 }({(P+D-1)}(P+D)(2(P+D)-1)-  \\
& {P(P-1){(2P-1)}}).
\end{aligned}
\end{equation}

\end{proposition}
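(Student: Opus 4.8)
The plan is to establish the claimed block form as a congruence transformation $\mathbf{I}_{\boldsymbol\theta_{\rm bpsk,p}} = \mathbf{J}_{\rm bpsk,p}^{\rm T}\mathbf{I}_{\boldsymbol\eta_{\rm bpsk,p}}\mathbf{J}_{\rm bpsk,p}$, exactly as in the sensing-only and non-pilot BPSK cases, so the work reduces to (i) assembling the observation-level FIM $\mathbf{I}_{\boldsymbol\eta_{\rm bpsk,p}}$ of the vector in \eqref{EqPilotParameterBPSK}, and (ii) writing down the Jacobian $\mathbf{J}_{\rm bpsk,p}$ that maps those observations to $\boldsymbol\theta_{\rm bpsk}$ in \eqref{Eqthetabpsk}. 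First I would compute $\mathbf{I}_{\boldsymbol\eta_{\rm bpsk,p}}$ from the element rule \eqref{Eqelement eta} applied to the mean signal \eqref{EqRxMeanBPSK}. Because the pilot and data symbols occupy disjoint PRIs, and because (as in the sensing-only derivation) the delay, phase, and amplitude derivatives of $\boldsymbol\mu$ are mutually orthogonal in the relevant real-part inner products, this FIM inherits the same sparse block pattern already used for $\mathbf{I}_{\boldsymbol\eta_{\rm s}}$, the only added bookkeeping being which PRIs ($0,\dots,P-1$ for pilots, $P,\dots,P+D-1$ for data) contribute to each entry.

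The second step is the Jacobian, and the decisive structural fact is how each physical parameter enters the observed phases. Since BPSK leaves the pulse position untouched, $\boldsymbol\tau_{\rm p,2} = \boldsymbol\tau_{\rm d,2} = \boldsymbol\tau$, so both the pilot and the data delay observations map onto the same $\tau_1$ and $\Delta\boldsymbol\tau$; this produces the shared $\mathbf{H}$ factors in the top-left delay block and the $\mathbf{\Lambda}^{\rm PD}$ accumulation over all $P+D$ PRIs. For the phase, I would use $\phi^\kappa_{l,\rm bp} = 2\pi\kappa T_{\rm f}(f_{{\rm d}l}+\varphi^\kappa_{\rm bpsk})$ from \eqref{EqRXBPSK}: the Doppler $f_{{\rm d}l}$ contributes a derivative factor $2\pi\kappa T_{\rm f}$ in \emph{every} PRI (pilot and data), while the data phase $\varphi_{\rm bpsk}$ contributes the same factor but \emph{only} in the data PRIs. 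The relation $f_{{\rm d}l}=f_{{\rm d}1}+\Delta f_{{\rm d}l}$ supplies the $\mathbf{H}$ column structure, whereas $\varphi_{\rm bpsk}$ being a single scalar common to all paths supplies the all-ones $\mathbf{E}$ column.

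With these two ingredients in hand, I would carry out the product $\mathbf{J}^{\rm T}\mathbf{I}\mathbf{J}$ block by block. The phase sub-blocks are where the three scalar accumulations appear: summing $(2\pi\kappa T_{\rm f})^2$ over all $P+D$ PRIs yields the factor $\mathfrak{b}^{\rm PD}=\mathfrak{b}|_{N_{\rm f}=P+D}$ multiplying $\mathbf{H}^{\rm T}\mathbf{\Lambda}_{\boldsymbol\phi^0,\boldsymbol\phi^0}\mathbf{H}$ in the $(L+1,L+1)$ Doppler block; the Doppler--$\varphi_{\rm bpsk}$ cross block collects $-{\rm j}2\pi\kappa T_{\rm f}$ over only the data PRIs, giving $\mathfrak{a}^{\rm D}=\sum_{\kappa=P}^{P+D-1}(-{\rm j}2\pi\kappa T_{\rm f})$, and the $\varphi_{\rm bpsk}$ self-block collects the squared factor over the same range, giving $\mathfrak{b}^{\rm D}$. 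The closed forms for $\mathfrak{a}^{\rm D}$ and $\mathfrak{b}^{\rm D}$ then follow from the identities $\sum_{\kappa=0}^{n-1}\kappa=\tfrac{n(n-1)}{2}$ and $\sum_{\kappa=0}^{n-1}\kappa^2=\tfrac{(n-1)n(2n-1)}{6}$ applied to the two ranges $[0,P+D)$ and $[0,P)$ and subtracted.

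The main obstacle I anticipate is the PRI-range bookkeeping rather than any single hard estimate: one must keep the Doppler accumulation running over all $P+D$ periods while restricting both the $\varphi_{\rm bpsk}$ self-term and the Doppler--$\varphi_{\rm bpsk}$ cross-term to the $D$ data periods. It is precisely this asymmetry---Doppler informed by pilots plus data, but $\varphi_{\rm bpsk}$ informed by data alone---that makes the $(L+1)$-th and $(2L+1)$-th columns of \eqref{EqEFIMBPSKPilot} numerically distinct and thereby removes the singularity flagged for the non-pilot case in \eqref{EqEFIMBPSK}; I would verify this distinctness explicitly as the concluding consistency check.
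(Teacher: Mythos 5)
Your proposal follows essentially the same route as the paper's appendix proof of this proposition: assemble the observation-level FIM of $\boldsymbol\eta_{\rm bpsk,p}$ exactly as for $\mathbf{I}_{\boldsymbol\eta_{\rm s}}$, write the Jacobian with repeated $\mathbf{H}$ blocks for the pilot and data delay observations, ${\bf{L}}_{\rm P}$, ${\bf{L}}_{\rm D}$ and ${\bf{E}}_{\rm D}$ for the phase observations, and identity blocks for the amplitudes, then multiply out $\mathbf{J}^{\rm T}\mathbf{I}\mathbf{J}$ with the same pilot/data PRI-range bookkeeping that yields $\mathfrak{b}^{\rm PD}$, $\mathfrak{a}^{\rm D}$ and $\mathfrak{b}^{\rm D}$. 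The only caveat is that your reading of the Doppler--$\varphi_{\rm bpsk}$ cross block as accumulating a single factor $-{\rm j}2\pi\kappa T_{\text f}$ matches the stated $\mathfrak{a}^{\rm D}$ but not the product ${\bf{L}}_{\rm D}^{\rm T}\boldsymbol\Lambda_{\boldsymbol\phi^{\kappa},\boldsymbol\phi^{\kappa}}{\bf{E}}_{\rm D}$, which carries two such factors; this tension is already present in the paper's own statement rather than being a defect of your argument.
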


Here, matrix ${{\bf I}_{{\boldsymbol \theta}_{\rm bpsk,p}}}$ is non-singular, the decoupling of data information from sensing information can be achieved.

\begin{Proof}
See Appendix \ref{AppendixProEFIMBPSKPilot}.
\end{Proof}

\vspace{-12pt}

\subsection{Differential-based Decoupling in the PPM Case}     \label{SecDecouplingDiff}

Based on the PPM modulation principle, the data carried by a pulse can be determined from the time delay differences between the data pulse and SFD, as shown in Fig.~\ref{FigDifferential-based decoupling strategy}. We set the time $t_{1}^{\rm ref}$ as the starting symbol position, which is typically determined by the SFD.
This section presents {\it a differential decoupling strategy without pilots} and develops the theoretical framework for sensing parameter estimation.

\begin{figure}
  \centering
  \includegraphics[width = 0.8\columnwidth]{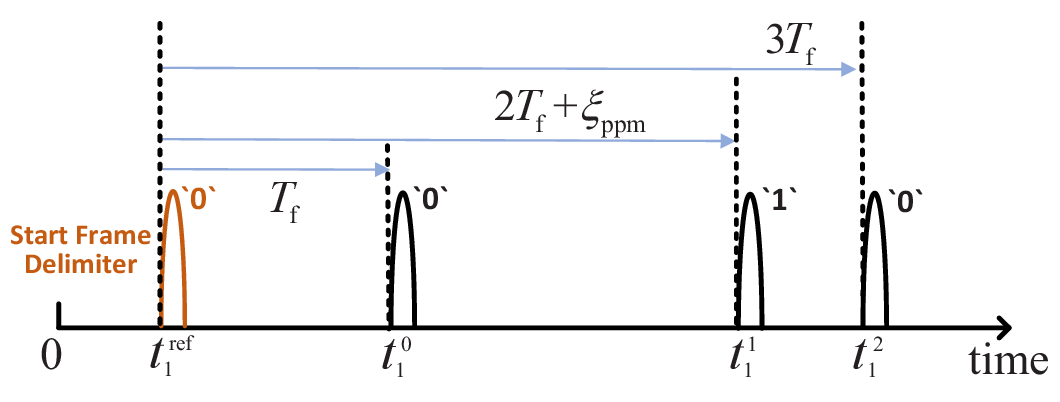}
  \caption{Pulse intervals of the first path in PPM modulation\big(${\xi _{{\rm{ppm}}}} = {T_{\text f}/2}$ \big).}   \label{FigDifferential-based decoupling strategy}
  \vspace{-12pt}
\end{figure}

Firstly, we establish an intermediate parameter vector based on the time differences of pulse arrivals as \eqref{EqDiffSeqVector}.
The data bit can be obtained by the time difference between signal arrival time, such as ${{\boldsymbol t}^0} - {{\boldsymbol t}^{\rm ref}}$, thereby enabling data demodulation.
\begin{equation}         \label{EqDiffSeqVector}
\boldsymbol \varpi_{\rm diff} =
\big[
{{\boldsymbol t}^0} - {{\boldsymbol t}^{\rm ref}},
{{\boldsymbol t}^0}, ...,
{{\boldsymbol t}^{{N_f} - 1}} - {{\boldsymbol t}^{\rm ref}},
{{\boldsymbol t}^{{N_f} - 1}},
{{\boldsymbol \phi}},
{\tilde {\boldsymbol \alpha}}  \big],
\end{equation}
where ${{\boldsymbol t}^{N_{\text f}-1}} = \big[t_1^{N_{\text f}-1}, ..., t_L^{N_{\text f}-1} \big]$, $t_1^{N_{\text f}-1} = \tau_1 + \Delta \tau_{\text q}+(N_{\text f} - 1)T_{\text f}$ is the absolute time of the first path in the ${(N_{\text f}-1)}$-th PRI, ${{\boldsymbol t}^{\rm ref}} = \big[t^{\rm ref}_1, ..., t^{\rm ref}_L \big]$ represents the signal arrival
times of SFD.

Unlike \eqref{EqthetaPPM}, the observation parameters vector should be updated to \eqref{EqObserDiff} accordingly.
\begin{equation}       \label{EqObserDiff}
{\boldsymbol {\eta}_{\rm diff}}={\big[ {\boldsymbol t ^{\rm{T}}},{\boldsymbol \phi}^{\rm{T}},{\tilde {\boldsymbol \alpha}^{\rm{T}}} \big]^{\rm{T}}},
\end{equation}
where $\boldsymbol t = \big[\boldsymbol t^{\rm ref}, \boldsymbol t^{0}, ..., \boldsymbol t^{N_{\text f}-1} \big]$.

\begin{proposition}   \label{PropoEFIMDiff2}
We denote the two Jacobian matrices for transforming from ${\boldsymbol {\eta}_{\rm diff}}$ to $\boldsymbol \varpi_{\rm diff}$ and from $\boldsymbol \varpi_{\rm diff}$ to ${\boldsymbol \theta_{{\rm ppm}}}$ as ${\bf P_{{\rm diff}}}$ and ${\bf J_{{\rm diff}}}$, respectively. Then the FIM of ${\boldsymbol \theta_{{\rm ppm}}}$ under the differential process can be expressed as
\begin{equation}      \label{EqEFIMDiff1}
\begin{aligned}
& {{\bf I}_{{\boldsymbol \theta}_{\rm ppm,d}}} =
{\bf J}_{{\rm{diff}}}^{\rm{T}}{\bf P}_{{\rm{diff}}}^{\rm{T}}{{\bf{I}}_{\boldsymbol \eta_{\rm diff}}}{\bf P}_{{\rm{diff}}}{{\bf J}_{{\rm{diff}}}}      \\
& \! = \!{\left[\!\!\! {\begin{array}{*{20}{c}}
  {{{\bf{H}}^{\rm{T}}}{{\bf{\Lambda }}_{\boldsymbol \tau, \boldsymbol \tau }}{\bf{H}}}
& {{{\bf{H}}^{\rm{T}}}{{\bf{\Lambda }}_{\boldsymbol \tau, \boldsymbol \tau }^{2N_{\text f}}}{\bf{E}}}
& {\bf 0}
& {{{\bf{H}}^{\rm{T}}}{{\bf{\Lambda }}_{\boldsymbol \tau, {\tilde {\boldsymbol \alpha}}}}}    \\
  {{{\bf{E}}^{\rm{T}}}{{{\bf{\Lambda }}_{\boldsymbol \tau,{\boldsymbol \tau}}^{2N_{\text f}}}}{\bf{H}}}
& {{{\bf{E}}^{\rm{T}}}{{\bf{\Lambda }}_{\boldsymbol \tau,{\boldsymbol \tau}}^{5N_{\text f}}}{\bf{E}}}
& {\bf 0}
& {{{\bf{E}}^{\rm{T}}}{{\bf{\Lambda }}_{\boldsymbol \tau, {\tilde {\boldsymbol \alpha}}}}}   \\
  {\bf 0}
& {\bf 0}
& \mathfrak{b}{{{\bf{H}}^{\rm{T}}}{{\bf{\Lambda }}_{\boldsymbol \phi^{0},  \boldsymbol \phi^{0}}}{\bf{H}}}
& {\bf 0}                                                \\
  {{\bf{\Lambda }}_{{\tilde {\boldsymbol \alpha}}, \boldsymbol \tau}} {\bf{H}}
& {{\bf{\Lambda }}_{{\tilde {\boldsymbol \alpha}}, \boldsymbol \tau}} {\bf{E}}
& {\bf 0}
& {{\bf{\Lambda }}_{{\tilde {\boldsymbol \alpha}}, {\tilde {\boldsymbol \alpha}}}}
\end{array}} \!\!\! \right]} \!  ,                       \\
\end{aligned}
\end{equation}
where ${{\bf{I}}_{\boldsymbol \eta_{\rm diff}}}$ is the FIM of ${\boldsymbol {\eta}_{\rm diff}}$, similar to ${{\bf{I}}_{\boldsymbol \eta_{\rm s}}}$.
Consequently, ${{\bf I}_{{\boldsymbol \theta}_{\rm ppm,p}}}$ becomes non-singular, which allows for the effective decoupling of sensing and communication parameters.
\end{proposition}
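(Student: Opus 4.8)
The plan is to obtain $\mathbf{I}_{\boldsymbol\theta_{\rm ppm,d}}$ by composing the two linear reparametrizations $\boldsymbol\eta_{\rm diff}\to\boldsymbol\varpi_{\rm diff}\to\boldsymbol\theta_{\rm ppm}$ and invoking the chain rule for Fisher information, i.e. the identity $\mathbf{I}_{\boldsymbol\theta_{\rm ppm,d}}=\mathbf{J}_{\rm diff}^{\rm T}\mathbf{P}_{\rm diff}^{\rm T}\mathbf{I}_{\boldsymbol\eta_{\rm diff}}\mathbf{P}_{\rm diff}\mathbf{J}_{\rm diff}$ stated in the proposition. First I would write down $\mathbf{I}_{\boldsymbol\eta_{\rm diff}}$ for the augmented observation vector \eqref{EqObserDiff}. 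Because the reference (SFD) pulse and the $N_{\rm f}$ data pulses occupy disjoint time supports within the observation window and the noise is white, the delay part of $\mathbf{I}_{\boldsymbol\eta_{\rm diff}}$ is block-diagonal across the time indices $\{{\rm ref},0,\dots,N_{\rm f}-1\}$, each diagonal block being a single-pulse delay-information block structurally identical to $\boldsymbol\Lambda_{\boldsymbol\tau,\boldsymbol\tau}$ of Corollary~\ref{CorolEFIMSense}; the phase and amplitude blocks are inherited verbatim from the sensing-only FIM, and the Doppler block stays decoupled from the delay blocks since PPM modulates delay only. This reduces the first task to pure linear algebra.

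Next I would construct $\mathbf{P}_{\rm diff}$, the Jacobian of the differential coordinates \eqref{EqDiffSeqVector}. The essential feature is that the single reference time $\boldsymbol t^{\rm ref}$ is subtracted in \emph{every} difference $\boldsymbol t^\kappa-\boldsymbol t^{\rm ref}$, so its information is shared across all differences. Carrying out $\mathbf{P}_{\rm diff}^{\rm T}\mathbf{I}_{\boldsymbol\eta_{\rm diff}}\mathbf{P}_{\rm diff}$ then produces the diagonal difference-blocks $\boldsymbol\Lambda_{\boldsymbol t^{\rm ref},\boldsymbol t^{\rm ref}}+\boldsymbol\Lambda_{\boldsymbol t^\kappa,\boldsymbol t^\kappa}$ together with the off-diagonal coupling blocks $\boldsymbol\Lambda_{\boldsymbol t^{\rm ref},\boldsymbol t^{\rm ref}}$ (the highlighted terms) that tie the distinct difference coordinates through the common reference, reproducing \eqref{EqEFIMDiffSeq}.

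I would then apply the second Jacobian $\mathbf{J}_{\rm diff}$, which maps the intermediate coordinates to the physical parameters $[\tau_1,\Delta\boldsymbol\tau,\Delta\tau_{\rm q},\dots]$. Here $\tau_1$ and the inter-path offsets enter through $\mathbf{H}$ and feed only the absolute-time coordinates, whereas the modulation shift $\Delta\tau_{\rm q}$ enters through $\mathbf{E}$ and feeds \emph{both} the absolute and the differential coordinates; summing the per-PRI contributions assembles the three distinct delay blocks $\mathbf{H}^{\rm T}\boldsymbol\Lambda_{\boldsymbol\tau,\boldsymbol\tau}\mathbf{H}$, $\mathbf{H}^{\rm T}\boldsymbol\Lambda_{\boldsymbol\tau,\boldsymbol\tau}^{2N_{\rm f}}\mathbf{E}$ and $\mathbf{E}^{\rm T}\boldsymbol\Lambda_{\boldsymbol\tau,\boldsymbol\tau}^{5N_{\rm f}}\mathbf{E}$ with their different effective PRI multiplicities, giving \eqref{EqEFIMDiff1}. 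For the non-singularity claim I would compare with Proposition~\ref{PropoEFIMPPM}: there the $\tau_1$-column and the $\Delta\tau_{\rm q}$-column coincided, forcing a zero eigenvalue, whereas now the unequal multiplicities ($N_{\rm f}$ versus $2N_{\rm f}$ versus $5N_{\rm f}$) make the first and $(L+1)$-th columns linearly independent, so the degeneracy is removed and the EFIM can be inverted to yield finite CRLBs.

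The main obstacle I anticipate is the bookkeeping in the two-stage composition—propagating the shared-reference coupling through $\mathbf{P}_{\rm diff}$ and then through $\mathbf{J}_{\rm diff}$ without sign or counting errors, since the intermediate representation $\boldsymbol\varpi_{\rm diff}$ is deliberately over-complete (it carries each difference \emph{and} each absolute time, making $\mathbf{I}_{\boldsymbol\varpi_{\rm diff}}$ itself rank-deficient). Getting the effective multiplicities $2N_{\rm f}$ and $5N_{\rm f}$ exactly right, and confirming that the assembled matrix is symmetric, is the delicate part; once the block form \eqref{EqEFIMDiff1} is verified, the non-singularity conclusion follows directly from the column-independence argument.
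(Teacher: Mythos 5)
Your route is essentially the paper's: the paper likewise forms ${\bf I}_{{\boldsymbol \theta}_{\rm ppm,d}}={\bf J}_{\rm diff}^{\rm T}{\bf P}_{\rm diff}^{\rm T}{{\bf I}_{\boldsymbol \eta_{\rm diff}}}{\bf P}_{\rm diff}{\bf J}_{\rm diff}$ by writing out the two Jacobians explicitly (with ${{\bf I}_{\boldsymbol \eta_{\rm diff}}}$ inheriting its entries from the sensing-only FIM) and multiplying the blocks, passing through the intermediate form \eqref{EqEFIMDiffSeq} exactly as you describe. The only divergence is the final non-singularity step --- you argue it from the linear independence of the $\tau_1$ and $\Delta\tau_{\rm q}$ columns of the assembled FIM, whereas the paper asserts it from ${\bf P}_{\rm diff}$ and ${\bf J}_{\rm diff}$ being non-singular --- a cosmetic difference, and your version is arguably the safer one given that, as you note, the over-complete intermediate ${{\bf I}_{{\boldsymbol \varpi_{\rm diff}}}}$ is itself rank-deficient.
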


{\bf Specially,} we denote the FIM of intermediate parameter vector $\boldsymbol \varpi_{\rm diff}$ as ${{\bf{I}}_{{\boldsymbol \varpi_{\rm diff}}}}$, which is written as \eqref{EqEFIMDiffSeq}.{\footnote {For simplicity, we use two pulses as an example.}}
Among the diagonal elements, $[\: {{\bf{\Lambda }}_{{\boldsymbol t^{\rm ref}}, {\boldsymbol t^{\rm ref}}}} + {{\bf{\Lambda }}_{{\boldsymbol t^0}, {\boldsymbol t^0}}} \:]$ represents the data-related part, and this indicates that the variance introduced by the communication data is also doubled compared to the pilot-based decoupling strategy, which is consistent with the principles of TDOA-based positioning systems \cite{TDOA}. Due to the coupling relationship between the data and delay, an increase in the variance of data estimation can also lead to a decline in the performance of delay estimation.
Meanwhile ${{\bf{\Lambda }}_{{\boldsymbol t^0}, {\boldsymbol t^0}}}$ represents the delay-related part. In this paper, we assume that the differential demodulation processes between different data are independent. Therefore, the elements within the red rectangular box (representing the correlation between data) should be set to zero.

\begin{Proof}
For the detailed derivation process, see Appendix~\ref{AppendixProEFIMDiff2}.
\end{Proof}

\begin{remark}
Because differential demodulation depends on pulse intervals, this method remains applicable in cases {\it where there is an asynchronous relationship between the transmitter and receiver clocks} (i.e., an initial clock offset exists, and synchronization of the received signal is not required). Accordingly, the system's sensing parameters is composed of both the Doppler shift and a pseudo-delay (the sum of the clock offset and the signal propagation delay).
\end{remark}

\section{Numerical Results And Discussions}

In this section, we illustrate the applications of our analytical results through numerical examples. We intentionally focus on a simple network to gain insights, even though our analytical results are applicable to arbitrary topologies with any number of MPCs and pulses in the received waveforms. Our
numerical results will be applicable to the performance analysis of a practical UWB-ISAC system, as the parameters specified in the IEEE UWB standards-802.15.4a and 802.15.4z are used for our numerical calculation. The system parameters are detailed in Table \ref{parameters1}.
\begin{table} [h]
\vspace{-5pt}
	\centering
	\setlength{\belowcaptionskip}{5pt}
	\caption{System Parameters}
	\label{parameters1}
	\scalebox{0.9}{
		\begin{tabular}{lll}
			\toprule
			Parameters              & Description                             & Value      \\
			\midrule
            $\alpha$                & Temporal spreading factor (ns)          & 0.2       \\
			$L$                     & Number of ISAC channel path             & $3$        \\
			${f_\text s}$           & Sampling frequency (\text GHz)          & $10$       \\
			$T_{\text f}$                   & PRI (\text ns)                          & $100$      \\
            $f_{\text c}$           & Carrier frequency (\text MHz)           & $3993.6$    \\
			\bottomrule
	\end{tabular}}
\end{table}
\begin{remark}     \label{RemarkUWBParameter}
In 2002, the U.S. Federal Communications Commission (FCC) enacted stringent regulations for UWB radios, requiring a frequency bandwidth $\ge$ 500 MHz and a transmit power spectrum density $<$ -41.3 dBm/MHz \cite{biaozhun}, as well as limiting the energy to 37 nJ for per 1 ms duration. The system resource allocation process in this paper also follows this principle.
\end{remark}

\vspace{-10pt}

\subsection{Tradeoff between communication and sensing}

Fig.~\ref{ResultData_Rate} illustrates the system performance using a receiver operating characteristic curve.
It depicts the relationship between range estimation accuracy and communication rate under the pilot-based decoupling framework.\footnote {For clarity, the CRLB presented in this section refers to the square root of the CRLB expression in \eqref{EqCRLBRange}.}
The boundary of the CRLB-rate region represents the Pareto front, which reflects the optimal trade-off between localization and communication performance~\cite{XiongYF}. As the $SNR$ increases, this boundary extends in the direction of the arrow, indicating simultaneous improvements in both functionalities. In addition, higher $SNR$ levels enable the system to reach optimal performance with a reduced pilot overhead.

\begin{figure}[h]
\vspace{-10pt}
	\centering
	{\includegraphics[width = 0.7\columnwidth]{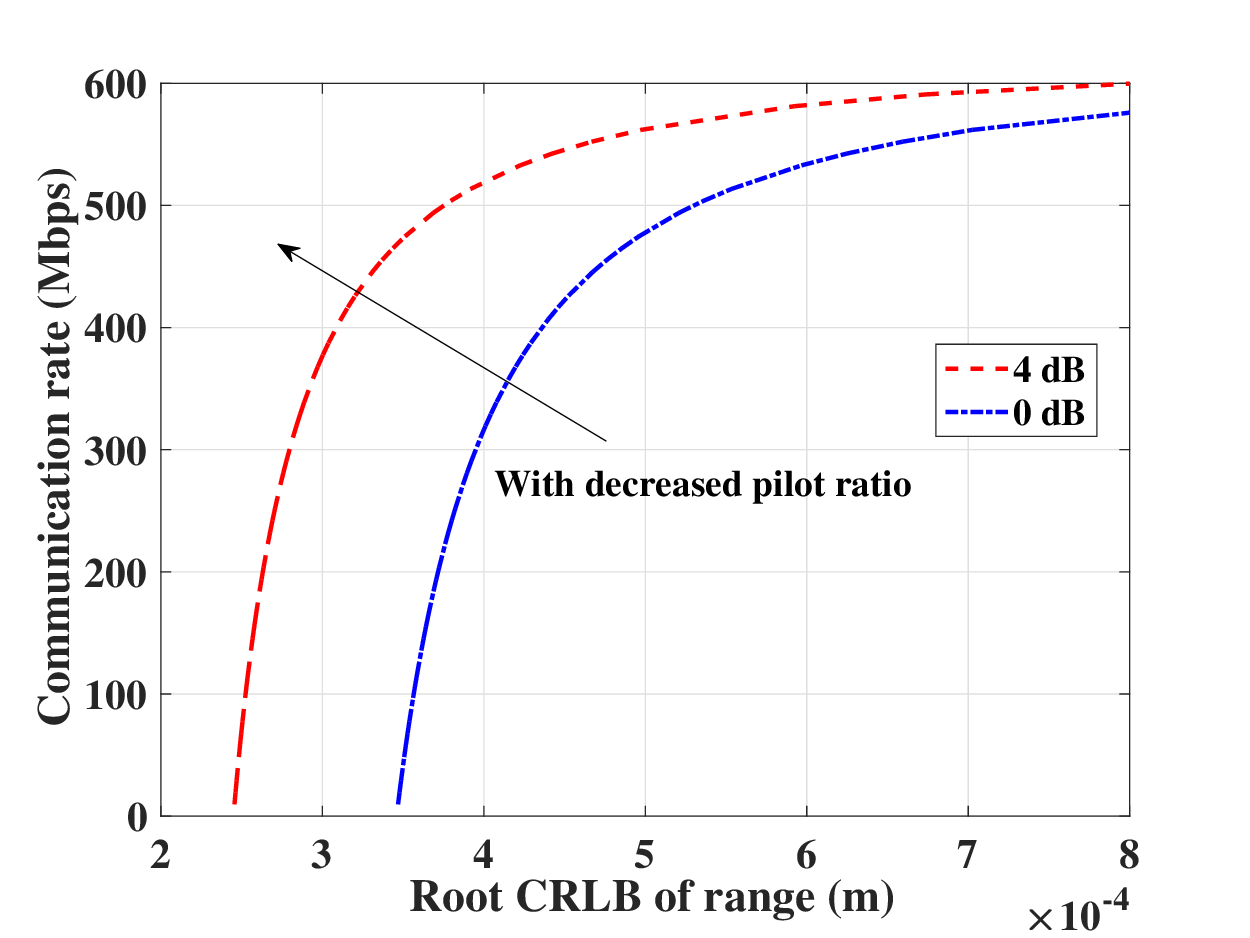}
\vspace{-5pt}
    \caption{The relationship between range estimation accuracy and communication rate. (As the communication rate increases in the direction of the arrow, the required pilot ratio decreases accordingly.).}     \label{ResultData_Rate}}
    \vspace{-2pt}
\end{figure}

\vspace{-10pt}
\subsection{Impact of Data on Sensing Performance}

The integrated pulses after data decoupling can be used to estimate sensing parameters. Different data modulation and demodulation schemes in the ISAC system can influence the sensing performance. We analyze this impact using the derived CRLBs based on \eqref{EqCRLBRange} and provide numerical examples to illustrate the properties in Fig.~\ref{FigResultDemoScheme} and Fig.~\ref{FigResultDemoSchemeDoppler}. We ensure the total number of pulses remains consistent across the different schemes. The observations are given as follows:

\begin{itemize}
\item The BPSK modulation scheme does not impact the estimation of signal transmission delay in the proposed ISAC system, resulting in a ranging performance identical to the sensing-only scenario. Similarly, the PPM scheme does not affect the estimation of the Doppler shift (phase).
\item Compared to a sensing-only system, the ISAC system based on PPM modulation leads to a decline in ranging performance, while the ISAC system using BPSK modulation decreases Doppler estimation performance.
\item A higher number of pulses is required for Doppler estimation than for ranging in order to achieve reasonable lower bounds on estimation performance, such as Doppler estimation around $1~{\rm Hz}$ and ranging around $10^{-4}~{\rm m}$.
    Apart from the energy accumulation effect of multiple pulses, this is mainly due to the strong positive correlation between the resolution of Doppler estimation and the number of pulses. An expression for this relationship is given by $\Delta f_{\rm d} = {1 \over {N_{\text f} T_{\text f}}}$, where $\Delta f_{\rm d}$ represents the Doppler resolution, $N_{\text f}$ and $T_{\text f}$ denote the number of total pulses and the signal transmission period, respectively.
\item The ranging performance achieved with differential decoupling strategies is worse than that of the pilot-based decoupling strategy. On the one hand, regarding the pilot-based decoupling strategy, as mentioned in Sec.~\ref{SecPPMPilot}, pilots without embedded data can provide more accurate ranging information compared to integrated pulses. On the other hand, regarding the differential decoupling strategy, as discussed in Sec.~\ref{SecDecouplingDiff}, the differential process of pulses leads to an increase in data demodulation errors. Due to the coupling between delay and PPM time-shift, this results in a decline in delay estimation (ranging) performance.
\end{itemize}

\begin{figure}[t]
\vspace{-6pt}
	\centering
	{\includegraphics[width = 0.7\columnwidth]{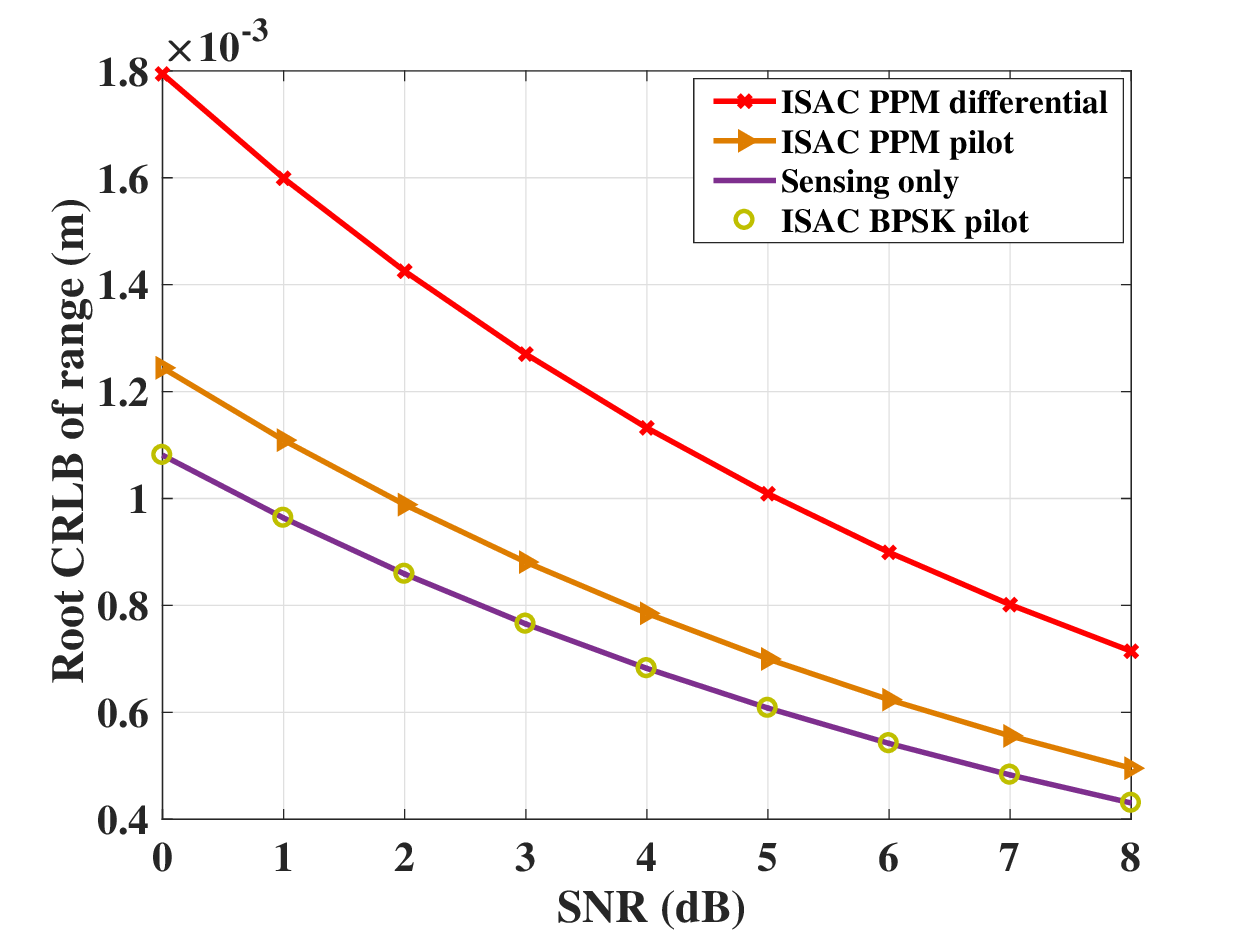}
    \vspace{-5pt}
    \caption {Ranging performance \big(root ${\cal C}(d_1)$\big) with different modulation schemes (Sensing only: 8 pulses with no data; ISAC PPM pilot/ISAC BPSK pilot: 4 pilot pulses + 4 data pulses; ISAC PPM differential: 8 data pulses).}     \label{FigResultDemoScheme}}
    \vspace{-10pt}
\end{figure}

\begin{figure}[t]
	\centering
	{\includegraphics[width = 0.7\columnwidth]{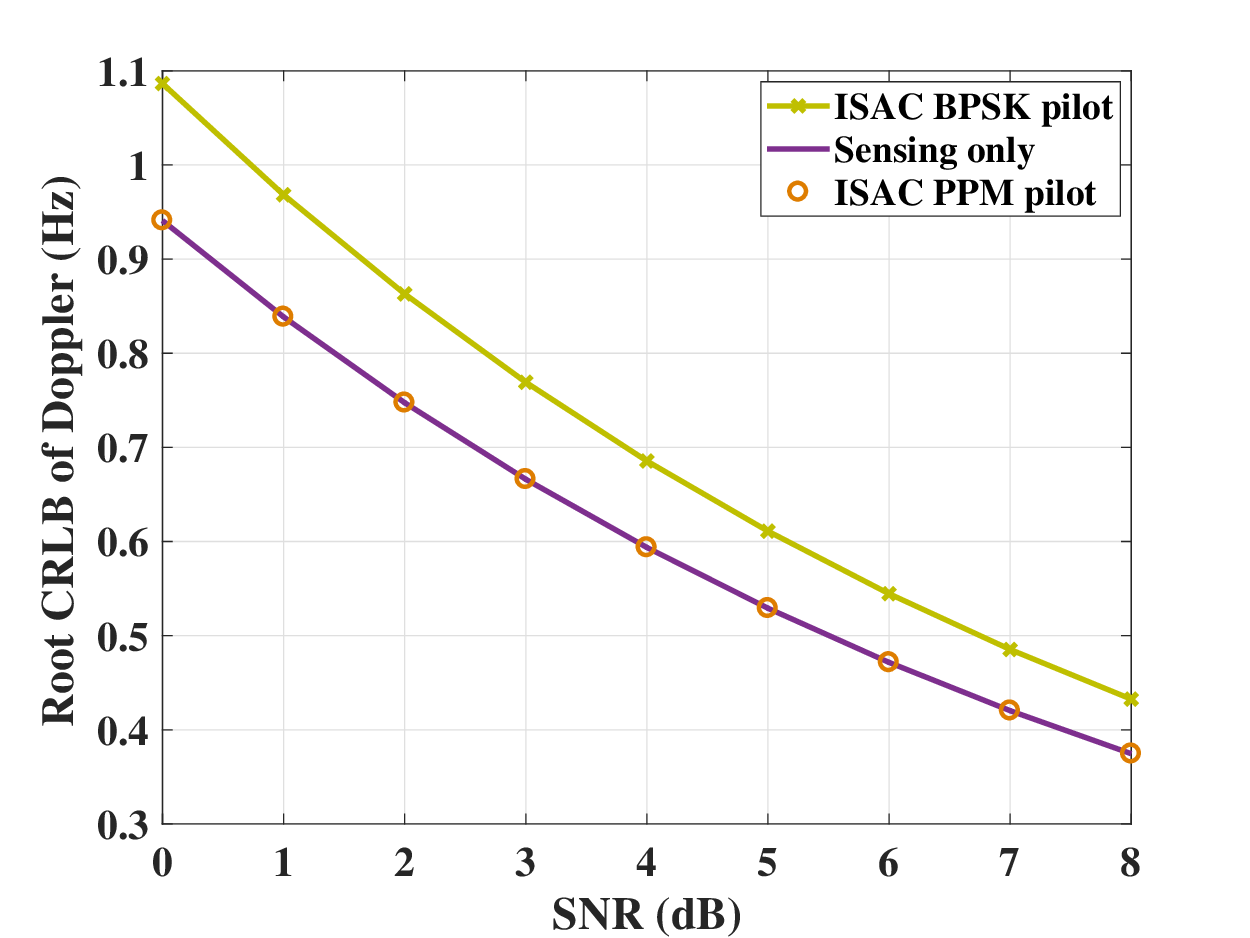}
\vspace{-5pt}
    \caption {Doppler estimation performance with different modulation schemes (Sensing only:2048 pulses with no data; ISAC PPM pilot/ISAC BPSK pilot: 1024 pilot pulses + 1024 data pulses).}     \label{FigResultDemoSchemeDoppler}}
    \vspace{-10pt}
\end{figure}

\subsection{Data Assistance for Sensing in ISAC Systems}

In the proposed pilot-based ISAC system, the data component can further enhance sensing performance compared to using pilots alone, as illustrated in Fig.~\ref{ResultDataAssitRange} and Fig.~\ref{ResultDataAssitDoppler}.

\begin{itemize}
\item We set the number of pilots and data to be equal in the pilot-based ISAC system. The dashed line in Fig.~\ref{ResultDataAssitRange} represents ranging using only pilots. The two bottom lines represent that, in addition to pilots, the data portion is also used to assist with ranging
    under the BPSK and PPM modulation schemes. This indicates that the data portion helps improve ranging performance. We observe that the ranging performance is superior with the BPSK modulation scheme compared to the PPM scheme, due to the coupling between delay and PPM time-shift.
\item If all 2048 pulses are used for data transmission without any pilot signals, as portrayed in the topmost line of Fig.~\ref{ResultDataAssitRange}, the ranging performance with the differential decoupling scheme is the worst, even inferior to that with 1024 pilots. Nevertheless, this configuration achieves a higher communication transmission rate.
\item Fig.~\ref{ResultDataAssitDoppler} demonstrates the supportive role of the data portion in Doppler estimation. It shows that the enhancement of Doppler estimation performance contributed by the data portion. We observe that the Doppler estimation performance is superior with the PPM modulation scheme compared to the BPSK scheme, due to the coupling between signal phase and BPSK modulation phase-shift.
\end{itemize}

\begin{figure}[h]
\vspace{-10pt}
	\centering
	{\includegraphics[width = 0.7\columnwidth]{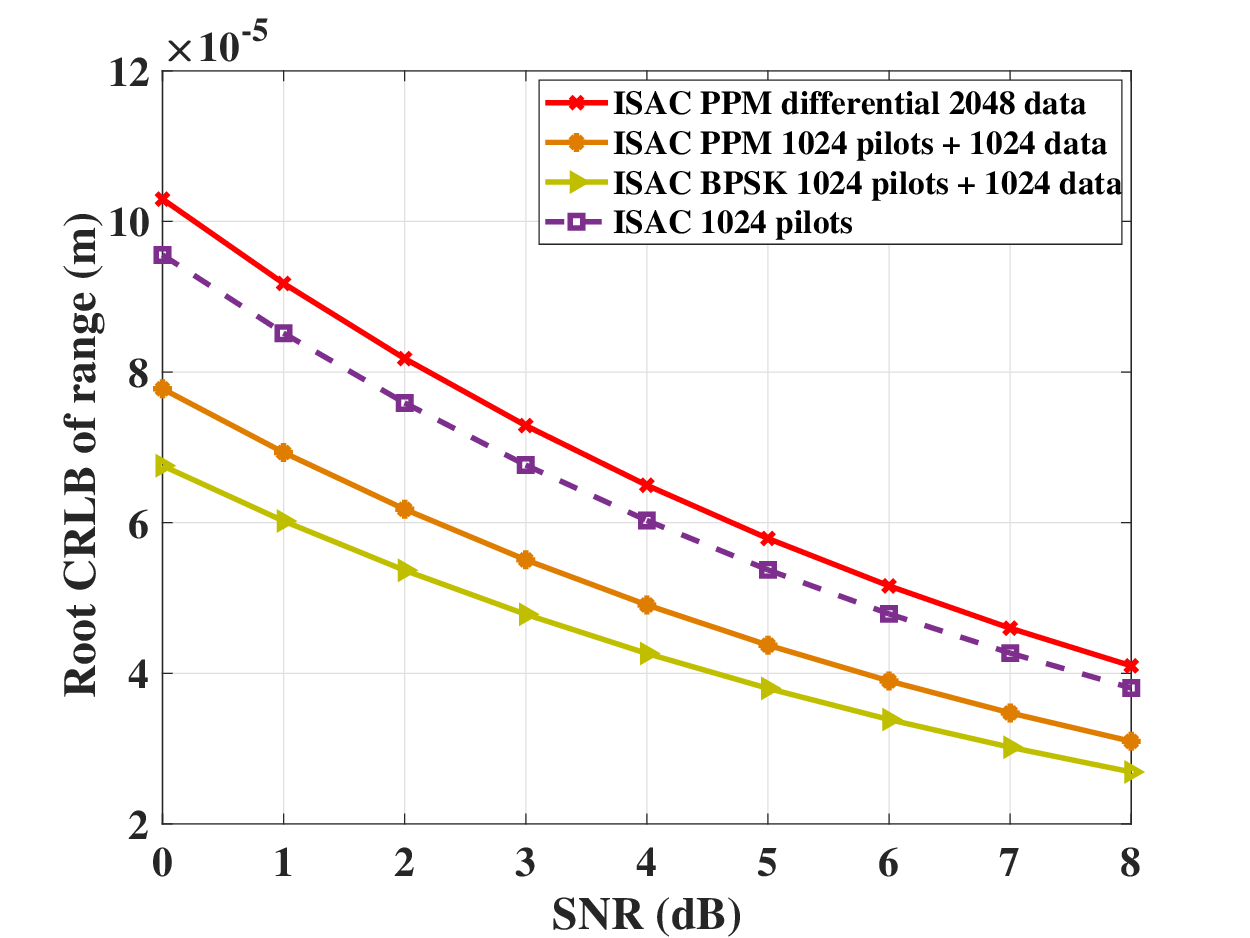}
\vspace{-9pt}
    \caption {Data-assisted range estimation.}     \label{ResultDataAssitRange}}
\vspace{-5pt}
\end{figure}

\begin{figure}[h]
	\centering
	{\includegraphics[width = 0.7\columnwidth]{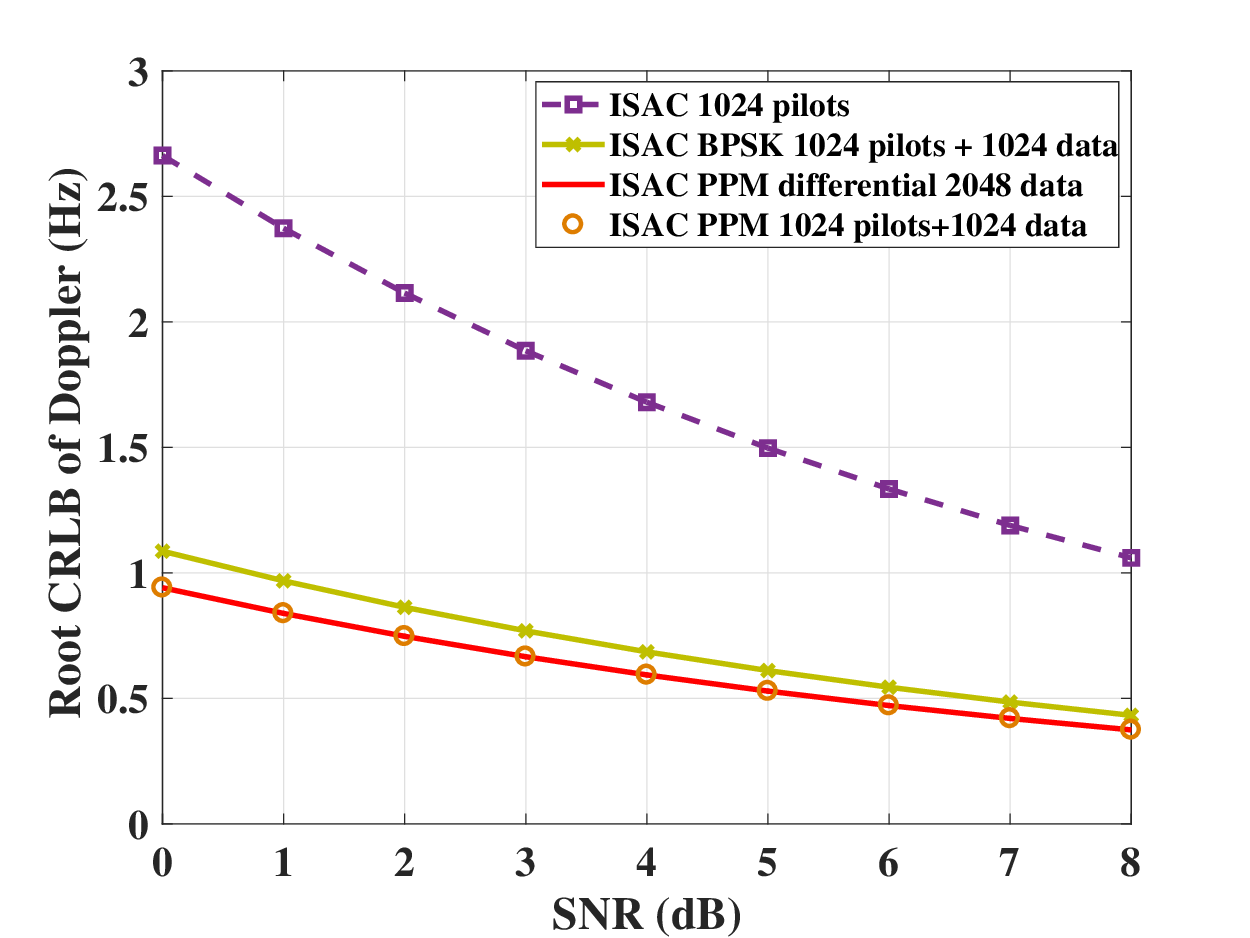}
\vspace{-5pt}
    \caption {Data-assisted Doppler estimation.}     \label{ResultDataAssitDoppler}}
\vspace{-5pt}
\end{figure}

In Fig.~\ref{ResultDemoSchemeNumberbits}, we present the trend of ranging performance as the number of data pulses increases. The sensing-only system has the same number of pulses as the ISAC system. We can see that, for fewer than 22 pulses, the ranging performance of the pilot-based decoupling ISAC system (with a fixed number of pilots) is better than that of the differential decoupling ISAC system. However, when the number of pulses exceeds 22, the opposite holds true.

\begin{figure}[h]
\vspace{-6pt}
	\centering
	{\includegraphics[width = 0.7\columnwidth]{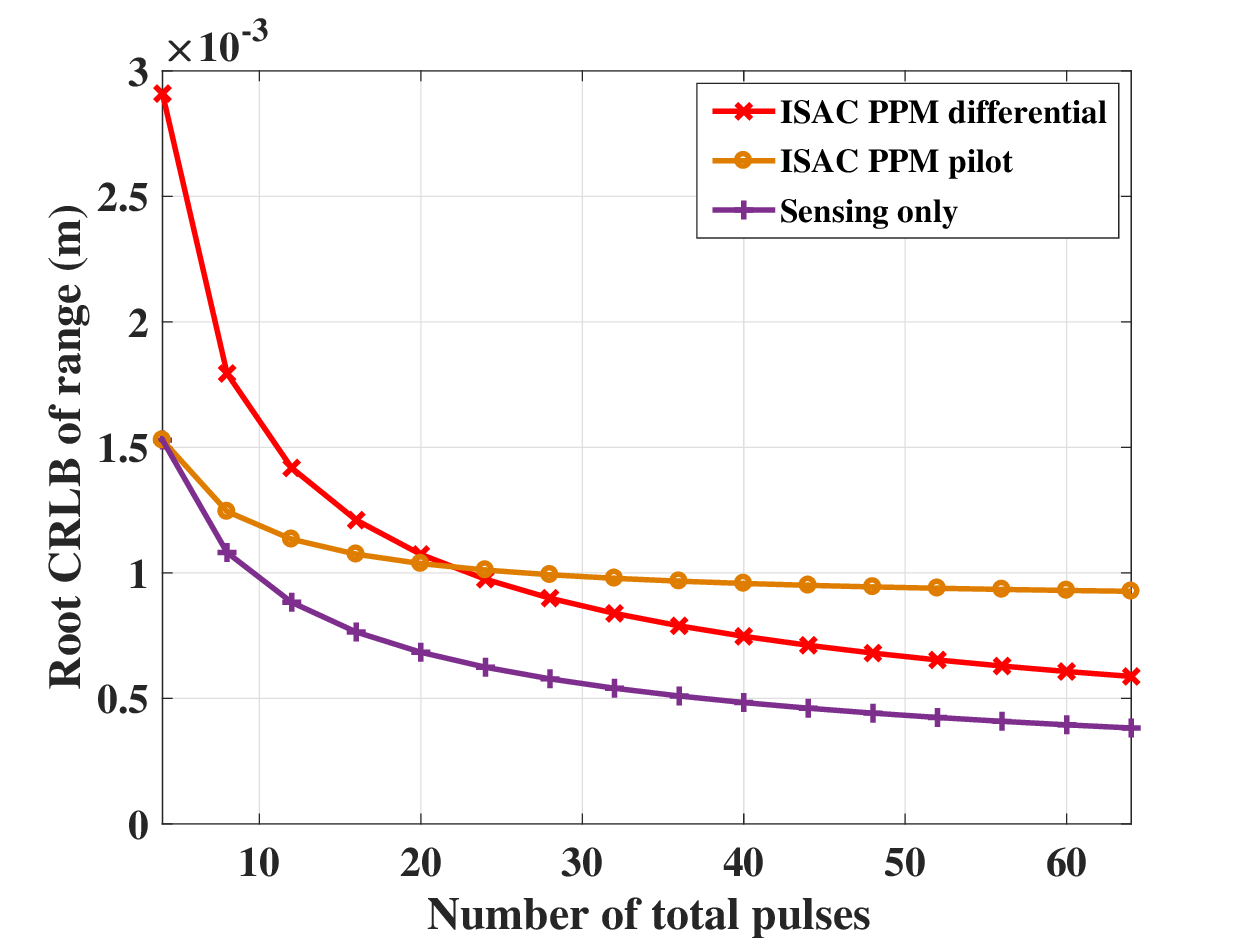}
\vspace{-5pt}
    \caption {Effect of data quantity on ranging performance. (`ISAC PPM pilot' with 4 pilots fixed.)}     \label{ResultDemoSchemeNumberbits}}
\end{figure}

\section{Conclusions}

In this paper, we have developed a theoretical framework for integrating sensing and communication within the UWB system using parameter estimation theory. Based on the FIM and CRLB, the proposed framework characterizes the tight coupling between sensing and communication, and the resulting performance degradation in ISAC systems. We have demonstrated that introducing additional pilots or using differential detection are effective strategies for decoupling, thus making the FIM non-singular. Numerical results indicate that:
(i) Compared to a sensing-only system, data bits can adversely affect sensing performance in the proposed UWB ISAC systems. However, in the pilot-based decoupling ISAC system, data bits can improve sensing performance compared to using only pilots for sensing.
(ii) PPM modulation primarily impacts range estimation performance, whereas BPSK modulation mainly affects Doppler estimation performance.
(iii) When the number of pilot pulses is set to be equal to the number of data pulses, the pilot-based demodulation strategy has a minor impact on range sensing performance compared to the differential-based demodulation strategy. However, when the number of pilots is fixed (set to 4 fixed pilots in
this paper), as the number of data pulses increases, the enhancement effect of the data on sensing performance becomes more significant under the differential demodulation method. Based on the observations above, UWB is a promising candidate waveform for ISAC systems, and specific data signal processing mechanisms may be better suited to specific scenarios.

\vspace{-5pt}
\appendix
\vspace{-2pt}

\subsection{Proof of Corollary \ref{CorolEFIMSense}}   \label{AppendixProEFIMSense}

Firstly, the Jacobian matrix $\mathbf {J}_{\rm s}$ that transforms the observation parameters vector ${\boldsymbol {\eta}_{\rm s}}$ to the estimated parameters vector $\boldsymbol\theta_{\rm s}$  can be written as
\begin{equation}        \label{EqJacSening}
\begin{aligned}
{\bf J_{{\rm s}}}  & = {{\partial {{\boldsymbol \eta }}_{\rm{s}}} \over {\partial {\boldsymbol \theta}_{\rm{s}}}}
    = {{\partial \left( {{\boldsymbol \tau},\boldsymbol \phi ,{{\tilde {\boldsymbol \alpha}}}} \right)} \over
      {\partial \left( {{\tau _{\rm{1}}}, \Delta {\boldsymbol \tau ^{\rm{T}}}, {f_{{\rm{d1}}}},
      {\Delta \boldsymbol f_{\rm{d}}}^{\rm{T}},\tilde {\boldsymbol \alpha}^{\rm{T}}} \right)}}  \\
 & = \left[ \begin{array}{cc:cc:c}
  {{{\partial {\boldsymbol \tau}} \over {\partial {\tau _1}}}}
& {{{\partial {\boldsymbol \tau}}} \over {\partial \Delta {\boldsymbol \tau} ^{\rm{T}}}}
& {{{\partial {\boldsymbol \tau}} \over {\partial {f_{{\rm{d1}}}}}}}
& {{{\partial {\boldsymbol \tau}} \over {\partial \Delta {\boldsymbol f}_{\rm{d}}^{\rm{T}}}}}
& {{{\partial {\boldsymbol \tau}} \over {\partial \tilde {\boldsymbol \alpha}^{\rm{T}}}}}  \\
\hdashline
  {{{\partial {\boldsymbol \phi ^0}} \over {\partial {\tau _1}}}}
& {{{\partial {\boldsymbol \phi ^0}} \over {\partial \Delta {\boldsymbol \tau ^{\rm{T}}}}}}
& {{{\partial {\boldsymbol \phi ^0}} \over {\partial {f_{{\rm{d1}}}}}}}
& {{{\partial {\boldsymbol \phi ^0}} \over {\partial \Delta {\boldsymbol f}_{\rm{d}}^{\rm{T}}}}}
& {{{\partial {\boldsymbol \phi ^0}} \over {\partial \tilde {\boldsymbol \alpha}^{\rm{T}}}}}  \\
  \vdots  &  \vdots  &  \vdots  &  \vdots  &  \vdots          \\
  {{{\partial {\boldsymbol \phi ^{{N_{\text f}} - 1}}} \over {\partial {\tau _1}}}}
& {{{\partial {\boldsymbol \phi ^{{N_{\text f}} - 1}}} \over {\partial \Delta {\boldsymbol \tau ^{\rm{T}}}}}}
& {{{\partial {\boldsymbol \phi ^{{N_{\text f}} - 1}}} \over {\partial {f_{{\rm{d1}}}}}}}
& {{{\partial {\boldsymbol \phi ^{{N_{\text f}} - 1}}} \over {\partial \Delta {\boldsymbol f}_{\rm{d}}^{\rm{T}}}}}
& {{{\partial {\boldsymbol \phi ^{{N_{\text f}} - 1}}} \over {\partial \tilde {\boldsymbol \alpha}^{\rm{T}}}}}  \\
\hdashline
  {{{\partial {{\tilde {\boldsymbol \alpha}}}} \over {\partial {\tau _1}}}}
& {{{\partial {{\tilde {\boldsymbol \alpha}}}} \over {\partial \Delta {\boldsymbol \tau ^{\rm{T}}}}}}
& {{{\partial {{\tilde {\boldsymbol \alpha}}}} \over {\partial {f_{{\rm{d1}}}}}}}
& {{{\partial {{\tilde {\boldsymbol \alpha}}}} \over {\partial \Delta {\boldsymbol f}_{\rm{d}}^{\rm{T}}}}}
& {{{\partial {{\tilde {\boldsymbol \alpha}}}} \over {\partial \tilde {\boldsymbol \alpha}^{\rm{T}}}}}  \\
\end{array} \right]             \\
&  = {\left[ \begin{array}{*{20}{l}}
   {\bf{H}} & {\bf{0}}       & {\bf{0}}    \cr
   {\bf{0}} & {{\bf{L}}_{\rm N}}   & {\bf{0}}    \cr
   {\bf{0}} & {\bf{0}}       & {\bf{I}}    \cr
 \end{array} \right]_{\left( {{N_{\text f}} + 2} \right)L \times 3L}}, \cr
\end{aligned}
\end{equation}
where ${\bf{I}}={{{\partial {{\tilde {\boldsymbol \alpha}}}} \over {\partial \tilde {\boldsymbol \alpha}^{\rm{T}}}}}$ is the identity matrix of dimension $L$, ${\bf{L}_{\rm N}} = \big[\: {{\bf{L}}_0}, \: \ldots, \: {{\bf{L}}_{N_{\text f}-1}} \: \big]^{\rm T}$,
\begin{equation}
\begin{aligned}
{\bf{H}} = \Big[ \begin{array}{cccc}
  {{{\partial {\boldsymbol \tau}} \over {\partial {\tau _1}}}}
& {{{\partial {\boldsymbol \tau}} \over {\partial \Delta {\boldsymbol \tau} ^{\rm{T}}}}}
\end{array} \Big],  \quad
\end{aligned}
\end{equation}
\begin{equation}
\begin{aligned}
{{\bf L}_{N_{\text f}-1}}
= \Big[ \! \begin{array}{cccc}
  {{{\partial {\boldsymbol \phi ^{{N_{\text f}} - 1}}} \over {\partial {f_{{\rm{d1}}}}}}}
& {{{\partial {\boldsymbol \phi ^{{N_{\text f}} - 1}}} \over {\partial \Delta {\boldsymbol f}_{\rm{d}}^{\rm{T}}}}}
\end{array} \!\Big]
= 2\pi \left( {{N_{\text f}} - 1} \right){T_{\text f}}{\bf{H}}.
\end{aligned}
\end{equation}

The FIM similar to \eqref{EqIeta} of the observation parameters vector ${\boldsymbol {\eta}_{\rm s}}$ can be expressed as following matrix partition:
{\setlength{\arraycolsep}{1.8pt}
\begin{equation}    \label{EqFIMSensingparemeter2}
{{\bf{I}}_{\boldsymbol \eta_{\rm s}}}
\! = \!\left[
\begin{matrix}
{{{\bf{\Lambda }}_{{\boldsymbol \tau}, {\boldsymbol \tau}}}} \hfill
& {{{\bf{\Lambda }}_{{\boldsymbol \tau}, {\boldsymbol \phi ^0}}}} \hfill
& \cdots  \hfill
& {{{\bf{\Lambda }}_{{\boldsymbol \tau}, {\boldsymbol \phi ^{{N_f} - 1}}}}} \hfill
& {{{\bf{\Lambda}}_{\boldsymbol \tau, \tilde{\boldsymbol\alpha} }}} \hfill  \\

{{{\bf{\Lambda }}_{{\boldsymbol \phi ^0}, \boldsymbol \tau }}} \hfill
& {{{\bf{\Lambda }}_{{\boldsymbol \phi ^0},{\boldsymbol \phi ^0}}}} \hfill
& \cdots  \hfill
&{{{\bf{\Lambda }}_{{\boldsymbol \phi ^0},{\boldsymbol \phi ^{{N_f} - 1}}}}} \hfill
& {{{\bf{\Lambda }}_{{\boldsymbol \phi ^0}, {\tilde {\boldsymbol \alpha}} }}} \hfill \\

\vdots  \hfill &  \vdots  \hfill &  \ddots  \hfill &  \vdots  \hfill &  \vdots  \hfill  \\

{{{\bf{\Lambda}}_{{\boldsymbol \phi^{{N_f} - 1}},\boldsymbol\tau }}} \hfill
& {{{\bf{\Lambda }}_{{\boldsymbol \phi ^{{N_f} - 1}},{\boldsymbol \phi ^0}}}} \hfill&\cdots\hfill
& {{{\bf{\Lambda }}_{{\boldsymbol \phi ^{{N_f} - 1}},{\boldsymbol \phi ^{{N_f} - 1}}}}} \hfill
& {{{\bf{\Lambda }}_{{\boldsymbol \phi ^{{N_f} - 1}},{\tilde {\boldsymbol \alpha}} }}} \hfill  \\

{{{\bf{\Lambda }}_{\tilde {\boldsymbol \alpha}, {\boldsymbol \tau} }}} \hfill
& {{{\bf{\Lambda }}_{\tilde {\boldsymbol \alpha}, {\boldsymbol \phi ^0}}}} \hfill
&  \cdots  \hfill
& {{{\bf{\Lambda}}_{\tilde {\boldsymbol \alpha}, {\boldsymbol \phi ^{{N_f} - 1}}}}} \hfill
& {{{\bf{\Lambda }}_{\tilde {\boldsymbol \alpha}, \tilde{ \boldsymbol \alpha }}}} \hfill  \\
\end{matrix}  \right]\!\!,
\end{equation}
}
where
\begin{equation}                  \label{EqAtautau}
{{\bf{\Lambda }}_{{{\boldsymbol \tau, \boldsymbol \tau}}}}{=}\left[
\begin{matrix}
{{{\bf{\Lambda }}_{{{{\tau_1 }}},{{{\tau_1 }}}}}}
&  \cdots
& {{{\bf{\Lambda }}_{{{{\tau_1}}},{{{\tau_L}}}}}}            \\
\vdots        &  \ddots      &  \vdots                       \\
{{{\bf{\Lambda }}_{{{{\tau_L }}},{{{\tau_1 }}}}}} &
\cdots
& {{{\bf{\Lambda }}_{{{{\tau_L }}},{{{\tau_L}}}}}}           \\
\end{matrix} \right].
\end{equation}

Similarly, the other sub-matrices in \eqref{EqFIMSensingparemeter2} are all $L \times L$ matrices, consistent with the form presented in \eqref{EqAtautau}.

Based on \eqref{Eqelement eta}, the detailed expressions of each element in \eqref{EqFIMSensingparemeter2} are provided as follows:
\begin{equation}    \label{EqTauTau}
\begin{aligned}
& {{\bf{\Lambda }}_{\boldsymbol \tau, \boldsymbol \tau}} \left( {i,j} \right)
=  {\frac{2}{{{\sigma ^2}}}}
{\rm Re}\left\{ {\frac{{\partial {\boldsymbol \mu_{\rm s}^{\text{H}}}}}
{{\partial \tau _i}}  \frac{{\partial {\boldsymbol \mu_{\rm s}} }}{{\partial \tau _j}}} \right \}  \\
& =  \frac{2}{{{\sigma ^2}}}
{{\rm Re}}\left\{ {\sum\limits_{\kappa  = 0}^{{N_{\text f}} - 1}}
{{\tilde \alpha_i} {\boldsymbol d\left( {{\boldsymbol \phi _i}} \right)_\kappa ^{\rm H}}
\frac{{\partial \boldsymbol w{{\left( {\tau _i^{\kappa}} \right)}^{\text{H}}}}}{{\partial \tau _i}}
{\tilde \alpha _j}  {\boldsymbol d\left( {{\boldsymbol \phi _j}} \right)_\kappa}
\frac{{\partial \boldsymbol w\left( {\tau_j^{\kappa}} \right)}}{{\partial \tau _j}}} \right \}     \\
& =  \frac{2}{{{\sigma ^2}}}
{\rm Re}\left \{ {{\tilde \alpha_i^{*}} {\tilde \alpha _j} {N_{\text f}}
\frac{{\partial \boldsymbol w{{\left( {\tau _i} \right)}^{\text{H}}}}}{{\partial \tau _i}}
\frac{{\partial \boldsymbol w\left( {\tau_j} \right)}}{{\partial \tau _j}}}  \right \},
\end{aligned}
\end{equation}
where ${\boldsymbol d\left( {{\boldsymbol \phi _i}} \right)_\kappa} = {e^{{\rm{j}}{\phi^{\kappa}_i}}}$,  $i=1,2,...,L$, $j=1,2,...,L$.
Based on the above and expression of \eqref{EqWtN}, the following two assumptions hold.
\begin{equation}   \label{EqAssumptionTau}
\begin{aligned}
\frac{{\partial \boldsymbol w{{\left( {\tau _i^{\kappa}} \right)}}}}{{\partial \tau _i}}  = & \frac{{\partial \boldsymbol w{{\left( {\tau _i^{\kappa}} \right)}}}}{{\partial \tau _i^{\kappa}}},        \:
\boldsymbol w{{\left( {\tau _i^0} \right)}} =  \cdots  = \boldsymbol w{{\left( {\tau _i^{{N_{\text f}} - 1}} \right)}}, \\
& \frac{{\partial \boldsymbol {w}{{\left( {\tau _i^0} \right)}}}}{{\partial \tau _i}}  =  \cdots  = \frac{{\partial \boldsymbol w{{\left( {\tau _i^{{N_{\text f}} - 1}} \right)}}}}{{\partial \tau _i}}.
\end{aligned}
\end{equation}
Additionally, the matrix ${{\bf{\Lambda}}_{\boldsymbol \tau, \boldsymbol \tau}}$ exhibits zero entries for off-diagonal positions, indicated by ${{\bf{\Lambda }}_{\boldsymbol \tau, \boldsymbol \tau}}\left( {i,j} \right)=0$, when $i \neq j$.
${\boldsymbol d\left( {{\boldsymbol \phi _i}} \right)_\kappa^{\rm H}} {\boldsymbol d\left( {{\boldsymbol \phi _j}} \right)_\kappa} = 1$, when $i=j$,
and \eqref{EqTauTau} can be further expressed as \eqref{EqTauTauSNR} based on \eqref{EqSNR} and \eqref{EqBandwidth}.
\begin{equation}    \label{EqTauTauSNR}
{{\bf{\Lambda }}_{\boldsymbol \tau, \boldsymbol \tau}}\left( {i,i} \right) =
4{\pi ^2}{B^2}{N_{\text f}}{T_{\text f}}{f_{\rm{s}}} \cdot SNR
\end{equation}

In this paper, we assume that the SNR is uniform across all paths and consistently represented as $SNR$.

Similarly, the other sub-matrixs are given as follows, and they are derived based on the assumption in \eqref{EqAssumptionTau} and $i=j$.
\begin{equation}   \label{EqTauPhi}
\begin{aligned}
{{\bf{\Lambda }}_{\boldsymbol \tau, \boldsymbol \phi^{\kappa}}}&  \left( {i,j} \right)
{ = }{\frac{2}{{{\sigma ^2}}}}  {\rm Re} \left\{ {\frac{{\partial {\boldsymbol \mu_{\rm s}^{\text{H}}}}}{{\partial \tau _i}}
\frac{{\partial {\boldsymbol \mu_{\rm s}} }}{{\partial {\boldsymbol \phi^{\kappa} _j}}}} \right \}  \cr
& { = \!} \frac{2}{{{\sigma ^2}}}
{\rm Re}\! \left\{\! {{\tilde \alpha_i} {\boldsymbol d\left( {{\boldsymbol \phi _i}} \right)_\kappa ^{\rm H}}
\frac{{\partial \boldsymbol w{{\left( {\tau _i^{\kappa}} \right)}^{\text{H}}}}}{{\partial \tau _i}}
{\tilde \alpha _j}  \frac{{\partial \boldsymbol d\left( {\boldsymbol \phi_j} \right)_{\kappa}}}{{\partial \boldsymbol \phi_j^{\kappa}}}}
\boldsymbol w\left( {\tau_j^{\kappa}} \right)  \!\! \right \}  \cr
& { = }\frac{2}{{{\sigma ^2}}}
{\rm Re}\left\{ {\rm j}{{\tilde \alpha_i} {\tilde \alpha _j}
\frac{{\partial \boldsymbol w{{\left( {\tau _i^{\kappa}} \right)}^{\text{H}}}}}{{\partial \tau _i}}
{{\boldsymbol w\left( {\tau_j^{\kappa}} \right)}}} \right \}  \\
& { = } 0,
\end{aligned}
\end{equation}
where ${\boldsymbol d\left( {{\boldsymbol \phi _i}} \right)_\kappa ^{\rm H}}
\frac{{\partial \boldsymbol d\left( {\boldsymbol \phi_j} \right)_{\kappa}}}{{\partial \boldsymbol \phi_j^{\kappa}}} = {\rm j}$ when $i=j$, and roman letter ${\rm j}$ is the imaginary unit. In this paper, the estimation of delay and Doppler is performed separately, with no coupling between them, which
is consistent with ${{\bf{\Lambda }}_{\boldsymbol \tau, \boldsymbol \phi^{\kappa}}}\left( {i,j} \right) = 0$.
\begin{equation}   \label{EqTauAlpha}
\begin{aligned}
{{\bf{\Lambda }}_{\boldsymbol \tau, \tilde {\boldsymbol \alpha}}}& \left( {i,j} \right)
 { = }{\frac{2}{{{\sigma ^2}}}}
{\rm Re}\left \{ {\frac{{\partial {\boldsymbol \mu_{\rm s}^{\text{H}}}}}{{\partial \tau _i}}
        \frac{{\partial {\boldsymbol \mu_{\rm s}} }}{{\partial {\tilde \alpha_j}}}} \right \}  \cr
& { = }\frac{2}{{{\sigma ^2}}}
{\rm Re} \left \{{{\sum\limits_{\kappa  = 0}^{{N_f} - 1}} {\tilde \alpha_i} {\boldsymbol d\left( {{\boldsymbol \phi _i}} \right)_\kappa ^{\rm H}}
\frac{{\partial \boldsymbol w{{\left( {\tau _i^{\kappa}} \right)}^{\text{H}}}}}{{\partial \tau _i}}
\boldsymbol d \left( {\boldsymbol \phi_j} \right)_{\kappa}}  \boldsymbol w\left( {\tau_j^{\kappa}} \right)\right \}  \cr
& { = }\frac{2}{{{\sigma ^2}}}
{\rm Re} \left\{{{\tilde \alpha_i} {N_f} \frac{{\partial \boldsymbol w{{\left( {\tau _i^{0}} \right)}^{\text{H}}}}}{{\partial \tau _i}}
{{\boldsymbol w\left( {\tau_j^{0}} \right)}}}  \right \},
\end{aligned}
\end{equation}
\begin{equation}   \label{EqPhiAlpha}
\begin{aligned}
 {{\bf{\Lambda }}_{\tilde{\boldsymbol \alpha}, \boldsymbol \phi^{\kappa}}}& \left( {i,j} \right)
{ = }{\frac{2}{{{\sigma ^2}}}}
{\rm Re} \left\{ {\frac{{\partial {\boldsymbol \mu_{\rm s}^{\text{H}}}}}
{{\partial \tilde \alpha _i}}  \frac{{\partial {\boldsymbol \mu_{\rm s}} }}{{\partial {\boldsymbol \phi^{\kappa} _j}}}} \right \}  \cr
& { = }\frac{2}{{{\sigma ^2}}}
{\rm Re}\left\{  {{\boldsymbol d\left( {{\boldsymbol \phi _i}} \right)_\kappa ^{\rm H}}  \boldsymbol w{{\left( {\tau _i^{\kappa}} \right)}^{\text{H}}}
{\tilde \alpha _j}  \frac{{\partial \boldsymbol d\left( {\boldsymbol \phi_j} \right)_{\kappa}}}{{\partial \boldsymbol \phi_j^{\kappa}}}}
\boldsymbol w\left( {\tau_j^{\kappa}} \right)  \right \}     \cr
& { = } 0,
\end{aligned}
\end{equation}
\begin{equation}   \label{EqPhiPhi}
\begin{aligned}
{{\bf{\Lambda }}_{{\boldsymbol \phi^{\kappa}}, \boldsymbol \phi^{\kappa}}}&  \left( {i,j} \right)
{ = }{\frac{2}{{{\sigma ^2}}}}
{\rm Re}\left\{ {\frac{{\partial {\boldsymbol \mu_{\rm s}^{\text{H}}}}}{{\partial {\boldsymbol \phi^{\kappa} _i}}}
        \frac{{\partial {\boldsymbol \mu_{\rm s}} }}{{\partial {\boldsymbol \phi^{\kappa} _j}}}} \right \}  \cr
& { = }\frac{2}{{{\sigma ^2}}}
{\rm Re}\left\{{\tilde \alpha_i}{\frac{{\partial \boldsymbol d\left( {\boldsymbol \phi_i} \right)_{\kappa}}}{{\partial \boldsymbol \phi_i^{\kappa}}}
\boldsymbol w{{\left( {\tau _i^{\kappa}} \right)}^{\text{H}}}
{\tilde \alpha _j}  \frac{{\partial \boldsymbol d\left( {\boldsymbol \phi_j} \right)_{\kappa}}}{{\partial \boldsymbol \phi_j^{\kappa}}}}
\boldsymbol w\left( {\tau_j^{\kappa}} \right)  \right \}  \cr
& { = }\frac{2}{{{\sigma ^2}}}
{\rm Re}\left\{{{\tilde \alpha_i}{\tilde \alpha _j} \boldsymbol w{{\left( {\tau _i^{\kappa}} \right)}^{\text{H}}}
{{\boldsymbol w\left( {\tau_j^{\kappa}} \right)}}} \right\},
\end{aligned}
\end{equation}
where $\frac{{\partial \boldsymbol d\left( {\boldsymbol \phi_i} \right)_{\kappa}}}{{\partial \boldsymbol \phi_i^{\kappa}}}
\frac{{\partial \boldsymbol d\left( {\boldsymbol \phi_j} \right)_{\kappa}}}{{\partial \boldsymbol \phi_j^{\kappa}}} = 1$ when $i=j$, and the submatrix specific to the differentiation of the phase ${\boldsymbol \phi ^{\kappa}}$ shows zero entries for off-diagonal positions, as exemplified by
${{{\bf{\Lambda }}_{{\boldsymbol \phi ^{{N_{\text f}}-1}},{\boldsymbol \phi ^0}}}} = 0$.
\vspace{-10pt}
\begin{equation}     \label{EqAlphaAlpha}
\begin{aligned}
{{\bf{\Lambda }}_{\tilde{\boldsymbol \alpha},\tilde{\boldsymbol \alpha}}} & \left( {i,j} \right)
{ = }{\frac{2}{{{\sigma^2}}}}
{\rm Re} \left\{ {\frac{{\partial {\boldsymbol \mu_{\rm s}^{\text{H}}}}}{{\partial \tilde \alpha _i}}
         \frac{{\partial {\boldsymbol \mu_{\rm s}} }}{{\partial {\tilde \alpha _j}}}} \right\}  \cr
& { = }\frac{2}{{{\sigma^2}}}
{\rm Re}\left\{ {\sum\limits_{\kappa  = 0}^{{N_{\text f}} - 1}}
{{\boldsymbol d\left( {{\boldsymbol \phi _i}} \right)_\kappa ^{\rm H}}  \boldsymbol w{{\left( {\tau _i^{\kappa}} \right)}^{\text{H}}}
  \boldsymbol d\left( {\boldsymbol \phi_j} \right)_{\kappa}}             \boldsymbol w\left( {\tau_j^{\kappa}} \right)  \right\}  \cr
& { = }\frac{2}{{{\sigma ^2}}}
{\rm Re}\left\{ {N_{\text f}} {\boldsymbol w{{\left( {\tau _i^{0}} \right)}^{\text{H}}}   {{\boldsymbol w\left({\tau_j^{0}} \right)}}}  \right\}.
\end{aligned}
\end{equation}

In a manner similar to the derivation of \eqref{EqTauTauSNR}, equations \eqref{EqTauAlpha}, \eqref{EqPhiPhi}, and \eqref{EqAlphaAlpha} can also be rewritten terms of the $\it SNR$ and signal effective bandwidth ${\it B}$, as follows:
\begin{align}
&  {{\bf{\Lambda }}_{\boldsymbol \tau, \boldsymbol \alpha}}\left( {i,i} \right) =
4{\pi ^2}{B^2}{N_{\text f}}{T_{\text f}}{f_{\rm{s}}} \cdot SNR  \cdot {\int_0^{T_{\text f}} {s^{(1)}s(t)}dt \over {\tilde \alpha_i}{\int_0^{T_{\text f}} {(s^{(1)})^2}dt}}, \label{EqTauAlphaSNR} \\
&  {{\bf{\Lambda }}_{{\boldsymbol \phi^{\kappa}}, \boldsymbol \phi^{\kappa}}}\left( {i,i} \right) = {T_{\text f}}{f_{\rm{s}}} \cdot SNR,  \label{EqPhiphiSNR} \\
& {{\bf{\Lambda }}_{\boldsymbol \alpha, \boldsymbol \alpha}}\left( {i,i} \right) = {{N_{\text f}}{T_{\text f}}{f_{\rm{s}}} \over {\tilde \alpha_i^2}} \cdot SNR. \label{EqAlphaAlphaSNR}
\end{align}

Then the FIM for the estimated parameter vector $\boldsymbol\theta_{\rm s}$ is expressed as
\begin{equation}    \label{EqAppEFIMSensing}
\begin{aligned}
{{\bf I}_{{\boldsymbol \theta}_{\rm s}}}& = {\bf J}_{{\rm{s}}}^{\rm{T}}{{\bf{I}}_{\boldsymbol \eta_{\rm s}}}{{\bf J}_{{\rm{s}}}}      \\
& = {\left[ {\begin{array}{*{20}{c}}
              {{{\bf{H}}^{\rm{T}}}{{\bf{\Lambda }}_{\boldsymbol \tau, \boldsymbol \tau }}{\bf{H}}}
& \mathfrak{a}{{{\bf{H}}^{\rm{T}}}{{\bf{\Lambda }}_{\boldsymbol \tau, \boldsymbol \phi^{0}}}{\bf{H}}}
&             {{{\bf{H}}^{\rm{T}}}{{\bf{\Lambda }}_{\boldsymbol \tau, {\tilde {\boldsymbol \alpha}}}}}    \\
  \mathfrak{a}{{{\bf{H}}^{\rm{T}}}{{\bf{\Lambda }}_{\boldsymbol \phi^{0},  {\boldsymbol \tau}}}{\bf{H}}}
& \mathfrak{b}{{{\bf{H}}^{\rm{T}}}{{\bf{\Lambda }}_{\boldsymbol \phi^{0},  \boldsymbol \phi^{0}}}{\bf{H}}}
& \mathfrak{a}{{{\bf{H}}^{\rm{T}}}{{\bf{\Lambda }}_{\boldsymbol \phi^{0},  {\tilde {\boldsymbol \alpha}}}}}       \\
  {{\bf{\Lambda }}_{{\tilde {\boldsymbol \alpha}}, \boldsymbol \tau}} {\bf{H}}
& \mathfrak{a}{{\bf{\Lambda }}_{{\tilde {\boldsymbol \alpha}}, \boldsymbol \phi^{0}}}{{\bf{H}}}
& {{\bf{\Lambda }}_{{\tilde {\boldsymbol \alpha}}, {\tilde {\boldsymbol \alpha}}}}
\end{array}} \right]},
\end{aligned}
\end{equation}
where $\mathfrak{a}=\sum\limits_{\kappa  = 0}^{{N_{\text f}} - 1} {-{\rm j}2\pi \kappa {T_{\text f}}}  {=} -{\rm j} \pi {T_{\text f}}{N_{\text f}}\left( {{N_{\text f}} - 1} \right)$,  \\
$\mathfrak{b} = \sum\limits_{\kappa  = 0}^{{N_{\text f}} - 1} {\left( {2\pi \kappa {T_{\text f}}} \right)\left( {2\pi \kappa {T_{\text f}}} \right)}
              = \frac{{2{\pi ^2}{T_{\text f}^2}{N_{\text f}}\left( {{N_{\text f}} - 1} \right)\left( {2{N_{\text f}} - 1} \right)}}{3}$.

The proof of Corollary \ref{CorolEFIMSense} is thus complete.

\vspace{-5pt}

\subsection{Proof of Proposition \ref{PropoEFIMPPM}}   \label{AppendixProEFIMPPM}

Similar to $\eqref{EqFIMSensingparemeter2}$, the FIM of the observation vector $\boldsymbol \eta_{\rm ppm}$ in PPM ISAC system can be expressed as
\begin{equation}    \label{EqFIMSensingparemeterPPM}
{{\bf{I}}_{\boldsymbol \eta_{\rm ppm}}}
 =  \left[
\begin{matrix}
{{{\bf{\Lambda }}_{{\boldsymbol \tau_{\rm ppm}}, {\boldsymbol \tau_{\rm ppm}}}}} \hfill
& {{{\bf{\Lambda }}_{{\boldsymbol \tau_{\rm ppm}}, {\boldsymbol \phi ^0}}}} \hfill
& \cdots  \hfill
& {{{\bf{\Lambda}}_{\boldsymbol \tau_{\rm ppm}, \tilde{\boldsymbol\alpha} }}} \hfill  \\

{{{\bf{\Lambda }}_{{\boldsymbol \phi ^0}, \boldsymbol \tau_{\rm ppm} }}} \hfill
& {{{\bf{\Lambda }}_{{\boldsymbol \phi ^0},{\boldsymbol \phi ^0}}}} \hfill
& \cdots  \hfill
& {{{\bf{\Lambda }}_{{\boldsymbol \phi ^0}, {\tilde {\boldsymbol \alpha}} }}} \hfill \\

\vdots  \hfill &  \vdots  \hfill &   \ddots  \hfill &   \vdots  \hfill   \\

{{{\bf{\Lambda}}_{{\boldsymbol \phi^{{N_{\text f}} - 1}},\boldsymbol\tau_{\rm ppm} }}} \hfill
& {{{\bf{\Lambda }}_{{\boldsymbol \phi ^{{N_{\text f}}-1}},{\boldsymbol \phi ^0}}}} \hfill
&\cdots\hfill
& {{{\bf{\Lambda }}_{{\boldsymbol \phi ^{{N_{\text f}} - 1}},{\tilde {\boldsymbol \alpha}} }}} \hfill  \\

{{{\bf{\Lambda }}_{\tilde {\boldsymbol \alpha}, {\boldsymbol \tau_{\rm ppm}} }}} \hfill
& {{{\bf{\Lambda }}_{\tilde {\boldsymbol \alpha}, {\boldsymbol \phi ^0}}}} \hfill
&  \cdots  \hfill
& {{{\bf{\Lambda }}_{\tilde {\boldsymbol \alpha}, \tilde{ \boldsymbol \alpha }}}} \hfill  \\
\end{matrix}  \right] ,
\end{equation}
where the numerical values of the elements in
${{\bf I}_{{\boldsymbol \eta}_{\rm ppm}}}$ are consistent with those in ${{\bf I}_{\boldsymbol \eta_{\rm s}}}$, such as
\begin{equation}         \label{EqTauTauPPM}
\begin{aligned}
{{\bf{\Lambda }}_{\boldsymbol \tau_{\rm ppm}, \boldsymbol \tau_{\rm ppm}}}\left( {i,j} \right)
& { = }{\frac{2}{{{\sigma ^2}}}}
{\rm Re} \left\{ {\frac{{\partial {\boldsymbol \mu_{\rm ppm}^{\text{H}}}}}{{\partial \tau _i}}
                  \frac{{\partial {\boldsymbol \mu_{\rm ppm}} }}{{\partial \tau _j}}} \right \}   \cr
& { = } {{\bf{\Lambda }}_{\boldsymbol \tau, \boldsymbol \tau}}\left( {i,j} \right).
\end{aligned}
\end{equation}

The Jacobian matrix $\mathbf {J}_{\rm ppm}$ maps ${\boldsymbol {\eta}_{\rm ppm}}$ to $\boldsymbol\theta_{\rm ppm}$  is written as
\begin{equation}        \label{EqJacPPM}
\begin{aligned}
{\bf J_{{\rm ppm}}}  & = {{\partial {{\boldsymbol \eta }_{{\rm ppm}}}} \over {\partial {\boldsymbol \theta_{{\rm ppm}}}}}
&  = {\left[ \begin{array}{*{20}{l}}
   {\bf{H}}         & {\bf{E}}                 & {\bf{0}}                 & {\bf{0}}  \cr
   {\bf{0}}         & {\bf{0}}                 & {{{\bf{L}_{\rm a}}}}           & {\bf{0}}  \cr
   {\bf{0}}         & {\bf{0}}                 & {\bf{0}}                 & {\bf{I}}  \cr
 \end{array} \right]},                    \cr
\end{aligned}
\end{equation}
where
\begin{equation}
\begin{aligned}
{\bf{E}} & = {{{\partial {\boldsymbol \tau _{\rm{ppm}}}} \over {\partial \Delta \tau _{\rm q}}}} = {[ {\begin{array}{*{20}{c}}
 1 &  \ldots   & 1                   \\
\end{array}}]^{\rm T}_{L \times 1}}.
\end{aligned}
\end{equation}

The FIM for the estimated parameters vector $\boldsymbol\theta_{\rm ppm}$ in ({\ref{Eq theta ppm}) can be expressed as \eqref{EqEFIMPPMAp}.
\begin{figure*}[b]
\vspace{-10pt}
\hrulefill
\begin{equation}        \label{EqEFIMPPMAp}
\begin{aligned}
 {{\bf I}_{{\boldsymbol \theta}_{\rm ppm}}} = {\bf J}_{{\rm{ppm}}}^{\rm{T}}{{\bf{I}}_{\boldsymbol \eta_{\rm ppm}}}{{\bf J}_{{\rm{ppm}}}}
  =  {\left[  {\begin{array}{*{20}{c}}
  {{{\bf{H}}^{\rm{T}}}{{\bf{\Lambda }}_{\boldsymbol \tau, \boldsymbol \tau}}{\bf{H}}}
&  {{{\bf{H}}^{\rm{T}}}{{\bf{\Lambda }}_{\boldsymbol \tau, \boldsymbol \tau}}{\bf{E}}}
& \mathfrak{a} {{{\bf{H}}^{\rm{T}}}{{\bf{\Lambda }}_{\boldsymbol \tau, \boldsymbol \phi^{0} }}{\bf{H}}}
& {{{\bf{H}}^{\rm{T}}}{{\bf{\Lambda }}_{\boldsymbol \tau, {\tilde {\boldsymbol \alpha}}}}}    \\
  {{{\bf{E}}^{\rm{T}}}{{{\bf{\Lambda }}_{\boldsymbol \tau,{\boldsymbol \tau}}}}{\bf{H}}}
&  {{{\bf{E}}^{\rm{T}}}{{\bf{\Lambda }}_{\boldsymbol \tau,{\boldsymbol \tau}}}{\bf{E}}}
& \mathfrak{a} {{{\bf{E}}^{\rm{T}}}{{\bf{\Lambda }}_{\boldsymbol \tau, \boldsymbol \phi^{0} }}}
& {{{\bf{E}}^{\rm{T}}}{{\bf{\Lambda }}_{\boldsymbol \tau, {\tilde {\boldsymbol \alpha}}}}}   \\
  \mathfrak{a} {{{\bf{H}}^{\rm{T}}}{{\bf{\Lambda }}_{\boldsymbol \phi^{0}, \boldsymbol \tau}}{\bf{H}}}
& \mathfrak{a} {{{\bf{H}}^{\rm{T}}}{{\bf{\Lambda }}_{\boldsymbol \phi^{0}, \boldsymbol \tau}}{\bf{E}}}
& \mathfrak{b}{{{\bf{H}}^{\rm{T}}}{{\bf{\Lambda }}_{\boldsymbol \phi^{0},  \boldsymbol \phi^{0}}}{\bf{H}}}
& \mathfrak{a} {{{\bf{H}}^{\rm{T}}}{{\bf{\Lambda }}_{\boldsymbol \phi^{0}, {\tilde {\boldsymbol \alpha}}}}} \\
  {{\bf{\Lambda }}_{{\tilde {\boldsymbol \alpha}}, \boldsymbol \tau}} {\bf{H}}
& {{\bf{\Lambda }}_{{\tilde {\boldsymbol \alpha}}, \boldsymbol \tau}} {\bf{E}}
& {{\bf{\Lambda }}_{{\tilde {\boldsymbol \alpha}}, \boldsymbol \phi^{0}}} {\bf{H}}
& {{\bf{\Lambda }}_{{\tilde {\boldsymbol \alpha}}, {\tilde {\boldsymbol \alpha}}}}
\end{array}} \right]}
\end{aligned}
\end{equation}
\vspace{-10pt}
\end{figure*}
The proof of Proposition \ref{PropoEFIMPPM} is thus complete.
\begin{figure*}[b]
\begin{equation}     \label{EqEFIMSenseAppenBPSK}
\begin{aligned}
{{\bf I}_{{\boldsymbol \theta}_{\rm bpsk}}} = {\bf J}_{{\rm{bpsk}}}^{\rm{T}}{{\bf{I}}_{\boldsymbol \eta_{\rm bpsk}}}{{\bf J}_{{\rm{bpsk}}}}
 =  {\left[  {\begin{array}{*{20}{c}}
  {{{\bf{H}}^{\rm{T}}}{{\bf{\Lambda }}_{\boldsymbol \tau, \boldsymbol \tau}}{\bf{H}}}
& \mathfrak{a} {{{\bf{H}}^{\rm{T}}}{{\bf{\Lambda }}_{\boldsymbol \tau, \boldsymbol \phi^{0} }}{\bf{H}}}
& \mathfrak{a} {{{\bf{H}}^{\rm{T}}}{{\bf{\Lambda }}_{\boldsymbol \tau, \boldsymbol \phi^{0} }}{\bf{E}}}
& {{{\bf{H}}^{\rm{T}}}{{\bf{\Lambda }}_{\boldsymbol \tau, {\tilde {\boldsymbol \alpha}}}}}           \\
  \mathfrak{a} {{{\bf{H}}^{\rm{T}}}{{\bf{\Lambda }}_{\boldsymbol \phi^{0}, \boldsymbol \tau}}{\bf{H}}}
& \mathfrak{b} {{{\bf{H}}^{\rm{T}}}{{\bf{\Lambda }}_{\boldsymbol \phi^{0},  \boldsymbol \phi^{0}}}{\bf{H}}}
& \mathfrak{b} {{{\bf{H}}^{\rm{T}}}{{\bf{\Lambda }}_{\boldsymbol \phi^{0}, \boldsymbol \phi^{0}}}{\bf{E}}}
& \mathfrak{a} {{{\bf{H}}^{\rm{T}}}{{\bf{\Lambda }}_{\boldsymbol \phi^{0}, {\tilde {\boldsymbol \alpha}}}}}  \\
  \mathfrak{a} {{\bf{E}}^{\rm{T}}}{{{\bf{\Lambda }}_{\boldsymbol \phi^{0},{\boldsymbol \tau}}}}{\bf{H}}
& \mathfrak{b} {{{\bf{E}}^{\rm{T}}}{{\bf{\Lambda }}_{\boldsymbol \phi^{0}, \boldsymbol \phi^{0} }}}{\bf{H}}
& \mathfrak{b} {{{\bf{E}}^{\rm{T}}}{{\bf{\Lambda }}_{\boldsymbol \phi^{0},{\boldsymbol \phi^{0}}}}{\bf{E}}}
& \mathfrak{a} {{{\bf{E}}^{\rm{T}}}{{\bf{\Lambda }}_{\boldsymbol \phi^{0}, {\tilde {\boldsymbol \alpha}}}}}       \\
  {{\bf{\Lambda }}_{{\tilde {\boldsymbol \alpha}}, \boldsymbol \tau}} {\bf{H}}
& \mathfrak{a} {{\bf{\Lambda }}_{{\tilde {\boldsymbol \alpha}}, \boldsymbol \phi^{0}}} {\bf{H}}
& \mathfrak{a} {{\bf{\Lambda }}_{{\tilde {\boldsymbol \alpha}}, \boldsymbol \phi^{0}}} {\bf{E}}
& {{\bf{\Lambda }}_{{\tilde {\boldsymbol \alpha}}, {\tilde {\boldsymbol \alpha}}}}
\end{array}} \right]} \
\end{aligned}
\end{equation}
\end{figure*}

\subsection{Proof of Proposition \ref{PropoEFIMBPSK}}   \label{AppendixProEFIMBPSK}

Similar to $\eqref{EqFIMSensingparemeter2}$, the FIM of the observation vector $\boldsymbol \eta_{\rm bpsk}$ in BPSK ISAC system can be expressed as
\begin{equation}    \label{EqFIMSensingparemeterBPSK}
{{\bf{I}}_{\boldsymbol \eta_{\rm bpsk}}} =  \frac{2}{\sigma^{2}} \operatorname{Re}
  \left\{\frac{\partial \boldsymbol{\mu}_{\rm bpsk}^{\rm{H}}}{\partial{\boldsymbol \eta_{\rm bpsk}}}
         \frac{\partial \boldsymbol{\mu}_{\rm bpsk}}{\partial{\boldsymbol \eta_{\rm bpsk}}}\right\},
\end{equation}
where the numerical values of the elements in
${{\bf I}_{{\boldsymbol \eta}_{\rm bpsk}}}$ are consistent with those in ${\bf I}_{\boldsymbol \eta_{\rm s}}$, similar to \eqref{EqTauTauPPM}, such as
\begin{equation}         \label{EqTauTaubpsk}
{\rm Re} \left\{ {\frac{{\partial {\boldsymbol \mu_{\rm bpsk}^{\text{H}}}}}{{\partial \tau _i}}
                  \frac{{\partial {\boldsymbol \mu_{\rm bpsk}} }}{{\partial \tau _j}}} \right \}
= {\rm Re} \left\{ {\frac{{\partial {\boldsymbol \mu_{\rm s}^{\text{H}}}}}{{\partial \tau _i}}
                  \frac{{\partial {\boldsymbol \mu_{\rm s}} }}{{\partial \tau _j}}} \right \}   .
\end{equation}

The Jacobian matrix $\mathbf {J}_{\rm bpsk}$ transfer ${\boldsymbol {\eta}_{\rm bpsk}}$ to $\boldsymbol\theta_{\rm bpsk}$  can be expressed as
\begin{equation}        \label{EqJacBPSKAp}
\begin{aligned}
{\bf J_{{\rm bpsk}}}
= {{\partial {{\boldsymbol \eta }_{{\rm bpsk}}}} \over {\partial {\boldsymbol \theta_{{\rm bpsk}}}}}
= {\left[\begin{array}{*{20}{l}}
   {\bf{H}}         & {\bf{0}}                           & {\bf{0}}               & {\bf{0}}  \cr
   {\bf{0}}         & {{{\bf{L}}_{\rm N}}}               & {\bf{E}}_{\rm N}             & {\bf{0}}  \cr
   {\bf{0}}         & {\bf{0}}                           & {\bf{0}}               & {\bf{I}}  \cr
 \end{array} \right]},
\end{aligned}
\end{equation}
where ${\bf E}_{\rm N} = [\: {\bf E}_0, \cdots, {\bf E}_{N_{\text f}-1} \:]^{\rm T}$,
\begin{equation}
\begin{aligned}
{{\bf E}_{N_{\text f}-1}} = {{{\partial {\boldsymbol \phi ^{{N_{\text f}} - 1}}} \over {\partial \varphi_{\rm{bpsk}}}}}
= {2\pi(N_{\text f}-1)T_{\text f}}{\bf E}.
\end{aligned}
\end{equation}

Then the FIM of estimated parameters vector ${\boldsymbol \theta}_{\rm bpsk}$ in \eqref{Eqthetabpsk} can be expressed as \eqref{EqEFIMSenseAppenBPSK}.

The proof of Proposition \ref{PropoEFIMBPSK} is thus complete.

\vspace{-10pt}
\subsection{Proof of Proposition \ref{PropoEFIMPPMpilot}}    \label{AppendixProEFIMPPMPilot}

The FIM of ${\boldsymbol \eta_{\rm ppm,p}}$ could be given by
\begin{equation}    \label{EqFIMPilot}
{{\bf{I}}_{\boldsymbol \eta_{\rm ppm,p}}} =  \frac{2}{\sigma^{2}} \operatorname{Re}
  \left\{\frac{\partial \boldsymbol{\mu}_{\rm ppm}^{\rm{H}}}{\partial{\boldsymbol \eta_{\rm ppm,p}}}
         \frac{\partial \boldsymbol{\mu}_{\rm ppm}}{\partial{\boldsymbol \eta_{\rm ppm,p}}}\right\},
\end{equation}
where the process for determining the elements of matrix ${{\bf I}_{{\boldsymbol \eta}_{\rm ppm,p}}}$ is the same as that used for matrix
${{\bf I}_{\boldsymbol \eta_{\rm s}}}$.
The Jacobian matrix $\mathbf {J}_{\rm ppm,p}$, which maps ${\boldsymbol {\eta}_{\rm ppm,p}}$ to $\boldsymbol\theta_{\rm ppm}$, can be derived
as
\begin{equation}        \label{EqJacPPMpilot}
\begin{aligned}
{\bf J_{{\rm ppm,p}}}  & = {{\partial {{\boldsymbol \eta }_{{\rm ppm,p}}}} \over {\partial {\boldsymbol \theta_{{\rm ppm}}}}}
&  = {\left[ \begin{array}{*{20}{l}}
   {\bf{H}} & {\bf{0}}        & {\bf{0}}        & {\bf{0}}      \cr
   {\bf{H}} & {\bf{E}}        & {\bf{0}}        & {\bf{0}}      \cr
   {\bf{0}} & {\bf{0}}        & {{{\bf{L}}_{\rm PD}}}  & {\bf{0}}      \cr
   {\bf{0}} & {\bf{0}}        & {\bf{0}}        & {\bf{I}}      \cr
   {\bf{0}} & {\bf{0}}        & {\bf{0}}        & {\bf{I}}
 \end{array} \right]},
\end{aligned}
\end{equation}
where ${\bf{L}_{\rm PD}} = [\: {{\bf{L}}_0}, \ldots, {{\bf{L}}_{P+D-1}} \:]^{\rm T}$.

Furthermore, \eqref{EqEFIMPPMPilot} can be derived from \eqref{EqFIMPilot} and \eqref{EqJacPPMpilot}.

The proof of Proposition \ref{PropoEFIMPPMpilot} is thus complete.

\vspace{-10pt}
\subsection{Proof of Proposition \ref{PropoEFIMBPSKpilot}}    \label{AppendixProEFIMBPSKPilot}

The FIM of ${\boldsymbol \eta_{\rm bpsk,p}}$ can be expressed as
\begin{equation}    \label{EqFIMBPSKPilot}
{{\bf{I}}_{\boldsymbol \eta_{\rm bpsk,p}}} = \frac{2}{\sigma^{2}} \operatorname{Re}
  \left\{\frac{\partial \boldsymbol{\mu}_{\rm bpsk}^{\rm{H}}}{\partial{\boldsymbol \eta_{\rm bpsk,p}}}
         \frac{\partial \boldsymbol{\mu}_{\rm bpsk}}{\partial{\boldsymbol \eta_{\rm bpsk,p}}}\right\}.
\end{equation}
Similarly to \eqref{EqTauTauPPM}, the process for determining the elements of matrix ${{\bf{I}}_{\boldsymbol \eta_{\rm bpsk,p}}}$ is the same as that used for matrix ${{\bf I}_{\boldsymbol \eta_{\rm s}}}$.

The Jacobian matrix $\mathbf {J}_{\rm bpsk,p}$, which maps the observation parameters vector ${\boldsymbol {\eta}_{\rm bpsk,p}}$ to interested parameters vector $\boldsymbol\theta_{\rm bpsk}$ in the BPSK ISAC system can be derived as
\begin{equation}        \label{EqJacBPSKPilot}
\begin{aligned}
{\bf J_{{\rm bpsk,p}}}  & = {{\partial {{\boldsymbol \eta }_{{\rm bpsk,p}}}} \over {\partial {\boldsymbol \theta_{{\rm bpsk}}}}}
&  = {\left[ \begin{array}{*{20}{l}}
   {\bf{H}} & {\bf{0}}        & {\bf{0}}        & {\bf{0}}      \\
   {\bf{H}} & {\bf{0}}        & {\bf{0}}        & {\bf{0}}      \\
   {\bf{0}} & {{{\bf{L}}_{\rm P}}}  & {\bf{0}}       & {\bf{0}}       \\
   {\bf{0}} & {{{\bf{L}}_{\rm D}}}  & {\bf{E}_{\rm D}}       & {\bf{0}}       \\
   {\bf{0}} & {\bf{0}}        & {\bf{0}}        & {\bf{I}}          \\
   {\bf{0}} & {\bf{0}}        & {\bf{0}}        & {\bf{I}}
 \end{array} \right]},
\end{aligned}
\end{equation}
where ${\bf{L}_{\rm P}} = [\: {{\bf{L}}_0}, \ldots, {{\bf{L}}_{P-1}} \:]^{\rm T}$,
${\bf{L}_{\rm D}} = [\: {{\bf{L}}_P}, \ldots, {{\bf{L}}_{P+D-1}} \:]^{\rm T}$,
${\bf{E}_{\rm D}} = [\: {\bf{E}}, \ldots, {\bf{E}} \:]^{\rm T}_{DL  \times  1}$.

Furthermore, \eqref{EqEFIMBPSKPilot} can be derived from \eqref{EqFIMBPSKPilot} and \eqref{EqJacBPSKPilot}.

This completes the proof of Proposition \ref{PropoEFIMBPSKpilot}.

\vspace{-10pt}
\subsection{Proof of Proposition \ref{PropoEFIMDiff2}}    \label{AppendixProEFIMDiff2}

The FIM of the observation vector ${\boldsymbol {\eta}_{\rm diff}}$ can be expressed as following matrix partition:
\begin{equation}    \label{EqFIMSensingparemeterDiff1}
{{\bf{I}}_{\boldsymbol \eta_{\rm diff}}}= \frac{2}{\sigma^{2}} \operatorname{Re}
  \left\{\frac{\partial \boldsymbol{\mu}_{\rm ppm}^{\rm{H}}}{\partial{\boldsymbol \eta_{\rm diff}}}
         \frac{\partial \boldsymbol{\mu}_{\rm ppm}}{\partial{\boldsymbol \eta_{\rm diff}}}\right\},
\end{equation}
where the numerical values of the elements in
${{\bf I}_{{\boldsymbol \eta}_{\rm diff}}}$ are consistent with those in ${\bf I}_{\boldsymbol \eta_{\rm s}}$.

The Jacobian matrix maps ${\boldsymbol {\eta}_{\rm diff}}$ to $\boldsymbol \varpi_{\rm diff}$ is given by
\begin{equation}        \label{EqJacDiifer2}
\begin{aligned}
{\bf P_{{\rm diff}}} = {{\partial {{\boldsymbol \eta }_{{\rm diff}}}} \over {\partial {\boldsymbol \varpi_{\rm diff}}}}
\! = \!{\left[ \begin{array}{*{20}{l}}
{\bf{-I}} &   {\bf{0}} & {\bf{0}}       & {\bf{0}}        & {\bf{0}}     &\cdots     & {\bf{0}}   & {\bf{0}}  \cr
{\bf{I}} &   {\bf{I}} & {\bf{-I}}       & {\bf{0}}        & {\bf{0}}     &\cdots     & {\bf{0}}   & {\bf{0}}  \cr
{\bf{0}} &   {\bf{0}} & {\bf{I}}        & {\bf{I}}        & {\bf{-I}}    &\cdots     & {\bf{0}}   & {\bf{0}}  \cr
{\bf{0}} &   {\bf{0}} & {\bf{0}}        & {\bf{0}}        & {\bf{I}}     &\cdots     & {\bf{0}}   & {\bf{0}}  \cr
\vdots   &   \vdots   &  \vdots         &  \vdots         &  \vdots      & \ddots    &  \vdots    &   \vdots  \cr
{\bf{0}} &   {\bf{0}} & {\bf{0}}        & {{{\bf{0}}}}    & {\bf{0}}     &\cdots     & {\bf{I}}   & {\bf{0}}  \cr
{\bf{0}} &   {\bf{0}} & {\bf{0}}        & {{{\bf{0}}}}    & {\bf{0}}     &\cdots     & {\bf{0}}   & {\bf{I}}  \cr
 \end{array} \right]}. \!\!
\end{aligned}
\end{equation}

The Jacobian matrix $\mathbf {J}_{\rm diff}$, which maps ${\boldsymbol \varpi_{\rm diff}}$ to $\boldsymbol\theta_{\rm ppm}$  can be written as
\begin{equation}        \label{EqJacSening}
\begin{aligned}
{\bf J_{{\rm diff}}}  = {{\partial {\boldsymbol \varpi_{\rm diff}}} \over {\partial {\boldsymbol \theta_{{\rm ppm}}}}}
= {\left[ \begin{array}{{ccccc}}
   {\bf{F}} & {\bf{0}}     & {\bf{0}}                   \\
   {\bf{0}}   & {{{\bf{L}}_{\rm N}}}     & {\bf{0}}      \\
   {\bf{0}}   & {\bf{0}}      & {\bf{I}}
 \end{array} \right]},
\end{aligned}
\end{equation}
where ${\bf F} = {\Bigg[ \begin{array}{{ccccc}}
   {\bf{0}} & {\bf{H}}     & \cdots     & {\bf{0}}         & {\bf{H}}     \\
   {\bf{E}} & {\bf{E}}     & \cdots     & {\bf{E}}         & {\bf{E}}     \\
 \end{array} \Bigg]^{\rm T}}.$

It can be observed that both ${\bf P_{{\rm diff}}}$ and ${\bf J_{{\rm diff}}}$ are non-singular matrices, which ensures that the FIM \eqref{EqEFIMDiff1} is also non-singular.

The proof of Proposition \ref{PropoEFIMDiff2} is thus complete.

\vspace{-5pt}

\bibliographystyle{IEEEtran}
\bibliography{Reference}

\end{document}